\documentclass[%
reprint,
superscriptaddress,
nofootinbib,
amsmath,amssymb,
aps,longbibliography
]{revtex4-1}

\newif\ifarxiv % Comment out the next line for journal-style version
\arxivtrue

\usepackage{amsthm,bm,xparse,xcolor,tikz}
\usepgfmodule{decorations}
\usetikzlibrary{decorations,decorations.pathmorphing,patterns}
\usepackage[T1]{fontenc}

\AtBeginDocument{%
    \newwrite\bibnotes
    \def\bibnotesext{Notes.bib}
    \immediate\openout\bibnotes=\jobname\bibnotesext
    \immediate\write\bibnotes{@CONTROL{REVTEX41Control}}
    \immediate\write\bibnotes{@CONTROL{%
    apsrev41Control,author="08",editor="1",pages="1",title="0",year="0"}}
     \if@filesw
     \immediate\write\@auxout{\string\citation{apsrev41Control}}%
    \fi
  }%

\usepackage[nodayofweek]{datetime}  
\usepackage{hyperref}
\usepackage{xcolor}
\usepackage{soul}
\usepackage[capitalise]{cleveref}
\usepackage{enumerate}   
\definecolor{mylinkcolor}{rgb}{0,0,0.8} % set link color here as red,green,blue.
\hypersetup{unicode=true, %
  bookmarksnumbered=false,bookmarksopen=false,bookmarksopenlevel=1, %
  breaklinks=true,pdfborder={0 0 0},colorlinks=true}%
\hypersetup{%
  anchorcolor=mylinkcolor,citecolor=mylinkcolor, %
  filecolor=mylinkcolor,linkcolor=mylinkcolor, %
  menucolor=mylinkcolor,runcolor=mylinkcolor, %
  urlcolor=mylinkcolor}%

\newtheorem{theorem}{Theorem}
\newtheorem{lemma}{Lemma}
\newtheorem{corollary}{Corollary}

\theoremstyle{definition}
\newtheorem{proposition}{Proposition} 
\newtheorem{definition}{Definition}

\newcommand{\ket}[1]{| #1 \rangle}
\newcommand{\bra}[1]{\langle #1 |}
\newcommand{\braket}[2]{\langle #1|#2\rangle}

\newcommand{\id}{\mathbb{I}}
\newcommand{\tr}{{\mathrm tr}}
\newcommand{\ot}{\otimes}
\newcommand{\proj}[1]{\ket{#1}\!\bra{#1}}

\usepackage[most]{tcolorbox}

\newcommand{\dd}{\mathrm{d}} % differential
\newcommand{\error}{\mathrm{err}}

\usepackage[mathscr]{eucal}
\usepackage{circuitikz}
\usepackage{float}

\newtheorem{notation}{Notation}

\usepackage{mdframed}
\def\M{\ensuremath\mathcal}
\def\B{\ensuremath\mathbf}

\usepackage[normalem]{ulem}

\usepackage{subcaption}

\begin{document}

%\title{Entropy Bounds in Two-Way Quantum Cryptographic Protocols via Multiple Hypothesis Testing}

\title{  Entropy Bounds via Hypothesis Testing and Its Applications to Two-Way Key Distillation in Quantum Cryptography}

%\title{ Entropy Bounds on Hypothesis Testing and Two-Way Quantum Cryptography }

\author{Rutvij Bhavsar}
\email{rutvij.bhavsar@kcl.ac.uk}

\affiliation{Department of Mathematics, King’s College London,
Strand, London WC2R 2LS, United Kingdom}

\author{Junguk Moon}
\affiliation{School of Electrical Engineering, Korea Advanced Institute of Science and Technology (KAIST), 291 Daehak-ro, Yuseong-gu, Daejeon 34141, Republic of Korea}

\author{Joonwoo Bae}
\email{ joonwoo.bae@kaist.ac.kr }
\affiliation{School of Electrical Engineering, Korea Advanced Institute of Science and Technology (KAIST), 291 Daehak-ro, Yuseong-gu, Daejeon 34141, Republic of Korea}

\date{\today}

\begin{abstract}
 %Quantum key distribution (QKD) is the art to achieve information-theoretic security, without computational assumptions, by distributing quantum states. Key distillation protocols for the shared correlations resulting from quantum states identify the threshold at which QKD tolerates. 
Quantum key distribution (QKD) achieves information-theoretic security, without relying on computational assumptions, by distributing quantum states.  To establish secret bits, two honest parties exploit key distillation protocols over measurement outcomes resulting after the the distribution of quantum states.
%Two-way classical post-processing has been proposed to extend the secure operating regime of QKD protocols, yet the fundamental limits of such techniques remain unclear. 
In this work, we establish a rigorous connection between the key rate achievable by applying two-way key distillation, such as advantage distillation, and quantum asymptotic hypothesis testing, via an integral representation of the relative entropy. This connection improves key rates at small to intermediate blocklengths relative to existing fidelity-based bounds and enables the computation of entropy bounds for intermediate to large blocklengths. Moreover, this connection allows one to close the gap between known sufficient and conjectured necessary conditions for key generation in the asymptotic regime, while the precise finite blocklegth conditions remain open. More broadly, our work shows how advances in quantum multiple hypothesis testing can directly sharpen the security analyses of QKD.
\end{abstract}

\maketitle

\setcounter{footnote}{0}

\ifarxiv\section{Introduction}\else \noindent{\textit{Introduction.}--}\fi

Quantum key distribution (QKD) enables two distant parties to achieve information-theoretic security, without computational assumptions, by exploiting quantum resources. A practical implementation of QKD may contain noise and device imperfections, which can be exploited by an eavesdropper, and, consequently, suppress the achievable key rate. Two-way key distillation protocols, such as advantage distillation (AD) \cite{maurer1993secret}, together with one-way information reconciliation, may allow two honest parties to attain secret bits even in the presence of high rates of channel noise. 

In fact, AD improves the threshold of critical noise that QKD protocols can tolerate in the distillation of secret bits \cite{bae2007key}. AD has also recently been exploited in device-independent (DI) protocols \cite{tan2020advantage,stasiuk2022quantum,hahn2022fidelity}, where not only channels but also states and measurements are untrusted, while the sole assumption is the working principle of quantum theory. The improvement by AD thus holds throughout for standard and device-dependent QKD protocols.

Despite these advances, the fundamental limits of AD-based QKD remain poorly understood. A central question is determining when key generation is possible given a particular resource in AD. Prior work \cite{stasiuk2022quantum} identified a sufficient condition for key generation based on the quantum Chernoff bound \cite{audenaert2007discriminating,nussbaum2009chernoff} and conjectured a matching necessary condition. In the finite‑blocklength regime, deriving rigorous and computationally efficient lower bounds on the key‑rate bounds remains challenging. 

In this work, we show that  secret key rates with AD followed by one-way communication in QKD protocols can be naturally framed as a quantum hypothesis testing problem. Our approach builds upon a recent integral representation of the conditional entropy \cite{frenkel2023integral,jencova}, which allows us to derive both upper and lower bounds on the key rate. In the asymptotic limit, this connection closes the gap between previously known sufficient and conjectured necessary conditions; however, the problem remains open for any finite $n$. The conditions to generate a positive key rate are naturally expressed in terms of the (log free) Chernoff divergence \cite{nussbaum2009chernoff,audenaert2007discriminating}, the exponent governing the discrimination of two hypotheses. Moreover, building upon the results from quantum hypothesis testing, we obtain tighter bounds at finite blocklengths, improving practical key rate for the ADQKD protocol. Importantly, our method allows computation of von Neumann entropy bounds for blocklengths far beyond the reach of exact numerical methods: while direct evaluation becomes infeasible already for very small $n$ (even in the simplest scenario when the parties share a qubit-pair), our approach efficiently computes bounds for blocklengths as large as $n \approx 1000$. Finally, our work establishes a conceptual bridge between QKD and quantum hypothesis testing, enabling the direct translation of techniques from hypothesis testing into meaningful statements about the key rates of the ADQKD protocol.

\ifarxiv\section{Advantange distillation protocol}\else \noindent{\textit{Advantage distillation protocol.}--}\fi \label{sec:self-test}
 
Here we consider the protocol defined as in \cite{tan2020advantage}. In this protocol, in every round $i \in \{ 1 , 2 , \cdots , N \}$, Alice and Bob receive inputs $X_{i} = x$ and $Y_{i} =y$, respectively. The outputs of the devices are labeled $A_{i} = a$ (corresponding to POVMs $\{ M_{a|x} \}$) and $B_{i} = b$ (corresponding to POVMs $\{ N_{b|y} \}$). Alice and Bob use a subset of their data to estimate the input-output statistics (parameter estimation) and then perform classical post-processing (using two-way communication) to generate the final key (unless the protocol aborts). The protocol only generates keys from one pair of inputs: $X = 0$ and $Y = 0$. Following \cite{tan2020advantage}, one party then generates a random bit $T_{i}$ and both parties modify their raw bits $A_{i}$ , $B_{i}$ to $\tilde{A}_{i} = T_{i} \oplus A_{i}$ and $\tilde{B}_{i} = T_{i} \oplus B_{i}$. The statistics $p_{\tilde{A} \tilde{B}|00}$ takes the form: 
\begin{eqnarray*}
    p_{\tilde{A} \tilde{B}|X=0,Y=0}(ab) = \left( \left(1 - \epsilon \right)    \delta_{a\oplus b,0} +  \epsilon \delta_{a \oplus b , 1} \right)/2 , 
\end{eqnarray*}
where $\delta_{i,j}$ is the so called Kroncker delta function and $0 \leq \epsilon \leq  1/2$. The classical quantum (cq) state of the a representative single round of the protocol is then given by
\begin{equation*}
 \sum p_{\tilde{A} \tilde{B} | X = 0 , Y = 0}(ab) \proj{ab} \ot \rho_{E |ab}. 
\end{equation*}
where  $\rho_{ET| ab} $, given by
\begin{equation}\label{eqn: post-measurement_state}
\frac{1}{2} \sum_{i = 0 , 1}  \underset{AB}{\tr}(M_{a \oplus i|0} \ot  N_{b \oplus i|0} \ot \id_{E} \rho_{Q_{A} Q_{B} E}) \ot \proj{i}_{T}  
\end{equation}
is the Eavesdropper's single-round state when $\tilde{A} = a$ and $\tilde{B} = b $ in the rounds when $X = Y = 0$ , where $\rho_{Q_{A} Q_{B} E}$ is the initial tripartite state shared by Alice, Bob and Eve. 

In AD protocols then, the key-generating rounds are divided into blocks of $n$ rounds. For each block, Alice chooses a uniform bit $C$ and sends $\mathbf{M} = \tilde{\mathbf{A}} \oplus (C,\dots,C)$, where $\tilde{\mathbf{A}}$ is her string of outputs. Bob accepts the block if $\tilde{\mathbf{B}}\oplus \mathbf{M} = (C',\dots,C')$ and communicates $D=1$; otherwise he rejects the block ($D=0$)~\cite{tan2020advantage,bae2007key}.

It is shown in \cite{tan2020advantage} (see also \cite{bae2007key}) that a key can be derived in the protocol iff 
\begin{eqnarray}
   H(C|\B{E}\B{T} \B{M}; D = 1) - H(C|C';D = 1) > 0.
\end{eqnarray}
Importantly \cite{tan2020advantage} provides a sufficient condition for distilling one bit of key in terms of the fidelity. In particular, they show that it is possible to extract a key via the advantage distillation protocol if 
\begin{equation}
 F(\rho_{ET|00}, \rho_{ET|11})^2 > \beta_{\epsilon},
\end{equation}
where \(F(\tau, \sigma)^2 = \| \sqrt{\tau} \sqrt{\sigma}\|_1^2\) is the squared fidelity and $\beta_{\epsilon} := \frac{\epsilon}{1 - \epsilon} \leq 1$.

The possibility for a tighter sufficient condition was proposed \cite{hahn2022fidelity} in terms of the (non-logarithmic) Chernoff divergence, which was later established rigorously in~\cite{stasiuk2022quantum}. Their condition states that secret bits can be extracted, i.e.,
a key rate is positive, if
\begin{equation}
Q(\rho_{ET|00}, \rho_{ET|11}) >\beta_{\epsilon},
\end{equation}
where the (log-free) Chernoff divergence \(Q(\tau, \sigma)\) is defined by
\begin{equation}
Q(\tau, \sigma) := \inf_{s \in [0,1]} \mathrm{Tr}\left( \tau^s \sigma^{1-s} \right).
\end{equation}
Since \(Q(\tau, \sigma) \geq F(\tau, \sigma)^2\) for any pair of quantum states~\cite{iten2020relations}, the Chernoff bound immediately yields a better sufficient condition than the fidelity bound; however, the fidelity may still be easier to compute in practice for certain states.

\ifarxiv\section{Entropy bound using the error probability}\else \noindent{\textit{Entropy bounds using the error probabilities}--}\fi \label{sec: measSetup}

Let us refer to the integral representation of the relative entropy \( D(\tau \,\|\, \sigma) \), which is to be used to derive the main result, given by
\begin{equation}\label{eqn: integral_representation}
    D(\tau \,\|\, \sigma) = \tr(\tau - \sigma) + \int_{-\infty}^{\infty} \frac{\mathrm{d}t}{|t|(t-1)^2} \, \tr^{-} A(t),
\end{equation}
where \( A(t) = (1 - t) \tau + t \sigma \), and \( \tr^{\pm}(\cdot) \) denotes the trace of the positive (respectively negative) part of a Hermitian operator. This representation was originally derived in~\cite{frenkel2023integral}, and subsequently refined into the form used in this work in ~\cite{jencova}. Furthermore, integral representations of entropy have already proven useful in deriving lower bounds on key rates and randomness generation rates in quantum cryptographic protocols (see~\cite{BFF2022, hahn2022fidelity}). In particular, the recent works~\cite{kossmann2024bounding, kossmann2024optimising} employ~\eqref{eqn: integral_representation} to obtain lower bounds on global and local randomness rates for standard DI randomness generation and QKD protocols. 

The goal is to establish a direct connection between the von Neumann entropy of classical–quantum (cq) states and a corresponding quantum state discrimination task. We exploit this link in the context of advantage distillation protocols, thereby bridging the QKD literature with the rich framework of asymptotic quantum hypothesis testing. This perspective not only provides lower bounds on achievable key rates but also enables upper bounds by drawing on established results from hypothesis testing. 

Concretely, we focus on the von Neumann entropy of cq-states of the form 
\(\rho = \sum_{c} p(c) \proj{c} \otimes \omega_c\), 
where \(\{ \omega_c \}_{c}\) are quantum states. The conditional entropy \(H(C|E)\) can be reformulated in terms of optimal discrimination between the states \(\{ \omega_c \}\), a quantity closely related to the min-entropy of the ensemble (see~\cite{KRS}; see also~\cite{bae2015quantum} for a comprehensive review of quantum state discrimination in quantum information theory).

Formally, a state discrimination problem is defined by an ensemble \(\mathcal{E} = \{ \pi_i \omega_i \}_{i=1}^n\), where each state \(\omega_i\) occurs with prior probability \(\pi_i\). The goal is to construct a measurement that maximizes the probability of correctly identifying the state. The minimal error probability of distinguishing an ensemble $\M{E}$ in quantum theory is given by
\begin{equation}
    p_{\error}(\mathcal{E}) := 1 - \sup \left(   \sum_i \pi_i \, \mathrm{Tr} \left[ \Lambda_i \omega_i \right] \right),
\end{equation}
where the supremum is taken over all POVMs \(\{ \Lambda_i \}_{i =1}^{n}\).

The central tool that allows us to connect key rates with quantum hypothesis testing is the following integral representation of the conditional entropy:

\begin{proposition}\label{prop: integral_representation_main_text}
Let \(\rho\) be a classical-quantum state of the form \(\rho = \sum_{c \in \{0,1\}} p_{c} \proj{c} \otimes \omega_c\). Define the ensemble \(\mathcal{E}(s) = \{ s \omega_0, (1-s) \omega_1 \}\) and let $p_{c} = \frac{1}{2}$. Then, the conditional entropy admits the representation
\begin{equation}\label{eqn: intergral_main_text}
H(C|E)_\rho = \int_0^1 \frac{p_{\error}(\mathcal{E}(s)) 
+ p_{\error}(\mathcal{E}(1-s))}{2s\,\ln 2}\, ds .
\end{equation}
\end{proposition}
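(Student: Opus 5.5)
The plan is to reduce $H(C|E)_\rho$ to relative entropies and then apply the integral representation \eqref{eqn: integral_representation} term by term. With $p_c=\tfrac12$ and $\bar\omega:=\tfrac12(\omega_0+\omega_1)$, the chain rule gives (in bits)
\begin{equation*}
H(C|E)_\rho = 1-\tfrac12\bigl[D(\omega_0\|\bar\omega)+D(\omega_1\|\bar\omega)\bigr]\big/\ln 2 ,
\end{equation*}
where $D$ is in nats, as in \eqref{eqn: integral_representation}. This follows from $H(C|E)=H(C)+S(\rho_{CE})-S(\rho_C)-S(\rho_E)$ together with $S(\bar\omega)-\tfrac12\sum_c S(\omega_c)=\tfrac12\sum_c D(\omega_c\|\bar\omega)$.

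Next I express the error probability through positive parts. Write $X_s:=s\omega_0-(1-s)\omega_1$. Helstrom's theorem gives $p_{\error}(\mathcal E(s))=\tfrac12(1-\|X_s\|_1)$. Since $\tr^+X_s-\tr^-X_s=\tr X_s=2s-1$, this equals
\begin{equation*}
p_{\error}(\mathcal E(s))=s-\tr^+X_s=(1-s)-\tr^-X_s .
\end{equation*}
As a check, for $\omega_0=\omega_1$ and $s>\tfrac12$ this gives $1-s$.

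I then apply \eqref{eqn: integral_representation} with $\tau=\omega_c$ and $\sigma=\bar\omega$; the term $\tr(\tau-\sigma)$ vanishes. For $c=0$ one has
\begin{equation*}
A(t)=(1-t)\omega_0+t\bar\omega=(1-\tfrac t2)\omega_0+\tfrac t2\omega_1 .
\end{equation*}
Up to a scalar factor $\lambda(t)$, this is $\pm X_{s(t)}$ with $s(t)=(1-\tfrac t2)/\bigl((1-\tfrac t2)-\tfrac t2\bigr)=(2-t)/(2-2t)$. Consequently $\tr^-A(t)=|\lambda(t)|\,\tr^{\mp}X_{s(t)}$, and by the previous step this is an affine expression in $p_{\error}(\mathcal E(s(t)))$. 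The case $c=1$ is the same with $\omega_0\leftrightarrow\omega_1$, which sends $s\mapsto 1-s$. This is where the symmetric combination $p_{\error}(\mathcal E(s))+p_{\error}(\mathcal E(1-s))$ comes from.

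The main obstacle is the change of variables. I would split the real line into $t<0$, $0<t<1$, $1<t<2$ and $t>2$, corresponding to the sign changes of the two coefficients of $A(t)$. On each piece I substitute $t\mapsto s$ and check that the image intervals tile $(0,1)$ (or $(0,1)$ twice, folded by $s\mapsto1-s$). I then track the Jacobian together with $|\lambda(t)|/(|t|(t-1)^2)$ and aim to show that the total weight collapses to $\tfrac{1}{s}$ against $p_{\error}$.

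The affine remainders ($s$ or $1-s$ times the weight) do not cancel individually, since they would give divergent $\int ds/s$ pieces. They must instead combine, across the ranges, into exactly the constant $1$ in $H=1-\dots$, with the divergences at $s\to0$ cancelling between the $c=0$ and $c=1$ contributions. To avoid delicate improper-integral bookkeeping, I would first verify the final identity in the commuting (classical) case, where $p_{\error}(\mathcal E(s))=\sum_x\min\{s\,p_0(x),(1-s)p_1(x)\}$ and the integral can be computed directly. For the noncommuting case I would then either carry out the same bookkeeping with $\tr^\pm$, or differentiate in a parameter to sidestep the convergence issues. As a sanity check of normalization, for $\omega_0=\omega_1$ the right-hand side is $\int_0^1 \frac{\min(s,1-s)+\min(s,1-s)}{2s\ln2}\,ds=\frac{1}{\ln 2}\bigl(\tfrac12+\ln2-\tfrac12\bigr)=1$, as required.
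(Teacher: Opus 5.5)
Your proof is correct but takes a different route from the paper's. It is incomplete only in that you left the change of variables undone, and that step does close.

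\textbf{The paper's route.} The paper applies the Jen\v{c}ov\'a-refined representation (Lemma~\ref{lemm: general_integral}) directly to $D(\rho_{CE}\|\id_C\otimes\rho_E)$, taking $\lambda=1$. This is allowed because $\rho_{CE}\preceq\id_C\otimes\rho_E$. The integration variable there is already the prior $s$, and the block-diagonal structure splits $\tr^-$ into the two trace norms immediately. No substitution is needed, and arbitrary priors $p_C$ are handled in the same step.

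\textbf{Your route.} You use $H=1-\chi$ together with Frenkel's formula for normalized states only, which forces a M\"obius substitution. Your ingredients are right: the Holevo decomposition, $p_{\error}=s-\tr^+X_s=(1-s)-\tr^-X_s$, and $s(t)=(2-t)/(2-2t)$.

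\textbf{Closing the step you flagged.}
\begin{itemize}
\item For $0<t<2$ both coefficients of $A(t)$ are positive, so $A(t)\succeq0$ and these ranges contribute nothing.
\item With $t=2(1-s)/(1-2s)$, the range $t<0$ maps onto $s\in(1/2,1)$ and $t>2$ maps onto $s\in(0,1/2)$.
\item On these pieces $A(t)=(1-t)X_s$ and $A(t)=-(t-1)X_s$ respectively.
\item The full weight $|\lambda(t)|\,|dt/ds|/\bigl(|t|(t-1)^2\bigr)$ equals $1/(1-s)$ on both pieces.
\end{itemize}
Using $\tr^-X_s=(1-s)-p_{\error}$ on the first piece and $\tr^+X_s=s-p_{\error}$ on the second gives
\[
D(\omega_0\|\bar\omega)=\int_0^1\frac{\min\{s,1-s\}-p_{\error}(\mathcal{E}(s))}{1-s}\,ds .
\]
The same formula holds for $\omega_1$ with $p_{\error}(\mathcal{E}(1-s))$ in place of $p_{\error}(\mathcal{E}(s))$.

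\textbf{No divergences.} Since $0\le p_{\error}(\mathcal{E}(s))\le\min\{s,1-s\}$, every piece converges on its own. There are no divergent $\int ds/s$ terms to cancel between the two states. Your fallbacks (checking the classical case first, or differentiating in a parameter) are therefore unnecessary. Each state contributes the constant $\int_0^1\min\{s,1-s\}/(1-s)\,ds=\ln 2$, and together these cancel the leading $1$. A final reflection $s\mapsto1-s$ of the symmetric numerator turns the weight $1/(1-s)$ into $1/s$.

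\textbf{What each approach buys.} Yours is self-contained: it needs no unnormalized second argument and none of the boundary terms of Lemma~\ref{lemm: general_integral}. It also yields the single-state identity for $D(\omega_0\|\bar\omega)$ above as a byproduct. The paper's route is more direct and covers general priors at once; your route would need a prior-dependent substitution for that.
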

A more general expression of \eqref{eqn: intergral_main_text} for arbitrary priors $p_{c}$ is derived in Appendix~\ref{app: integral representation derivation}. 

An immediate application of the integral representation is that it naturally yields a variety of lower bounds on the conditional von Neumann entropy in terms of quantities that arise in quantum state discrimination. For instance, consider the well-known textbook lower bound on $p_{\error}(\M{E})$ in terms of Fidelity of the states:
\begin{equation}\label{eqn: error_probability_fidelity}
   2 p_{\error}(\mathcal{E}(s)) \geq \left( 1 - \sqrt{1 - 4s(1-s) F(\omega_0, \omega_1)^2} \right) .
\end{equation}
Substituting this into the integral expression for \(H(C|E)_\rho\) immediately yields the bound
\begin{equation}\label{eqn: Fidelity_lower_bound}
    H(C|E)_\rho \geq 1 - h_2\left( \frac{1}{2} + \frac{1}{2} F(\omega_0, \omega_1) \right),
\end{equation}
where \( h_{2}(.) \) denotes the binary entropy function. This reproduces the well-known result from~\cite{roga2010universal}, which was also employed by Tan \textit{et al.}~\cite{tan2020advantage} in DIQKD applying AD.

Similarly, one may employ a bound on the error probability of the ensemble \(\mathcal{E}(s)\) in terms of the error probability at \(s=1/2\):
\begin{equation}\label{eqn: error_probability_symmetry}
    p_{\error}(\mathcal{E}(s)) \geq 2 \min\{s, 1-s\} \, p_{\error}(\mathcal{E}(1/2)). 
\end{equation}
This gives the bound
\begin{eqnarray}\label{eqn: min_entropy_bound}
 H(C|E)_{\rho} \geq 2 (1 - 2^{-H_{\min}(C|E)_{\rho}}) 
\end{eqnarray}
which corresponds to the result of~\cite{briet2009properties}, and was subsequently applied by Stasiuk \textit{et al.}~\cite{stasiuk2022quantum} in their work relating advantage distillation to the quantum Chernoff bound.

The derivation of \eqref{eqn: min_entropy_bound} is given in Appendix \ref{app: upper and lower bounds}. It is worth noting that these known lower bounds emerge naturally from Proposition~\ref{prop: integral_representation_main_text} by applying standard bounds on hypothesis testing, as presented in introductory texts on quantum information.

Furthermore, the integral expression in Proposition~\ref{prop: integral_representation_main_text} allows us to obtain tighter bounds on the conditional entropy than the individual error bounds~\eqref{eqn: Fidelity_lower_bound} and~\eqref{eqn: min_entropy_bound}. In particular, by combining the previously established error probability bounds~\eqref{eqn: error_probability_fidelity} and~\eqref{eqn: error_probability_symmetry} and taking their pointwise maximum over $s\in[0,1]$, we obtain a bound on $p_{\error}(\mathcal{E}(s))$ that is at least as strong as either individual bound. Substituting this improved bound into the integral then yields a tighter lower bound on $H(C|E)_\rho$ whenever both the fidelity and the error probability at $s=1/2$ are known. The resulting improvement is illustrated in Fig.~\ref{fig:tighter_lower_bound_single_shot}.

It is important to emphasize that our method not only provides lower bounds on the conditional entropy, but also allows us to derive upper bounds. In particular, we can exploit the inequality by Audenaert et. al.~\cite{audenaert2007discriminating}:
\begin{equation}\label{eqn: operator_ineqality}
    2 \tr(A^{\alpha} B^{1-\alpha}) \geq   \tr(A + B - |A - B|),
\end{equation}
which holds for all positive  \( A, B \succeq 0\) and for all \( \alpha \in [0,1] \). As shown in \cite{audenaert2007discriminating}, Eq. \eqref{eqn: operator_ineqality} can be used to get upper bounds on the error probability. 

Substituting the resulting upper bound into the integral representation of the entropy leads to the following family of upper bounds for $\alpha \in [0, 1]$:
\begin{equation}
    H(C|E)_\rho \leq \frac{\pi}{2 \ln 2} \left( \frac{Q_{\alpha}( \omega_0 \|  \omega_1)}{\sin(\pi \alpha)} \right),
\end{equation}
where \( Q_{\alpha}(\tau \| \sigma) = \tr(\tau^{\alpha} \sigma^{1 - \alpha}) \) is the non-logarithmic Petz-Rényi overlap and $\alpha$. We also formally derive this bound in Appendix \ref{app: upper and lower bounds}.

\begin{figure}[t]
\captionsetup{font=footnotesize}
\centering
\includegraphics[width=0.9\columnwidth]{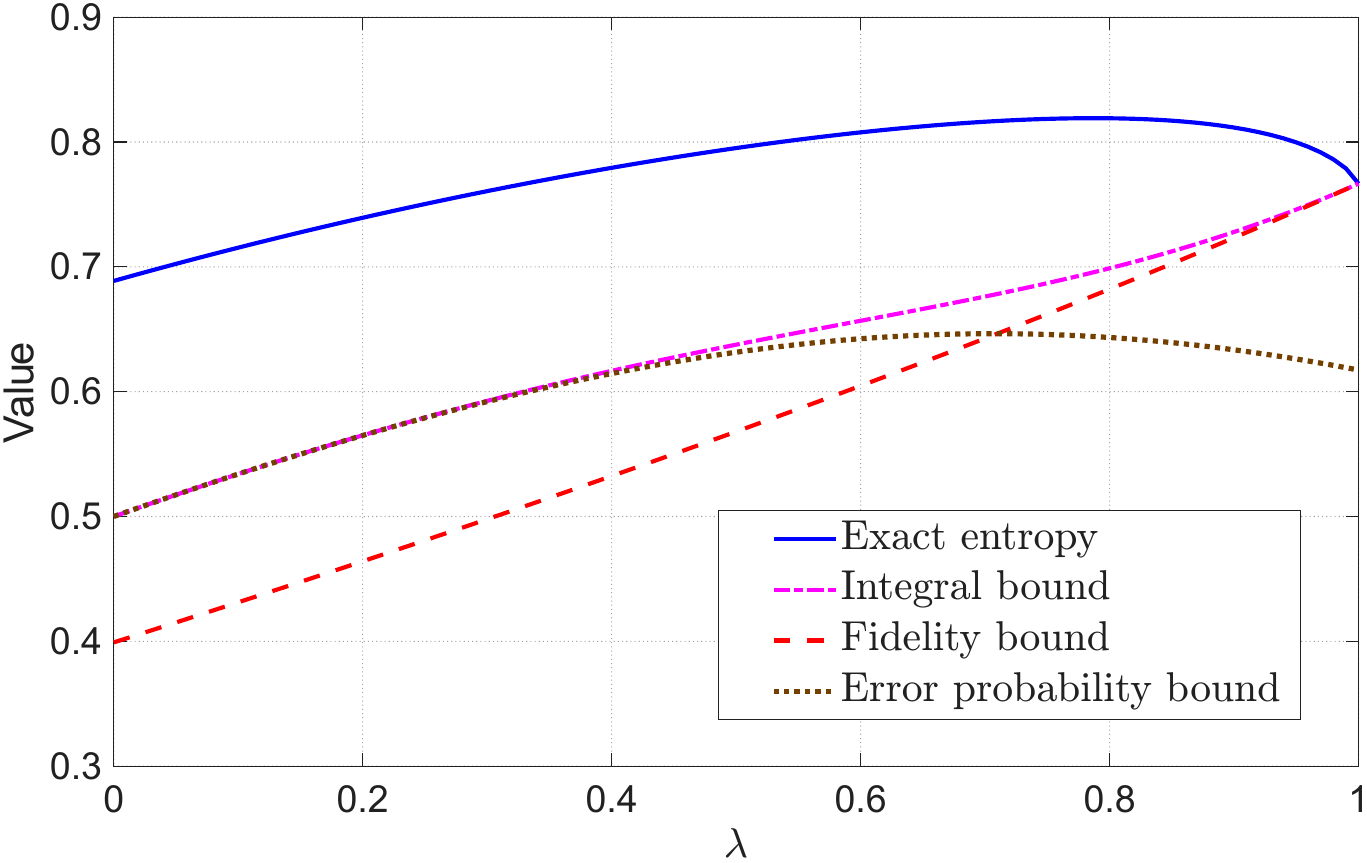} 
\caption{
Different lower bounds on \( H(C|E)_{\rho} \) as a function of the parameter \( \lambda \), where \( \rho = \frac{1}{2} \sum_{i} \proj{i} \otimes \tau_i \). The state \( \tau_0 \) is fixed as the qubit with Bloch vector \( (0, 0, 1) \), while \( \tau_1 \) has Bloch vector \( \left( \lambda/\sqrt{2}, 0, \lambda/\sqrt{2 } \right) \) with \( \lambda \in [0, 1] \).
The integral bound is obtained using \eqref{eqn: intergral_main_text} and subsequently bounding the error probabilities. The Fidelity bound is computed using \eqref{eqn: error_probability_fidelity} and the Error probability bound corresponds to the one obtained via \eqref{eqn: error_probability_symmetry}.}
\label{fig:tighter_lower_bound_single_shot}
\end{figure}

\ifarxiv\section{Application of the integral representation to ADQKD protocol}\else \noindent{\textit{Application of the integral representation to ADQKD protocol.}--}\fi \label{sec:introPro}

We now consider bounding the key rate of QKD protocols that apply AD followed by one-way key distillation. As established in the previous section, this reduces to bounding
the conditional entropy $H(C|E)$, which can be expressed in terms of error probabilities
in a binary quantum state discrimination problem.

In the setting of AD, the relevant states $\omega_0^{(n)}$ and
$\omega_1^{(n)}$ take the form~\cite{stasiuk2022quantum} (see also
Lemma~\ref{corr: neccessary_and_sufficiant_condition} in
Appendix~\ref{app: integral representation QKD})
\[
\omega_i^{(n)} = (1 - \delta_n)\, \rho_{ET|ii}^{\otimes n}
                 + \delta_n\, \rho_{ET|i,\, i\oplus 1}^{\otimes n}.
\]
Our goal is to bound the ensemble error probability
$p_{\error}(\mathcal{E}^{(n)}(s))$, where
$\mathcal{E}^{(n)}(s) = \{ s\,\omega_0^{(n)},\,(1-s)\,\omega_1^{(n)} \}$, and thereby
estimate the conditional entropy via Eq.~\eqref{eqn: intergral_main_text}. 

To compute the entropy bounds, we first lower bound  $p_{\error}(\mathcal{E}^{(n)}(s))$ by
\[
(1 - \delta_n)\, p_{\error}(\mathcal{E}_0^{(n)}(s))
  + \delta_n\, p_{\error}(\mathcal{E}_1^{(n)}(s)),
\]
where
$\mathcal{E}_0^{(n)}(s)=\{ s\,\rho_{ET|00}^{\otimes n}, (1-s)\,\rho_{ET|11}^{\otimes n}\}$
and
$\mathcal{E}_1^{(n)}(s)=\{ s\,\rho_{ET|01}^{\otimes n}, (1-s)\,\rho_{ET|10}^{\otimes n}\}$.

Each sub-ensemble can then be reduced to a classical hypothesis testing problem.
Following~\cite{nussbaum2009chernoff,audenaert2012quantum}, the error probability for
distinguishing $\{s\,\tau^{\otimes n},(1-s)\,\sigma^{\otimes n}\}$ is lower bounded by
that for the classical ensemble $\{s\,P^{\otimes n},(1-s)\,R^{\otimes n}\}$, where
\[
P(i,j)=\lambda_{\tau_i}\,|\langle i|j\rangle|^2,
\qquad
R(i,j)=\lambda_{\sigma_j}\,|\langle i|j\rangle|^2,
\]
and $\tau=\sum_i\lambda_{\tau_i}\proj{i}$, $\sigma=\sum_j\lambda_{\sigma_j}\proj{j}$ are
their spectral decompositions.

This reduction is powerful because the classical error can be computed efficiently via
the total–variation distance (TVD), which remains tractable even for large $n$. In
contrast, exact entropy computation would require diagonalizing matrices of dimension
$d^{\,n}$ (where $d$ is the Hilbert-space dimension of $\rho_{ET|ij}$), an exponentially
scaling task. The TVD, however, can be evaluated using fast–Fourier–transform
techniques, allowing us to treat blocklengths as large as $n \sim 10^{3}$ when
$\dim(\rho_{ET|ij}) = 4$ (i.e., when Alice and Bob share a qubit pair $\rho_{Q_AQ_B}$)
on a standard laptop, whereas exact trace–norm computations already become infeasible
near $n \approx 7$. We emphasize that this advantage becomes even more pronounced as
$d$ increases, yielding nontrivial bounds even at small or moderate blocklengths.

Moreover, by taking the pointwise maximum with the fidelity-based error probability bound~\eqref{eqn: error_probability_fidelity}, one automatically obtains an entropy bound that is never weaker than the fidelity-based entropy bound~\eqref{eqn: Fidelity_lower_bound}. This fidelity-based criterion is commonly used in practice, particularly for large blocklengths that cannot be handled exactly (see e.g.~\cite{tan2020advantage}). The resulting improvements are illustrated in Fig.~\ref{fig:entropy-bounds}, which shows that our method yields substantially tighter bounds.

\begin{figure}[ht]
    
        \centering
        \includegraphics[width=0.9\linewidth]{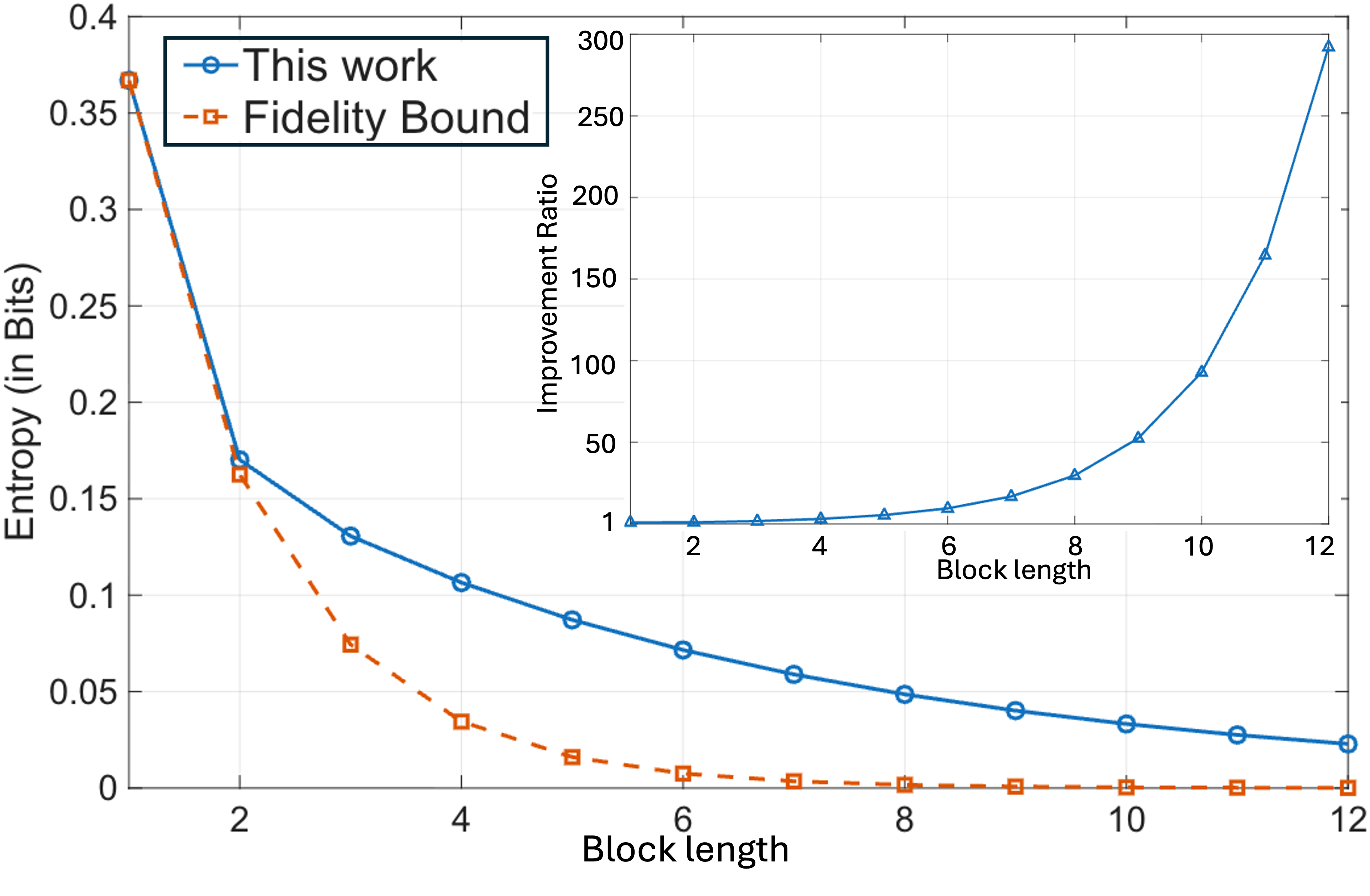}

    \caption{
Comparison of entropy lower bounds. Our method is benchmarked against the fidelity‑based bound obtained from Eq.~\eqref{eqn: Fidelity_lower_bound}. For this example, we take two two‑qubit states~$\omega_{0}$ and~$\omega_{1}$ with fidelity $F(\omega_0,\omega_1)= 0.684$.}
    \label{fig:entropy-bounds}
\end{figure}

We now turn to deriving upper and lower bounds on the key rate in the asymptotic limit. The known \textit{sufficient condition}\cite{tan2020advantage} based on fidelity can be recovered directly from our framework by applying standard fidelity bounds to the corresponding state discrimination task.  Importantly, we are able to extend the results of \cite{stasiuk2022quantum} to derive a lower bound on the key rate in the finite-block length regime in terms of the Chernoff bound:
\begin{eqnarray*}
 H(C|E)_{\rho} \geq  \frac{  (1 -\delta_{n})  Q_{00;11 }^{n} + \delta_{n} Q_{01;10}^n}{{2} {\pi^{-1}} \ln(2) (n + 1)^{d^2}}.
\end{eqnarray*}
where we use the shorthand notation $Q_{ij;kl} := Q(\rho_{ET|ij}, \rho_{ET|kl})$. This result is obtained by reducing the quantum state discrimination problem to a classical hypothesis testing problem, and then applying the \textit{method of types} and \textit{Sanov's theorem}~\cite{Cover&Thomas} to the classical case.

However, it is important to note that this bound scales poorly with $n$, as it is a \textit{naïve} lower bound based on the Chernoff divergence and leaves a scope of tightening the bound using more sophisticated techniques in asymptotic hypothesis testing \cite{audenaert2012quantum,bahadur1960deviations}. We leave such refinements to future work.

\vspace{0.5em}
We now turn to the \textit{upper bound} on the key rate. To show that the key rate is positive, it suffices to show that 
\begin{eqnarray}
\mathscr{R}_{n}  :=  H(C|E)_{\rho}/h_2(\delta_{n})    
\end{eqnarray}
In Appendix~\ref{app: asymptotic n_and_s_general}, we show that,
\begin{eqnarray}
    \lim_{n \rightarrow \infty} \mathscr{R}_{n} \leq 1
\end{eqnarray}
provided that \( Q(\rho_{ET|00}, \rho_{ET|11}) < \frac{\epsilon}{1 - \epsilon} \). This result implies that a \textit{positive key rate} cannot be achieved in the asymptotic limit using the \textit{Devetak–Winter rate expression}, in the sense that the ratio of the conditional entropy to the error-correction term is upper bounded by 1.

Since the converse direction provides a \textit{necessary condition}, this result \textit{closes the gap between the necessary and sufficient conditions} for achieving a positive asymptotic key rate. Note, however, that both the numerator and the denominator of this ratio vanish in the asymptotic limit, implying that the key rate itself becomes zero regardless. Nevertheless, it is important to note that the ratio $\mathscr{R}_{n}$ can approach $1$ from above—that is, it may still hold that $\mathscr{R}_{n} > 1$ for every finite $n$, in which case the key rate remains positive for all finite $n$.

In this regard, we show that, in the important special case where the parties share pure states and perform projective measurements, an even stronger result holds:
\begin{eqnarray}
    \lim_{n \rightarrow \infty} \mathscr{R}_{n} = 0,
\end{eqnarray}
whenever $ Q(\rho_{ET|00}, \rho_{ET|11}) < \beta_{\epsilon}$.  
The proof of this stronger statement is nontrivial and is presented in Appendix~\ref{app: asymptotic n_and_s_qubit}.  
This satisfactorily closes the gap for this special case. Under these assumptions, we also provide a method to bound  $Q(\rho_{ET|00}, \rho_{ET|11})$ using only the observed statistics, without 
requiring any additional characterisation of the devices.

\ifarxiv\section{Discussion}\else \noindent{\textit{Discussion}--}\fi \label{sec:disc}
Our work establishes a connection between hypothesis testing and the computation of key rates in QKD protocols that apply AD for secret key extraction, using the integral representation of the relative entropy. This connection allows us to translate known results from hypothesis testing into tools for analyzing the security of QKD that exploits AD for secret key distillation. This perspective has multiple implications. First, it enables us to compute tighter entropy bounds for finite block lengths in the advantage distillation protocol, thereby extending its practical utility. We close the gap between the necessary and sufficient conditions for secure key extraction in the asymptotic limit, thus giving the fundamental limits of standard advantage distillation.

Several open directions remain. The immediate challenge is to satisfactorily close the gap between the necessary and sufficient conditions in the large-but-finite blocklength regime. This could be achieved either by showing that $\lim_{n \downarrow \infty} \mathscr{R}_{n} \leq 1$, or by identifying a scenario in which this does not hold. Another direction is to derive tighter finite-blocklength entropy bounds in terms of the quantum Chernoff bound, as the bounds we present are primarily proof-of-principle and not directly practical. Combining these bounds with recent work on bounding Petz–Rényi divergences \cite{hahn2024bounds} may offer a path toward improved estimates for device-independent advantage distillation protocols. Finally, a natural extension of our work is to generalize the framework to non-binary input and output alphabets for Alice and Bob. 

\ifarxiv\acknowledgements\else\medskip\noindent{\it Acknowledgements}|\fi
The authors are grateful to Lewis Wooltorton and Seung-hyun Nam for insightful discussions. This work was supported by the Institute for Information \& Communication Technology Promotion (IITP) (RS-2025-02304540, RS-2025-25464876, RS-2025-25464616).  RB was additionally supported by the UK Integrated Quantum Networks Hub (EP/Z533208/1) and by Grant No. EP/S023607/1. 
\onecolumngrid
\appendix
\section{Preliminaries}

Let \( \mathcal{H}_X \) denote the Hilbert space associated with the quantum system \( X \). The set of density operators (i.e., positive semi-definite operators with unit trace) on \( \mathcal{H}_X \) is denoted by \( \mathcal{S}(\mathcal{H}_X) \).

A key ingredient in our analysis will be the \emph{Petz Rényi} and \emph{sandwiched Rényi} quantities and their associated divergences.

\begin{definition}[Petz Rényi Quantity]
Let \( \tau, \sigma \in \mathcal{S}(\mathcal{H}) \) and \( \alpha \in [0, \infty ) \). The Petz Rényi quantity is defined as
\[
Q_{\alpha}(\tau || \sigma) := \tr \left( \tau^\alpha \sigma^{1 - \alpha} \right).
\]
This is well-defined for \( \alpha < 1 \), or if \( \tau \ll \sigma \) when \( \alpha \geq 1 \).
\end{definition}

\begin{definition}[Sandwiched Rényi Quantity]
Let \( \tau, \sigma \in \mathcal{S}(\mathcal{H}) \) and \( \alpha \in [0, \infty) \). The sandwiched Rényi quantity is defined as
\[
\tilde{Q}_{\alpha}(\tau || \sigma) :=  \tr \left[ \left( \sigma^{\frac{1-\alpha}{2\alpha}} \tau \, \sigma^{\frac{1-\alpha}{2\alpha}} \right)^{\alpha} \right].
\]
This is well-defined for \( \alpha < 1 \), or if \( \tau \ll \sigma \) when \( \alpha \geq 1 \).
\end{definition}

These quantities coincide when \( \tau \) and \( \sigma \) commute. Both define divergences known as the \emph{Petz Rényi divergence} and the \emph{sandwiched Rényi divergence}, respectively.

\begin{definition}[Rényi Divergences]
For any \( \tau, \sigma \in \mathcal{S}(\mathcal{H}) \) and \( \alpha \in (0,1) \cup (1, \infty) \), define
\[
D_{\alpha}(\tau \| \sigma) := \frac{1}{1 - \alpha} \log Q_{\alpha}(\tau ||  \sigma), \quad
\tilde{D}_{\alpha}(\tau \| \sigma) := \frac{1}{1 - \alpha} \log \tilde{Q}_{\alpha}(\tau || \sigma).
\]
If \( Q_{\alpha} \) or \( \tilde{Q}_{\alpha} \) are not well-defined (e.g., \( \alpha > 1 \) and \( \tau \not\ll \sigma \)), the divergences are set to \( \infty \).
\end{definition}

The well-known \emph{quantum relative entropy} (Umegaki divergence) is recovered in the limit \( \alpha \to 1 \):
\[
D(\tau \| \sigma) :=
\begin{cases}
\tr\left[ \tau (\log \tau - \log \sigma) \right], & \text{if } \tau \ll \sigma, \\
\infty, & \text{otherwise}.
\end{cases}
\]
This quantity is the limiting case of both \( D_{\alpha} \) and \( \tilde{D}_{\alpha} \). Additionally, the limit \( \alpha \to \infty \) recovers the min-divergence
\[
D_{\infty}(\tau \| \sigma) := \inf \{ \lambda \in \mathbb{R} : \tau \leq 2^\lambda \sigma \}.
\]

The Petz and sandwiched Rényi divergences are monotonic and continuous in \( \alpha \). Furthermore, the log free divergences satisfy the following bounds (see \cite{iten2020relations}):
\[
Q_{\alpha}(\tau \| \sigma) \leq \tilde{Q}_{\alpha}(\tau \| \sigma) \leq Q_{\alpha}(\tau \| \sigma)^\alpha.
\]

Both quantities converge to 1 as \( \alpha \to 0 \) and \( \alpha \to 1 \), making the quantity
\[
\inf_{\alpha \in [0,1]} Q_{\alpha}(\tau \| \sigma)
\]
important in quantum state discrimination. It is called the \emph{quantum Chernoff bound}, and plays a key role in asymptotic binary hypothesis testing.

Another closely related quantity is the \emph{pretty-good fidelity}, defined as
\[
F_{\mathrm{pg}}(\tau, \sigma) := Q_{\frac{1}{2}}(\tau || \sigma) = \tr(\sqrt{\tau} \sqrt{\sigma}),
\]
which is related to the standard fidelity
\[
F(\tau, \sigma) := \tilde{Q}_{\frac{1}{2}}(\tau || \sigma) = \tr\left( \sqrt{\sqrt{\tau} \sigma \sqrt{\tau}} \right).
\]

\begin{notation}[Guessing Probability]
Let \( \mathcal{E} = \{ p_X(x), {\omega}_x \} \) be an ensemble. Define the guessing probability as
\[
p_{\mathrm{guess}}\ (X|E)_{\rho} := 2^{- H_{\min}(X|E)_\rho},
\]
where \( \rho = \sum_x p_X(x) \proj{x} \otimes {\omega}_x \) and
\[
H_{\min}(X|E)_\rho := - \inf_{\sigma_E \succeq 0} D_{\infty}(\rho_{XE} \| \id_X \otimes \sigma_E)
\]
is the conditional min-entropy. The error probability is then
\[
p_{\error}{(X|E)_{\rho}} := 1 - p_{\mathrm{guess}}{(X|E)_{\rho}}.
\]
\end{notation}
 \noindent\textbf{Note.} The notation used in this appendix differs slightly from that in the main text. In the main text, we write $p_{\error}(\M{E})$ to denote the optimal error probability of distinguishing an ensemble.  
For every ensemble $\M{E} = \{ p(x)   \omega_{x} \}_{x}$, we associate the state
\[
   \rho = \sum_{x} p(x) \proj{x} \otimes \omega_{x},
\]
and we emphasize that $p_{\error}(\M{E})$ and $p_{\error}(X|E)_{\rho}$ both denote the same quantity: the optimal error probability. In the appendix, we primarily use the cq state notation $p_{\error}(X|E)_{\rho}$, but occasionally switch to the ensemble notation $p_{\error}(\M{E})$ when more convenient.

When the ensemble contains only two states \( \mathcal{E} = \{ p, \tau; 1-p, \sigma \} \), the optimal error probability is given by the Holevo-Helstrom theorem \cite{Holevo,Helstrom}:
\[
p_{\error}{(C|E)_{\rho}} = \frac{1}{2} - \frac{1}{2} \| p \tau - (1-p) \sigma \|_1,
\]
where \( \| A \|_1 = \tr \left( \sqrt{A^\dagger A} \right) \) is the trace norm. This norm can be computed via the positive-negative decomposition \( A = A_+ - A_- \), where both \( A_+, A_- \geq 0 \), and
\[
\| A \|_1 = \tr(A_+) + \tr(A_-).
\]

For multiple copies of the states \( \mathcal{E} = \left\{ \frac{1}{2} \tau^{\otimes n}, \frac{1}{2} \sigma^{\otimes n} \right\} \), the optimal error probability decays exponentially in terms of the aforementioned quantum Chernoff bound:
\[
p_{\error} \sim \exp \left( - n \cdot \log  Q(\tau, \sigma) \right).
\]
where $Q(\tau , \sigma) := \inf_{\alpha \in [0 , 1]} Q_{\alpha}(\tau || \sigma)$.

\section{key rates in the advantage distillation protocol} 

As discussed in the main text, we can derive a postive key rate from the advantage distillation protocol if 
It is shown in \cite{tan2020advantage} (see also \cite{bae2007key}) that a key can be derived in the protocol iff 
\begin{eqnarray}
    H(C|\B{E}\B{T} \B{M}; D = 1) - H(C|C';D = 1) > 0.
\end{eqnarray}
Note however, that this condition can be expressed better as: 
\begin{theorem}[Tan et. al. \cite{tan2020advantage}, Bae et. al. \cite{bae2007key}]\label{thm: Tan_fidelity}
A key can be derived from the advantage distillation protocol of block length $n$ iff 
\begin{eqnarray}
    \frac{1}{h_{2}(\delta_{n})} H(C|\B{ETM})_{\rho} > 1, 
\end{eqnarray}
where $\delta_{n} = \frac{\epsilon^{n}}{(1 - \epsilon)^{n} + \epsilon^{n}}$ , $\rho_{C \B{ET},\B{M} = \B{m} } = \sum_{c \in \{ 0 , 1\} }{1/2} \proj{c} \ot \omega_{c}^{(\B{m})} $, $\rho = \sum_{\B{m}} p(\B{m}) \proj{\B{m}} \ot \rho_{C\B{ET},\B{M} = \B{m}}$ 
with 
\begin{eqnarray}
 \omega_{0}^{(\B{m})} &=& \frac{1}{2} (1 - \delta_{n}) \rho_{\B{ET}|\B{mm}} + \delta_{n} \rho_{\B{ET}| {\B{m}} \bar{\B{m}}}   \\ 
 \omega_{1}^{(\B{m})} &=&  \frac{1}{2} (1 - \delta_{n}) \rho_{\B{ET}|\B{\bar{m} \bar{m}}} + \delta_{n} \rho_{\B{ET}| {\B{\bar{m}} \B{m}}}.
\end{eqnarray}
and $p(\B{m})$ is the probability of message $\B{m}$ occurring. 
\end{theorem}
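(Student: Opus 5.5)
Since the iff criterion $H(C|\B{E}\B{T}\B{M}; D=1) - H(C|C';D=1) > 0$ from \cite{tan2020advantage,bae2007key} is taken as given, the plan is to rewrite both terms. The first becomes $H(C|\B{ETM})_\rho$ for the stated cq state $\rho$, and the second becomes $h_2(\delta_n)$. Dividing by $h_2(\delta_n)>0$ (for $0<\epsilon<1/2$; the case $\epsilon=0$ is trivial) then gives the ratio form.

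\textbf{Error-correction term.} Condition on $D=1$, i.e.\ on Bob accepting the block. For each round $i$, the pair $(\tilde A_i,\tilde B_i)$ agrees with probability $1-\epsilon$ and disagrees with probability $\epsilon$, independently across rounds. Acceptance requires that $\tilde{\B{A}}\oplus\tilde{\B{B}}$ be constant, namely all zeros (probability $(1-\epsilon)^n$) or all ones (probability $\epsilon^n$). Hence, given $D=1$, $C'\neq C$ holds with probability $\delta_n=\epsilon^n/((1-\epsilon)^n+\epsilon^n)$. Since $C$ is uniform and independent of the noise pattern, $C\oplus C'$ is independent of $C$, and so $H(C|C';D=1)=h_2(\delta_n)$.

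\textbf{Eve's term.} Eve holds $\B{E}\B{T}$ and the public message $\B{M}=\tilde{\B{A}}\oplus(C,\dots,C)$. I would fix $\B{M}=\B{m}$ and use Bayes' rule. $C=0$ means $\tilde{\B A}=\B m$, and $C=1$ means $\tilde{\B A}=\bar{\B m}$. The uniform $T_i$ makes each $\tilde A_i$ uniform, so $\tilde{\B A}$ is uniform and independent of $C$; hence $p(C=c|\B m)=1/2$. Conditioned on $\tilde{\B A}$ and acceptance, $\tilde{\B B}$ equals $\tilde{\B A}$ with probability $1-\delta_n$ and equals $\bar{\tilde{\B A}}$ with probability $\delta_n$. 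Eve's conditional state is then $\rho_{\B{ET}|\B{ab}}=\bigotimes_i\rho_{ET|a_ib_i}$, which follows because the protocol is i.i.d.\ across rounds, using the single-round post-measurement states \eqref{eqn: post-measurement_state}. This yields
\[
\omega_0^{(\B m)}=(1-\delta_n)\rho_{\B{ET}|\B{mm}}+\delta_n\rho_{\B{ET}|\B m\bar{\B m}},
\]
and similarly for $\omega_1^{(\B m)}$. The prefactor $\tfrac12$ in the statement appears to be a typographical artifact, since normalization forces the weights $(1-\delta_n,\delta_n)$. Finally, because $\B M$ is classical, $H(C|\B{ETM})_\rho=\sum_{\B m}p(\B m)H(C|\B{ET})_{\rho_{C\B{ET},\B M=\B m}}$.

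\textbf{Main obstacle.} The delicate step is justifying the tensor-product form of Eve's conditional states, together with the claim that the $D=1$ conditioning only reweights the agree/disagree patterns and does not otherwise distort Eve's state. This needs the i.i.d.\ (collective-attack) assumption, and it needs care because the accept event correlates rounds through the requirement that the pattern be constant. I would handle this by writing the joint cq state over the $n$ rounds explicitly, projecting onto the accepted patterns, and renormalizing. Once that is done, the remaining steps are bookkeeping, and the criterion becomes $H(C|\B{ETM})_\rho/h_2(\delta_n)>1$.
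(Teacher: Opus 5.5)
Your proposal is correct and takes the same route as the paper. The paper gives no separate proof: it quotes the criterion $H(C|\mathbf{E}\mathbf{T}\mathbf{M};D=1)-H(C|C';D=1)>0$ from Tan et al.\ and Bae et al.\ and rewrites it, identifying $H(C|C';D=1)=h_2(\delta_n)$ and Eve's conditional states as the $(1-\delta_n,\delta_n)$ mixtures of $\rho_{\mathbf{ET}|\mathbf{ab}}$ under the i.i.d.\ assumption, and your Bayes-rule bookkeeping supplies the details it leaves out. You are also right that the prefactor $1/2$ in $\omega_c^{(\mathbf m)}$ is a typo, since the paper itself drops it when it later uses these states in the integral-representation condition.
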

Note that the key rate for the protocol, excluing the parameter estimation rounds, is then given by \cite{Renner} 
\begin{eqnarray}
     \left( H(C|\B{ETM})_{\rho} - h_2(\delta_{n}) \right) \cdot \frac{((1 - \epsilon)^n  + \epsilon^n)}{n}.
\end{eqnarray}
Hence, in this work, we focus on computing lower bounds on the entropy $H(C|\B{ETM})_{\rho}$.

\section{Deriving the integral representation of relative entropy}\label{app: integral representation derivation}
The objective of this section is to establish a relationship between the entropy \( H(C | \mathbf{ETM} ; D = 1)_{\rho} \) and the problem of distinguishing the hypotheses \( \omega_0 \) and \( \omega_1 \). The key insight underlying this connection is the integral representation of the relative entropy, given by  
\begin{equation}\label{eqn: integral_representation_appendix}
    D(\rho_{AB} || \id_{A} \otimes \rho_{B}) = \tr(\rho_{AB} - \id_{A} \otimes \rho_{B})  + \int_{-\infty}^{0} \dd t  \frac{\tr^{-}((1-t)\rho_{AB} + t \id_{A} \otimes \rho_{B}) }{|t| (|t| + 1)^2} + \int_{1}^{\infty} \dd t \frac{\tr^-((1-t)\rho_{AB} + t \id_{A} \otimes \rho_B)}{t (t -1)^2},
\end{equation}
which was derived in \cite{frenkel2023integral} and further refined in \cite{jencova}.  

Before presenting our main claim, we first present a slightly modified result from \cite{jencova} in order to make its application more suitable for our work.

\begin{lemma}\label{lemm: general_integral}
Let $\rho_{AB} \in \M{S}(\M{H}_{A} \ot \M{H}_{B})$.  Then
\begin{eqnarray}
    D(\rho_{AB} || \id_{A} \ot \rho_{B})  = \frac{1}{\ln(2)}\left( \left(1 - d_{A}  \right) + \left( \int_{1}^{\lambda} \frac{\dd s}{s} (1 - d_{A} s)\right) + \int_{0}^{\lambda} \frac{\dd s}{s} \tr^{-}(\rho_{AB} - s \id_{A} \ot \rho_{B}) \right) ,  
\end{eqnarray}
where $d_{A}$ is the dimension of $\M{H}_{A}$ and $\lambda \in \mathbb{R}$ is any number that satisfies $\lambda \id_{A} \ot \rho_{B} \succeq \rho_{AB}$. 
\end{lemma}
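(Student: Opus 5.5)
The plan is to start from the Frenkel--Jen\v{c}ov\'a representation \eqref{eqn: integral_representation_appendix}, applied with $\tau=\rho_{AB}$ and $\sigma=\id_A\ot\rho_B$, and reparametrize both half-line integrals by a single variable $s\in(0,\infty)$ in which the operator appearing is $\rho_{AB}-s\,\id_A\ot\rho_B$. I assume, as is standard for that representation, that it yields $D$ in natural units; this is where the overall factor $1/\ln 2$ comes from. I also note that $\tr(\tau-\sigma)=1-d_A$, since $\tr(\id_A\ot\rho_B)=d_A$. This gives the first constant term.

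\emph{Step 1 (the branch $t<0$).} Write $t=-u$ with $u>0$. Then $(1-t)\tau+t\sigma=(1+u)\bigl(\tau-s\sigma\bigr)$ with $s=u/(1+u)\in(0,1)$. Since $\tr^-$ is positively homogeneous, the integrand becomes $\tfrac{1}{u(1+u)}\tr^-(\tau-s\sigma)\,\dd u$. Using $\dd s=\dd u/(1+u)^2$ and $1/s=(1+u)/u$, one checks $\dd u/(u(1+u))=\dd s/s$. Hence this branch equals $\int_0^1 \frac{\dd s}{s}\tr^-(\tau-s\sigma)$. Convergence at $s\to0$ is unproblematic: $\tr^-(\tau-s\sigma)\le s\tr\sigma=s\,d_A$, so the integrand is bounded.

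\emph{Step 2 (the branch $t>1$).} Here $(1-t)\tau+t\sigma=(t-1)\bigl(s\sigma-\tau\bigr)$ with $s=t/(t-1)\in(1,\infty)$. The negative part of $s\sigma-\tau$ is the positive part of $\tau-s\sigma$. The analogous Jacobian computation ($\dd s=-\dd t/(t-1)^2$, orientation reversed) gives $\int_1^\infty\frac{\dd s}{s}\tr^+(\tau-s\sigma)$. Now I use $\tr^+(X)=\tr X+\tr^-(X)$ with $\tr(\tau-s\sigma)=1-d_A s$. Moreover, for $s\ge\lambda$ we have $\tau-s\sigma\preceq 0$, so $\tr^+$ vanishes and the integral truncates at $\lambda$. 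This yields $\int_1^\lambda\frac{\dd s}{s}(1-d_As)+\int_1^\lambda\frac{\dd s}{s}\tr^-(\tau-s\sigma)$. Adding Step 1 merges the $\tr^-$ pieces into $\int_0^\lambda$, which gives the claimed formula after dividing by $\ln 2$.

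\emph{Main obstacle (bookkeeping).} The step I expect to need care is the case $\lambda<1$, which is possible since only $\lambda\ge 1/d_A$ is forced. There the truncation argument must be run on the $t<0$ branch instead. For $s\in[\lambda,1)$ one has $\tr^-(\tau-s\sigma)=s\,d_A-1$, so
\begin{equation*}
\int_\lambda^1\frac{\dd s}{s}\tr^-(\tau-s\sigma)=-\int_1^\lambda\frac{\dd s}{s}(1-d_As).
\end{equation*}
This reproduces the same expression with an oriented integral $\int_1^\lambda$, while the $t>1$ branch vanishes entirely. I would also verify that the result is independent of the choice of admissible $\lambda$: differentiating in $\lambda$ gives $\tfrac{1}{\lambda}\bigl(1-d_A\lambda+\tr^-(\tau-\lambda\sigma)\bigr)=\tfrac1\lambda\tr^+(\tau-\lambda\sigma)=0$. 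This is a consistency check confirming the sign conventions for $\tr^\pm$ are right.
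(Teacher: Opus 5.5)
Your proposal is correct and follows the same route as the paper. You map the $t<0$ branch to $\int_0^1\frac{\dd s}{s}\tr^-(\tau-s\sigma)$ and the $t>1$ branch to $\int_1^\infty\frac{\dd s}{s}\tr^+(\tau-s\sigma)$, truncate at $\lambda$, and rewrite $\tr^+=\tr+\tr^-$ with $\tr(\tau-s\sigma)=1-d_As$. Your separate treatment of $\lambda<1$ is not needed in the paper's argument: with oriented integrals, $\int_1^\infty\frac{\dd s}{s}\tr^+(\tau-s\sigma)=\int_1^\lambda\frac{\dd s}{s}\tr^+(\tau-s\sigma)=0$ there anyway.

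There is one sign slip in that aside. Since $\tr^-(\tau-s\sigma)=sd_A-1$ on $[\lambda,1)$, the correct identity is $\int_\lambda^1\frac{\dd s}{s}\tr^-(\tau-s\sigma)=+\int_1^\lambda\frac{\dd s}{s}(1-d_As)$. Your version with the minus sign equates a nonnegative quantity with a nonpositive one, and it is the plus sign that yields the formula you correctly state.
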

\begin{proof}
Consider the integral representation \eqref{eqn: integral_representation}. From \cite{jencova}, we can compute the first integral in \eqref{eqn: integral_representation} as  
\begin{equation}
\int_{-\infty}^{0} \dd t  \frac{\tr^{-}(\rho_{AB} - s \id_{A} \ot \rho_{B}) }{|t| (|t| + 1)^2} =  \int_{1}^{0} \frac{\dd s}{s} \tr^{-} \left( \rho_{A} - s \id_{A} \ot \rho_{B} \right).
\end{equation}
We now simplify the second integral 
\begin{equation}
\int_{1}^{\infty} \dd t \frac{\tr^{-}((1 - t) \rho_{AB} + t (\id_{A} \ot \rho_{B})) }{t (t - 1)^2}. 
\end{equation}
Following \cite{jencova}, we know that for $t > 1$, we have that  
$$\tr^{-}\left((1 - t) \rho_{AB} + t \id_{A} \ot \rho_{B} \right) = \tr^{+}\left((t - 1) \left(  \rho_{AB}  - \frac{t}{t - 1}  \id_{A} \ot \rho_{B} \right) \right).$$
Thus, we have that 
\begin{equation}
\int_{1}^{\infty} \dd t \frac{\tr^{-}((1 - t) \rho_{AB} + t (\id_{A} \ot \rho_{B})) }{t (t - 1)^2} = \int_{1}^{\lambda} \frac{\dd s}{s}  \tr^{+}(\rho - s (\id_{A} \ot \rho_{B})).
\end{equation}
Now we note that $\tr^{+}(\rho_{AB} - \id_{A} \ot \rho_{B}) - \tr^{-} (\rho_{AB} - \id_{A} \ot \rho_{B}) = \tr(\rho_{AB} - s\id_{A} \ot \rho_{B}) = 1 - s d_{A}$. This gives 
\begin{eqnarray}
\int_{1}^{\infty} \dd t \frac{\tr^{-}((1 - t) \rho_{AB} + t (\id_{A} \ot \rho_{B})) }{t (t - 1)^2} = \int_{1}^{\lambda} \frac{\dd s}{s} (1 - sd_{A})  + \int_{1}^{\lambda} \frac{\dd s}{s}  \tr^{-}(\rho - s (\id_{A} \ot \rho_{B})).
\end{eqnarray}
Where $\lambda > 0$ is any number satisfying $\lambda \id_{A} \ot \rho_{B} \succeq \rho_{AB}$. Combing the two results gives:
\begin{eqnarray}
D(\rho_{AB} || \id_{A} \ot \rho_{B}) = \frac{1}{\ln(2)}\left( \left(1 - d_{A}  \right) + \left( \int_{1}^{\lambda} \frac{\dd s}{s} (1 - d_{A} s)\right) + \int_{0}^{\lambda} \frac{\dd s}{s} \tr^{-}(\rho_{AB} - s \id_{A} \ot \rho_{B})  \right) 
\end{eqnarray}
\end{proof} 
For applications to the advantage distillation protocol, we only need to consider the case where \( \rho_{AB} \) is a classical-quantum (cq) state. In this case, the problem simplifies significantly. 

Before proving the main result, we note that the integral representation in Eq.~\eqref{eqn: integral_representation} has been utilized in recent works \cite{kossmann2024bounding, kossmann2024optimising} in the context of bounding the entropy of classical-quantum (cq) states in quantum key distribution (QKD) and random number generation (RNG) protocols—particularly in device-independent scenarios.  In this work, our focus is to establish a correspondence between computing the key rates for the advantage distillation protocol and a hypothesis testing problem. The error bounds for such hypothesis tests can be computed either analytically or numerically, which, in turn, will yield the desired bounds on the entropy.  

\begin{theorem}\label{thm: CQ_integral_representation}
Let $\rho_{CE}$ be the classical-quantum (cq) state given by  
\begin{equation}
    \rho_{CE} = \sum_{c \in \{ 0 ,1 \}} p_C(c) \proj{c}_{C} \ot \omega_{c},
\end{equation}  
where $\omega_c$ are normalized quantum states. Define the cq state  
\begin{equation}
\bar{\rho}(s) :=  s \proj{0} \ot \omega_0 + (1 -s) \proj{1}\ot \omega_1,
\end{equation}  
Then, the conditional entropy $H(C|E)_{\rho_{CE}}$ has the integral representation  
\begin{equation}
    H(C|E)_{\rho_{CE}} = \int_{0}^{1} \frac{\dd s}{2 \ln(2) s} \left( 1 - \left(|| s p_{C}(0) \omega_0   - (1 -s) p_{C}(1) \omega_1 ||_1  + || s p_{C}(1) \omega_{1} - (1 -s) p_{C}(0)\omega_{0}) ||_1 \right)\right).  
\end{equation}  
In particular, when $p_{C}(0) = p_{C}(1) = \frac{1}{2}$, then the entropy has the integral representation
% Here, 2 of the denominator of (18) should not be removed
\begin{eqnarray}\label{eqn: integral_representation_of_entropy}
      H(C|E)_{\rho_{CE}} = \int_{0}^{1} \frac{\dd s}{2s \ln(2)} \left( p_{\error}(C|E)_{\bar{\rho}(s)} + p_{\error}(C|E)_{\bar{\rho}(1 -s) } \right).    
\end{eqnarray}
\end{theorem}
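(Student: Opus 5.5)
The plan is to apply Lemma~\ref{lemm: general_integral} to $\rho_{AB}=\rho_{CE}$, with $A=C$ (so $d_A=2$) and $B=E$. We then use the identity $H(C|E)_{\rho}=-D(\rho_{CE}\,\|\,\id_C\ot\rho_E)$ with $\rho_E=p_C(0)\omega_0+p_C(1)\omega_1$.

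\textbf{Choice of $\lambda$.} For a cq state we have $p_C(c)\,\omega_c\preceq\rho_E$ for each $c$. Hence $\rho_{CE}\preceq \id_C\ot\rho_E$, and $\lambda=1$ is admissible in the lemma. With this choice the middle integral $\int_1^\lambda$ vanishes, and the lemma reduces to
\[
H(C|E)_\rho=\frac{1}{\ln 2}\Big(1-\int_0^1\frac{\dd s}{s}\,\tr^{-}\big(\rho_{CE}-s\,\id_C\ot\rho_E\big)\Big).
\]

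\textbf{Block decomposition.} The operator $\rho_{CE}-s\,\id_C\ot\rho_E$ is block diagonal in the classical basis. The $c=0$ block is $(1-s)p_C(0)\omega_0-s\,p_C(1)\omega_1$, and the $c=1$ block is $(1-s)p_C(1)\omega_1-s\,p_C(0)\omega_0$. For a Hermitian $X$ we use $\tr^{-}(X)=\tfrac12(\|X\|_1-\tr X)$. The traces of the two blocks add up to $1-2s$. Writing $N(s)$ for the sum of the two trace norms, we get
\[
\tr^{-}\big(\rho_{CE}-s\,\id_C\ot\rho_E\big)=\tfrac12\big(N(s)-1+2s\big).
\]

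\textbf{Collecting terms.} Substituting this gives
\[
H=\frac{1}{\ln 2}\Big(1-\int_0^1\dd s+\int_0^1\frac{\dd s}{2s}\big(1-N(s)\big)\Big)=\int_0^1\frac{\dd s}{2s\ln 2}\big(1-N(s)\big).
\]
The set of two norms in $N(s)$ is exactly the pair appearing in the theorem. One is $\|s\,p_C(0)\omega_0-(1-s)p_C(1)\omega_1\|_1$, which is the $c=1$ block up to sign. The other is $\|s\,p_C(1)\omega_1-(1-s)p_C(0)\omega_0\|_1$, which is the $c=0$ block up to sign. Splitting $\int_0^1\frac{\dd s}{s}$ into two separately divergent pieces is not allowed. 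However, $1-N(s)=O(s)$ as $s\to0$, since $N(0)=p_C(0)+p_C(1)=1$ and $N$ is Lipschitz, so the final integrand is integrable. It is cleaner to combine the $1$ and $-\int_0^1\dd s$ terms inside a single integral before splitting anything.

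\textbf{Equal priors.} With $p_C(c)=\tfrac12$, each norm becomes $\tfrac12\|s\omega_0-(1-s)\omega_1\|_1$ or $\tfrac12\|(1-s)\omega_0-s\omega_1\|_1$. By the Holevo--Helstrom formula, $p_{\error}(C|E)_{\bar\rho(s)}=\tfrac12-\tfrac12\|s\omega_0-(1-s)\omega_1\|_1$, and similarly for $\bar\rho(1-s)$. Therefore $1-N(s)=p_{\error}(C|E)_{\bar\rho(s)}+p_{\error}(C|E)_{\bar\rho(1-s)}$, which yields \eqref{eqn: integral_representation_of_entropy}.

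\textbf{Main obstacle.} I expect the main difficulty to be bookkeeping rather than conceptual. The lemma's normalization must be checked: the base of the logarithm, the sign of $H=-D$, and the correct conversion of the $t$-integrals into $\int\frac{\dd s}{s}$. The domination $\rho_{CE}\preceq\id\ot\rho_E$ is what allows $\lambda=1$ and kills the $\int_1^\lambda$ term. Without it, one would need to verify that the extra terms cancel for any larger $\lambda$, which is easy since $\tr^{-}=0$ for $s\ge1$.
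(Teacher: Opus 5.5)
Your proposal is correct and follows the same route as the paper. You apply Lemma~\ref{lemm: general_integral} with $\lambda=1$ and $d_C=2$, justified by $\rho_{CE}\preceq\id_C\ot\rho_E$, split $\rho_{CE}-s\,\id_C\ot\rho_E$ into its two classical blocks, rewrite $\tr^{-}$ via trace norms, and finish with Holevo--Helstrom; your sign bookkeeping is cleaner than the paper's, whose two sign slips cancel.

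One harmless inaccuracy is in your closing aside: for $s\ge 1$ it is $\tr^{+}(\rho_{CE}-s\,\id_C\ot\rho_E)$ that vanishes, while $\tr^{-}$ equals $2s-1$ there. That is what cancels the $\int_1^\lambda \frac{\dd s}{s}(1-2s)$ term for $\lambda>1$, but it plays no role since you take $\lambda=1$.
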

\begin{proof}
For any cq state $\rho_{CE}$ we always have $\id_{C} \ot \rho_{E} \succeq \rho_{CE}$. Thus, we can always set $\lambda = 1$ to the result of the previous lemma \ref{lemm: general_integral} and proceed for this case. Furthermore, we can set $d_{C} =2$ to be the dimension of the classical register $C$. Substituting everything in gives: 
\begin{eqnarray*}  
\ln(2)D(\rho_{CE} || \id_{C} \ot \rho_{E}) &=&    (1 -d_{C})  + \int_{0}^{1} \frac{\dd s}{s} \tr^{-} \left(s \id_{C} \ot \left(p_{C}(0)\omega_0 + p_{C}(1)\omega_1  \right) -  p_{C}(0)  \proj{0} \ot \omega_{0} - p_{C}(1)\proj{1} \ot \omega_{1}\right)   \\ 
&=&    (1 -d_{C}) +  \int_{0}^{1} \frac{\dd s}{s} \tr^{-} \left( \left( \proj{0} \ot \left( (s-1) p_{C}(0)\omega_{0} + s p_{C}(1)\omega_{1} \right) \right) + \left( \proj{1} \ot \left(s p_{C}(0) \omega_{0} + (s-1) p_{C}(1) \omega_{1}\right) \right) \right)      \\ 
&=&  \left(1 - d_{C}  \right) +  \int_{0}^{1}\frac{\dd s}{ s} \left(  \tr^{-} \left( (s-1)p_{C}(0)\omega_{0} + s p_{C}(1)\omega_{1} \right)  + \tr^{-}\left( sp_{C}(0) \omega_{0} + (s-1)p_{C}(1) \omega_{1} \right) \right)   \label{eqn: D_in_terms_of_positive_and_negative_part}
\end{eqnarray*} 
Now consider the state 
\begin{eqnarray}
\rho(s) := \proj{0} \ot (s  p_{C}(0)\omega_{0}) + \proj{1}\ot ((1- s) p_{C}(1)\omega_{1})
\end{eqnarray}
then, we can relate $\tr^-(..)$ in terms of the trace norms: 
\begin{eqnarray}
\tr^{-} \left( (s-1)p_{C}(0)\omega_{0} + s p_{C}(1)\omega_{1} \right)  = \frac{- (1 -s)p_{C}(0) + s p_{C}(1) }{2} + \frac{||s p_{C}(1) \omega_1 - (1 -s) p_{C}(0) \omega_0 ||_1}{2} 
\end{eqnarray}
Similarly, we have that 
\begin{eqnarray}
\tr^{-} \left( (s-1)p_{C}(1)\omega_{1} + s p_{C}(0)\omega_{0} \right)  = \frac{- (1 -s)p_{C}(1) + s p_{C}(0) }{2} + \frac{||s p_{C}(0) \omega_0 - (1 -s) p_{C}(1) \omega_1 ||_1}{2} 
\end{eqnarray}
which gives 
\begin{eqnarray*} 
 D(\rho_{CE} || \id_{C} \ot \rho_{E}) &=&    \frac{-1}{\ln(2)}  + \int_{0}^{1} \frac{\dd s}{2 \ln(2 )s} \left( 2s - 1 + ||s p_{C}(0) \omega_0  - (1 -s) p_{C}(1) \omega_1 ||_1 + ||s p_{C}(1) \omega_1 - (1 -s) p_{C}(0) \omega_0 ||_1  \right)    
\end{eqnarray*}
where above we substituted $d_{C} = 2$. Thus, 
\begin{eqnarray}
H(C|E)_{\rho} = \int_{0}^{1} \frac{\dd s}{2 \ln(2) s} \left( 1 - \left(|| s p_{C}(0) \omega_0   - (1 -s) p_{C}(1) \omega_1 ||_1  + || s p_{C}(1) \omega_{1} - (1 -s) p_{C}(0)\omega_{0}) ||_1 \right)\right).
\end{eqnarray}
Now let us consider the case when $p_{C}(0) = p_{C}(1) = \frac{1}{2}$. In this case, the above expression simplifies to
\begin{eqnarray}
H(C|E)_{\rho} = \int_{0}^{1} \frac{\dd s}{2 \ln(2) s} \left( 1 - \frac{1}{2}|| s  \omega_0   - (1 -s)  \omega_1 ||_1  + \frac{1}{2}|| s   \omega_{1} - (1 -s) \omega_{0}) ||_1 \right). 
\end{eqnarray}
Noting that 
\begin{eqnarray}
 p_{\error}(C|E)_{\bar{\rho}(s)}  = \frac{1}{2} (1 - ||s \omega_0 - (1 - s) \omega_{1} ||_1),
\end{eqnarray}
proves the claim for the special case. 
\end{proof}
\section{Upper and lower bounds on the von-Neumann entropy using the integral representation}\label{app: upper and lower bounds}
The integral representation above gives the hope to be able to to derive the upper and lower bounds to the entropy. In particular, we find the following lower bounds on the integral expression of the entropy: 
% 2 of denominator should not be removed here too
% notation: rho_i instead of w_i
% highlight: chg, comment: JC, cancel, sout

\begin{lemma}\label{lemm: known_lower_bounds_using_the_integral_representation}
Let $\rho = \sum_{c} \frac{1}{2} \proj{c} \ot \omega_{c}$, then 
\begin{eqnarray}\label{eqn: fidelity_bound}
  H(C|E)_{\rho} &\geq& \int_{0}^{1} \frac{\dd s}{2s \ln(2)} \left(1 - \sqrt{1 - 4 s (1 -s) F( \omega_0  ,  \omega_1 )^2} \right)=  1 -  h_2\left(\frac{1}{2} {+} \frac{F(\omega_0 , \omega_1)}{2} \right)   \\ 
  H(C|E)_{\rho} &\geq&  2 (1 - 2^{-H_{\min}(C|E)_{\rho }}) \geq H_{\min}(C|E)_{\rho}.  
\end{eqnarray} 
\end{lemma}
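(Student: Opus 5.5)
The plan is to start from the representation \eqref{eqn: integral_representation_of_entropy} of Theorem~\ref{thm: CQ_integral_representation}, specialised to $p_C(0)=p_C(1)=\tfrac12$. Each of the two lower bounds then follows by inserting a pointwise lower bound on $p_{\error}(C|E)_{\bar\rho(s)}$ and $p_{\error}(C|E)_{\bar\rho(1-s)}$ and evaluating the resulting elementary integral. Since the integrand has the positive weight $1/(2s\ln 2)$, any pointwise lower bound on the error probabilities integrates to a lower bound on $H(C|E)_\rho$.

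\emph{Fidelity bound.} I will invoke \eqref{eqn: error_probability_fidelity}, namely $p_{\error}(\mathcal{E}(s))\ge \tfrac12\bigl(1-\sqrt{1-4s(1-s)F^2}\bigr)$ with $F=F(\omega_0,\omega_1)$. This bound is invariant under $s\leftrightarrow 1-s$, so the two error terms together contribute $1-\sqrt{1-4s(1-s)F^2}$, which gives the first inequality. The main work, and the step I expect to be the obstacle, is the closed-form identity $G(F):=\int_0^1 \frac{ds}{2s\ln 2}\bigl(1-\sqrt{1-4s(1-s)F^2}\bigr)=1-h_2\bigl(\tfrac{1+F}{2}\bigr)$. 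I would prove it by comparing derivatives in $F$ together with the value at $F=0$, where both sides vanish. Differentiating under the integral gives $G'(F)=\frac{1}{\ln 2}\int_0^1\frac{2F(1-s)\,ds}{\sqrt{1-4s(1-s)F^2}}$. Symmetrising $s\mapsto 1-s$ replaces $2(1-s)$ by $1$, and substituting $u=2s-1$ turns the radicand into $1-F^2+F^2u^2$. The resulting integral is elementary and gives $G'(F)=\frac{1}{\ln 2}\,\mathrm{artanh}F=\frac12\log_2\frac{1+F}{1-F}$. Using $h_2'(x)=\log_2\frac{1-x}{x}$, this matches $\frac{d}{dF}\bigl[1-h_2(\tfrac{1+F}{2})\bigr]$. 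As a sanity check, the value at $F=1$ can be computed directly and equals $1$.

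\emph{Min-entropy bound.} First I will justify \eqref{eqn: error_probability_symmetry}. The quantity $p_{\error}(\mathcal{E}(s))=1-\sup_{\Lambda}\bigl(s\,\tr\Lambda_0\omega_0+(1-s)\tr\Lambda_1\omega_1\bigr)$ is an infimum of affine functions of $s$, hence concave on $[0,1]$, and it vanishes at $s=0$ and $s=1$. Concavity then gives $p_{\error}(\mathcal{E}(s))\ge 2\min\{s,1-s\}\,P$, where $P:=p_{\error}(\mathcal{E}(1/2))$. The ensemble $\mathcal{E}(1/2)$ is exactly the ensemble of $\rho$, so $P=1-2^{-H_{\min}(C|E)_\rho}$. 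Substituting this bound for both terms gives
\[
H(C|E)_\rho\ \ge\ \frac{2P}{\ln 2}\left(\int_0^{1/2}ds+\int_{1/2}^1\frac{1-s}{s}\,ds\right)=\frac{2P}{\ln 2}\Bigl(\tfrac12+\ln 2-\tfrac12\Bigr)=2P .
\]

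Finally, for $x=H_{\min}(C|E)_\rho$ we have $x\in[0,1]$, since $C$ is a uniform bit. The function $x\mapsto 2(1-2^{-x})$ is concave and agrees with $x$ at $x=0$ and $x=1$. It therefore dominates $x$ on $[0,1]$, which yields the last inequality.
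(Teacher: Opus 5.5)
Your proof is correct. Its skeleton is the same as the paper's: insert pointwise lower bounds on $p_{\error}(C|E)_{\bar{\rho}(s)}$ and $p_{\error}(C|E)_{\bar{\rho}(1-s)}$ into the integral representation of Theorem~\ref{thm: CQ_integral_representation}, then evaluate the resulting integrals. The two arguments differ only in how the two auxiliary facts are established.

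\emph{The closed form $1-h_2(\tfrac{1+F}{2})$.} The paper (Lemma~\ref{lem:integral_expression_binary_entropy}) expands the square root as a binomial series and integrates term by term using the Beta integrals $B(n,n+1)$. It then matches coefficients with $1-h_2(\tfrac{1+x}{2})=\frac{1}{\ln 2}\sum_{n\geq 1}\frac{x^{2n}}{2n(2n-1)}$ via double-factorial identities. You instead differentiate in $F$, symmetrise, and reduce to $\int_0^1 du/\sqrt{1-F^2+F^2u^2}=\mathrm{artanh}(F)/F$, which you compare with $h_2'$ and the value at $F=0$. This is shorter and avoids the factorial bookkeeping. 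Its only overhead is justifying differentiation under the integral for $F\leq F_0<1$, where the $F$-derivative of the integrand is bounded by $2F_0/(\ln 2\,\sqrt{1-F_0^2})$, together with the separate evaluation at $F=1$, which you already supply. The paper's route yields the full power series as a by-product.

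\emph{The tent bound $p_{\error}(\mathcal{E}(s))\geq 2\min\{s,1-s\}\,p_{\error}(\mathcal{E}(1/2))$.} The paper takes an optimal measurement for $\bar{\rho}(s)$ and lower bounds both weights by $\min\{s,1-s\}$. It then uses that the unweighted sum of the two error terms is at least $2p_{\error}(\mathcal{E}(1/2))$. You observe that $s\mapsto p_{\error}(\mathcal{E}(s))$ is concave, being an infimum of affine functions, and vanishes at $s=0,1$. This gives the same inequality without tracking how the optimal measurement depends on $s$. It also generalises at once: knowing $p_{\error}$ at any finite set of priors gives its piecewise-linear interpolation as a lower bound on the integrand. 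That is in the spirit of the pointwise combinations of error bounds the paper uses right after this lemma.
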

\begin{proof}
Let us define a family of states parameterized by $s \in [0,1]$:
\[
\bar{\rho}(s) := s \proj{0} \ot \omega_0 + (1 - s) \proj{1} \ot \omega_1.
\]
We recall a well-known lower bound on the optimal guessing probability for binary state discrimination (see, e.g., \cite[Theorem~9.3.1]{wilde2013quantum}):
\[
p_{\error}(C|E)_{\bar{\rho}(s)} \geq \frac{1}{2} - \frac{1}{2} \sqrt{1 - 4s(1-s) F(\omega_0, \omega_1)^2}.
\]
Substituting this bound into the integral representation of the entropy (see Eqn \eqref{eqn: integral_representation_of_entropy}), we obtain
% 2 of denominator should not be removed here too
\[
H(C|E)_\rho \geq \int_0^1 \frac{\dd s}{2 s \ln 2} \left(1 - \sqrt{1 - 4s(1-s) F(\omega_0, \omega_1)^2} \right).
\]
Using Lemma~\ref{lem:integral_expression_binary_entropy}, which states
\[
h_2\left(\frac{1}{2} + \frac{x}{2} \right) = 1 -\int_0^1 \frac{\dd s}{2 s \ln 2} \left(1 - \sqrt{1 - 4s(1-s)x^2} \right),
\]
we prove the first bound.

For the second bound, let $\Lambda^*$ be the measurement that minimizes the error probability in distinguishing $\omega_0$ and $\omega_1$. Then:
\begin{align}
p_{\error}(C|E)_{\bar{\rho}(s)} &=  s \hspace{0.2em}\tr(\Lambda^* \omega_0) + (1 - s) \tr((\id - \Lambda^*) \omega_1) \\
&\geq  2 \min\{s, 1 - s\} \cdot p_{\error}(C|E)_{\bar{\rho}(1/2)} \label{eqn: error_probability_bound}
\end{align}
This allows us to bound the integral: 
\[
\int_0^1 \frac{\dd s}{2 s \ln 2} p_{\error}(C|E)_{\bar{\rho}(s)} \geq p_{\error}(C|E)_{ \bar{\rho}(1/2)} \cdot \int_0^1 \frac{\dd s}{s \ln 2} \min\{s, 1 - s\}.
\] 
Noting that $\int_{0}^{1} \dd s \min\{ s , 1-s\} s^{-1} = \ln(2)$ and that $\rho = \bar{\rho}(1/2)$ gives 
\[
H(C|E)_\rho \geq 2 \cdot p_{\error}(C|E)_\rho.
\]
Now, using the expression
\[
p_{\error}(C|E)_\rho = 1 - 2^{-H_{\min}(C|E)_\rho},
\]
we obtain
\[
H(C|E)_\rho \geq 2\left(1 - 2^{-H_{\min}(C|E)_\rho} \right).
\]
Finally, since $2(1 - 2^{-x}) \geq x$ for all $x \in [0,1]$, and $H_{\min}(C|E)_\rho \leq \log \dim C = 1$, we conclude
\[
H(C|E)_\rho \geq H_{\min}(C|E)_\rho.
\]
\end{proof}

Note that the first lower bound involving the fidelity coincides with the result from \cite{roga2010universal} and was also employed by Tan et al.\ in the context of device-independent advantage distillation \cite{tan2020advantage}. The second bound, which involves the min-entropy, matches the result in \cite{briet2009properties} and was later used by Stasiuk et al.\ \cite{stasiuk2022quantum} in their work connecting advantage distillation to the quantum Chernoff bound. We remark here that these emerge naturally from the integral expression of the entropy: the techniques used to derive them are rooted in elementary methods for bounding error probabilities—methods that are standard in introductory quantum information textbooks and do not require advanced mathematical tools.  

We further remark that our method can yield a tighter bound than either of the two individual bounds alone. This follows from observing that the min-entropy can be tightly computed by combining the two bounds on the error probability, enabling us to derive a stronger lower bound on the conditional von Neumann entropy in cases where either the fidelity or the error probability is known. Specifically, we note that
\[
p_{\error}(C|E)_{\bar{\rho}(s)} \geq \max \left\{ \frac{1}{2} - \frac{\sqrt{1 - 4s(1 - s) F(\omega_0, \omega_1)^2}}{2}, \quad 2 \min\{s, 1-s\} \, p_{\error}(C|E)_{\rho} \right\}.
\]
This bound is guaranteed to exceed both known lower bounds on the von Neumann entropy, since taking the maximum of the two ensures it is at least as strong as the better of the individual bounds. We illustrate this point in the figure \ref{fig:tighter_lower_bound_single_shot} in the main text.

We now turn to the problem of upper bounding the key rate, which amounts to upper bounding the conditional entropy \( H(C|E) \). Leveraging the integral representation, we can obtain such bounds by bounding either the error probability or the min-entropy. The following result provides an upper bound on the error probability in terms of the quantum Chernoff bound:

\begin{lemma}
The following upper bound holds:
\begin{eqnarray}
    H(C|E)_{\rho} \leq \frac{\pi}{2 \ln(2)}  \left(\frac{Q_{\alpha}(\omega_0 || \omega_1)}{\sin(\pi \alpha)} \right) 
\end{eqnarray}
for every \(\alpha \in [0,1]\).
\end{lemma}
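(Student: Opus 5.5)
The plan is to start from the integral representation of Theorem~\ref{thm: CQ_integral_representation} with uniform priors,
\[
H(C|E)_\rho=\int_0^1\frac{\dd s}{2s\ln 2}\Bigl(p_{\error}(C|E)_{\bar\rho(s)}+p_{\error}(C|E)_{\bar\rho(1-s)}\Bigr).
\]
I would then bound each error probability pointwise in $s$ from above using the Audenaert et al.\ inequality \eqref{eqn: operator_ineqality}. The integral over $s$ should then reduce to Beta functions, whose sum collapses to $\pi/\sin(\pi\alpha)$. The endpoint cases $\alpha\in\{0,1\}$ are trivial, since there $\sin(\pi\alpha)=0$ and the right-hand side is $+\infty$. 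So we may assume $\alpha\in(0,1)$.

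\textbf{Step 1 (pointwise bound).} Put $A=s\,\omega_0$ and $B=(1-s)\,\omega_1$. Since $\tr(A+B)=1$, the Holevo--Helstrom formula gives
\[
p_{\error}(C|E)_{\bar\rho(s)}=\tfrac12\bigl(1-\|A-B\|_1\bigr)=\tfrac12\tr\bigl(A+B-|A-B|\bigr).
\]
Applying \eqref{eqn: operator_ineqality} yields
\[
p_{\error}(C|E)_{\bar\rho(s)}\le \tr(A^\alpha B^{1-\alpha})=s^\alpha(1-s)^{1-\alpha}\,Q_\alpha(\omega_0\|\omega_1).
\]
For $\bar\rho(1-s)$ we take $A'=(1-s)\,\omega_0$ and $B'=s\,\omega_1$ with the same $\alpha$. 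This gives
\[
p_{\error}(C|E)_{\bar\rho(1-s)}\le(1-s)^\alpha s^{1-\alpha}\,Q_\alpha(\omega_0\|\omega_1).
\]
Keeping $\omega_0$ in the $\alpha$-slot in both cases is what makes the same overlap $Q_\alpha(\omega_0\|\omega_1)$ appear in both bounds.

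\textbf{Step 2 (integration).} Substituting into the integral, we get
\[
H(C|E)_\rho\le\frac{Q_\alpha(\omega_0\|\omega_1)}{2\ln 2}\int_0^1\Bigl(s^{\alpha-1}(1-s)^{1-\alpha}+s^{-\alpha}(1-s)^{\alpha}\Bigr)\dd s.
\]
The bracket equals $B(\alpha,2-\alpha)+B(1-\alpha,1+\alpha)$, and both integrals converge for $\alpha\in(0,1)$.

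\textbf{Step 3 (evaluation).} Using $\Gamma(2-\alpha)=(1-\alpha)\Gamma(1-\alpha)$ and $\Gamma(1+\alpha)=\alpha\Gamma(\alpha)$, the two Beta functions are
\[
B(\alpha,2-\alpha)=(1-\alpha)\Gamma(\alpha)\Gamma(1-\alpha),\qquad B(1-\alpha,1+\alpha)=\alpha\,\Gamma(\alpha)\Gamma(1-\alpha).
\]
Their sum is $\Gamma(\alpha)\Gamma(1-\alpha)$, which equals $\pi/\sin(\pi\alpha)$ by Euler's reflection formula. This gives the claimed bound.

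The only step needing care is Step 1. One must check that the normalization $\tr(A+B)=1$ makes the Helstrom error exactly $\frac12\tr(A+B-|A-B|)$, and that the swapped ensemble $\bar\rho(1-s)$ is handled so as to produce the same $Q_\alpha$ rather than $Q_{1-\alpha}$. Everything after that is a routine Beta-function computation.
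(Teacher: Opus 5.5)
Your proof is correct and is essentially the paper's argument: the same integral representation, Audenaert's inequality applied with $A=s\,\omega_0$, $B=(1-s)\,\omega_1$ and with $s$ and $1-s$ swapped, and the Beta integrals $(1-\alpha)\pi/\sin(\pi\alpha)$ and $\alpha\pi/\sin(\pi\alpha)$ summing to $\pi/\sin(\pi\alpha)$. The paper first allows independent exponents $\alpha,\beta$ in the two terms and then sets $\beta=\alpha$, whereas you use a single $\alpha$ throughout. Your endpoint remark goes beyond the paper, with one small correction: if $Q_\alpha(\omega_0\|\omega_1)=0$ at $\alpha\in\{0,1\}$ the right-hand side is $0/0$ rather than $+\infty$, but then $\omega_0\perp\omega_1$ and $H(C|E)_\rho=0$, so the conclusion is unaffected.
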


\begin{proof}
We upper bound the integral representation of \( H(C|E) \) by bounding the error probability \( p_{\error}(C|E)_{\bar{\rho}(s)} \) for each \( s \in [0,1] \). This can be achieved using the inequality established in \cite{audenaert2007discriminating}:
\begin{eqnarray}
    \tr(A^{\alpha} B^{1-\alpha}) \geq \frac{1}{2} \tr(A + B - |A - B|),
\end{eqnarray}
which holds for all positive semidefinite operators \( A, B \) and \( \alpha \in [0,1] \). 

Setting \( A = s \omega_0 \) and \( B = (1 - s) \omega_1 \), we obtain the bound:
\begin{eqnarray}
    p_{\error}(C|E)_{\bar{\rho}(s)} \leq s^{ \alpha} (1 - s)^{1-\alpha} Q_{\alpha}(\omega_0 \| {\omega}_1),
\end{eqnarray}
for {any} \( \alpha \in [0, 1] \). Similarly, for the swapped roles of \( \rho_0 \) and \( \rho_1 \), we have:
\begin{eqnarray}
    p_{\error}(C|E)_{\bar{\rho}(1 - s)} \leq (1 - s)^{\beta} s^{1 - \beta} Q_{ \beta}(\omega_0 \| \omega_1).
\end{eqnarray}
for $ \beta \in [0,1]$.

Now, substituting the upper bound on \( p_{\error}(C|E)_{\bar{\rho}(s)} \) into the integral representation of entropy gives:
\begin{eqnarray}
    H(C|E)_{\rho} \leq Q_{\alpha}({\omega}_0 \| {\omega}_1) \left( \int_{0}^{1} \frac{\dd s}{2 s \ln(2)}   s^{\alpha} (1 - s)^{1 - \alpha}   \right) + Q_{\beta}({\omega}_0 || {\omega}_1) \left( \int_{0}^{1} \frac{\dd s}{2 s \ln(2)} s^{1- \beta} (1 - s)^{\beta}\right)  \\.
\end{eqnarray}

The integral evaluates to:
\begin{eqnarray}
    \int_0^1 \frac{\dd s}{s \ln(2)} s^{\alpha} (1 - s)^{1 - \alpha} = \frac{\pi}{\sin(\pi \alpha)} \frac{1 - \alpha}{\ln(2)},
\end{eqnarray}
which gives  
\begin{eqnarray}
    H(C|E)_{\rho} \leq \frac{\pi}{2 \ln(2)}\left(  Q_{\alpha}({\omega}_0 \| {\omega}_1) \frac{1 - \alpha}{\sin(\pi \alpha)} + Q_{\beta}({\omega}_0 || {\omega}_1) \frac{\beta}{\sin(\pi \beta)} \right)   \\.
\end{eqnarray}

To relate this to the quantum Chernoff bound, we set \( \alpha =  \beta =  \alpha^* \) such that
\[
    Q_{\alpha^*}({\omega}_0 \| {\omega}_1) = \inf_{\alpha \in [0,1]} Q_{\alpha}({\omega}_0 \| {\omega}_1).
\]
We get that 
\begin{eqnarray}
     H(C|E)_{\rho} \leq \frac{\pi}{2\ln(2)} \frac{Q_{\alpha^*}({\omega}_0 || {\omega}_1)}{\sin(\pi  \alpha^*)}   \\.
\end{eqnarray}
Furthermore, we can get such a bound in terms of any $\alpha$. 
\end{proof}
\section{Application of the integral representation to the protocol}\label{app: integral representation QKD}

For the sake of the advantage distillation protocol, we can use the Theorem \ref{thm: CQ_integral_representation} to derive the necessary and sufficient condition to obtain a positive key rate in terms of the integral representation of the entropy. 

Before we proceed to providing the expression in terms of the integral representation, we note that that at first glance, the key rate appears to depend on the message string $\mathbf{m}$ (see Theorem \ref{thm: Tan_fidelity}). However, in lines with what is observed in \cite{stasiuk2022quantum}, we also remark that it suffices to chose the specific message string $\B{m} = (0 , 0 , \cdots, 0)$.

\begin{lemma}\label{corr: neccessary_and_sufficiant_condition}
A key can be derived from the advantage distillation protocol with block length $n$ if and only if
\begin{equation}
 \frac{1}{h_{2}(\delta_{n})} \int_{0}^{1} \frac{\dd s}{2 \ln(2) s} \left[ p_{\error}(C|\B{ET})_{\bar{\rho}(s)}  + p_{\error}(C|\B{ET})_{\bar{\rho}(1 -s)} \right] > 1, 
\end{equation}
where 
\[
\delta_{n} = \frac{\epsilon^{n}}{(1 - \epsilon)^{n} + \epsilon^{n}},
\]
and the cq state $\bar{\rho}(s)$ is given by $\bar{\rho}(s) := s \proj{0}_{C} \ot \omega_{0}^{(n)} + (1 -s)\proj{1}_{C} \ot \omega_1^{(n)},$  with
\begin{align}
 \omega_{0}^{(n)} &=  (1 - \delta_{n}) \rho_{ET|00}^{\ot n} + \delta_{n} \rho_{ET|01}^{\ot n},   \\ 
 \omega_{1}^{(n)} &=    (1 - \delta_{n}) \rho_{ET|11}^{\ot n } + \delta_{n} \rho_{ET|10}^{\ot n}.
\end{align}
Here, $\rho_{ET|ij}$ is defined as in \eqref{eqn: post-measurement_state}.
\end{lemma}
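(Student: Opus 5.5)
The proof combines Theorem~\ref{thm: Tan_fidelity} with Theorem~\ref{thm: CQ_integral_representation}. Theorem~\ref{thm: Tan_fidelity} says that a key can be derived iff $H(C|\B{ETM})_\rho/h_2(\delta_n) > 1$, where $\rho=\sum_{\B m}p(\B m)\proj{\B m}\ot\rho_{C\B{ET},\B M=\B m}$. Since $\B M$ is a classical register, the conditional entropy decomposes as
\[
H(C|\B{ETM})_\rho=\sum_{\B m}p(\B m)\,H(C|\B{ET})_{\rho_{C\B{ET},\B M=\B m}} .
\]
The first step is therefore to show that $H(C|\B{ET})_{\rho_{C\B{ET},\B M=\B m}}$ does not depend on $\B m$. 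The sum then collapses to the single term $\B m=(0,\dots,0)$, as remarked before the lemma and in line with \cite{stasiuk2022quantum}.

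For the independence claim we use the symmetrization by the random bits $T_i$. By \eqref{eqn: post-measurement_state}, flipping both $a$ and $b$ in a single round changes $\rho_{ET|ab}$ only by relabelling the register $T$:
\[
\rho_{ET|\bar a\bar b}=(\id_E\ot X_T)\,\rho_{ET|ab}\,(\id_E\ot X_T).
\]
Take any message $\B m$ and conjugate round $i$ by $X_{T_i}$ exactly when $m_i=1$. This local unitary on $\B{ET}$ maps every conditional state $\rho_{\B{ET}|\B m\B m}$, $\rho_{\B{ET}|\B m\bar{\B m}}$, etc.\ to the corresponding state with $\B m=\B 0$. It also maps the pair $(\omega_0^{(\B m)},\omega_1^{(\B m)})$ to $(\omega_0^{(\B 0)},\omega_1^{(\B 0)})$. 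Conditional entropy is invariant under unitaries acting on the conditioning system, so the claim follows.

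At $\B m=\B 0$ the rounds are i.i.d., so $\rho_{\B{ET}|\B 0\B 0}=\rho_{ET|00}^{\ot n}$, $\rho_{\B{ET}|\B 0\B 1}=\rho_{ET|01}^{\ot n}$, and so on. This gives the stated $\omega_0^{(n)}$ and $\omega_1^{(n)}$, with weights $(1-\delta_n)$ and $\delta_n$. Here $\delta_n$ is the conditional probability, given acceptance, that Bob's decoded bit is wrong: all $n$ rounds must be flipped, which happens with probability $\epsilon^n$, against $(1-\epsilon)^n$ for all rounds correct. The prefactor $\tfrac12$ in Theorem~\ref{thm: Tan_fidelity} should be read as the uniform prior on $C$, so each $\omega_c^{(n)}$ is a normalized state. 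The cq state $\rho_{C\B{ET}}=\tfrac12\sum_c\proj c\ot\omega_c^{(n)}$ has uniform prior, so the special case \eqref{eqn: integral_representation_of_entropy} of Theorem~\ref{thm: CQ_integral_representation} applies directly. It rewrites $H(C|\B{ET})$ as the stated integral of $p_{\error}(C|\B{ET})_{\bar\rho(s)}+p_{\error}(C|\B{ET})_{\bar\rho(1-s)}$. Substituting into the criterion of Theorem~\ref{thm: Tan_fidelity} proves the equivalence.

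The main obstacle is the bookkeeping in the independence step. One must check that the $T$-flip unitary simultaneously maps the ``correct'' branch $\rho_{\B{ET}|\B m\B m}$ and the ``error'' branch $\rho_{\B{ET}|\B m\bar{\B m}}$ to their $\B m=\B 0$ counterparts, and that this preserves the labelling $C=0$ versus $C=1$. One must also check that the $\delta_n$ weights are the same for every $\B m$. Because the statistics $p_{\tilde A\tilde B}$ are symmetric, the acceptance probabilities do not depend on $\B m$, and so the weights do not either.
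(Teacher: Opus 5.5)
Your proof is correct and follows essentially the same route as the paper. You decompose $H(C|\B{ETM})$ over the classical message, reduce every $\B m$ to $\B m=\B 0$ via the round-wise unitary $\bigotimes_i(\id_E\ot X_{T_i})^{m_i}$, and substitute the uniform-prior integral representation into the criterion of Theorem~\ref{thm: Tan_fidelity}. The differences are cosmetic: you use unitary invariance of the conditional entropy rather than of the trace norms inside each $p_{\error}$, and you check directly the relabelling identity $\rho_{ET|\bar a\bar b}=(\id_E\ot X_T)\rho_{ET|ab}(\id_E\ot X_T)$ and the $\B m$-independence of the weights $\delta_n$, which the paper imports from \cite{stasiuk2022quantum}.
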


\begin{proof}
We first begin by using the chain rule of the conditional von-Neumann entropy to get 
\begin{eqnarray}
H(C|\B{E T M})_{\rho} = \sum_{\B{m}} p(\B{m}) H(C|\B{ET,M}=\B{m})_{\rho_{C\B{ET},\B{M}=m}}
\end{eqnarray}
Using Theorems \ref{thm: Tan_fidelity} and \ref{thm: CQ_integral_representation}, we know that a key can be derived from the advantage distillation protocol with block length $n$ if 
\begin{equation}
\sum_{\B{m}} \frac{p(\B{m})}{h_{2}(\delta_{n})} \int_{0}^{1} \frac{\dd s}{2 \ln(2) s} \left[ p_{\error}(C|\B{ET})_{\tilde{\rho}(s)} + p_{\error}(C|\B{ET})_{\tilde{\rho}(1 -s) } \right] > 1.
\end{equation}
Here, $\tilde{\rho}(s)$ is the cq state 
\[
\tilde{\rho}(s):=   s  \proj{0}_{C} \ot \left( (1 - \delta_{n}) \rho_{\B{ET} | \B{mm}}  + \delta_{n} \rho_{\B{ET|m \bar{m} } } \right)  +  (1 -s ) \proj{1}_{C} \ot  \left( (1 - \delta_{n}) \rho_{\B{ET} | \B{\bar{mm}}}  + \delta_{n} \rho_{\B{ET|  \bar{m} m} } \right)  .
\]
Note that this result still depends upon the specific message string $\B{m}$. Therefore, to complete the proof, it we show that that value $p_{\error}(C|E)_{\tilde{\rho}(s)}$ remains the same across all message string $\B{m}$. 
To prove this, we follow \cite{stasiuk2022quantum} to show that $p_{\error}(C|\B{ET})_{\tilde{\rho}(s)} = p_{\error}(C|\B{ET})_{\bar{\rho}(s)}$, where 
\begin{eqnarray}
    \bar{\rho}(s) := s \proj{0}_{C} \ot  \left( (1 - \delta_{n}) \rho_{ET|00}^{\ot n} +  \delta_{n} \rho_{ET|01}^{\ot n}\right)  +  (1 -s) \proj{1}_{C} \ot   \left( (1 - \delta_{n}) \rho_{ET|11}^{\ot n} +  \delta_{n} \rho_{ET|10}^{\ot n}\right).  
\end{eqnarray}
By definition, $\rho_{\mathbf{ET}|\mathbf{m}\mathbf{m}}$ is given by
\[
\rho_{\mathbf{ET}|\mathbf{m}\mathbf{m}} = \rho_{ET|m_1 m_1} \otimes \rho_{ET|m_2 m_2} \otimes \cdots \otimes \rho_{ET|m_n m_n},
\]
where $\rho_{ET|ab}$ is defined in \eqref{eqn: post-measurement_state}. Following \cite{stasiuk2022quantum}, we introduce the unitary transformation
\[
U^{(\mathbf{m})} =  \bigotimes_{i = 1}^{n} (\id_{E} \otimes X)^{m_i},
\]
where $X$ is the Pauli-$X$ operator. Applying this unitary, we obtain
\begin{align*}
U^{(\mathbf{m})} \rho_{\mathbf{ET}| {\mathbf{m}}  {\mathbf{m}}} (U^{(\mathbf{m})})^{\dagger} &= \rho^{\otimes n}_{ET|00}, \\ 
U^{(\mathbf{m})} \rho_{\mathbf{ET}|\bar{\mathbf{m}} \bar{\mathbf{m}}} (U^{(\mathbf{m})})^{\dagger} &= \rho^{\otimes n}_{ET|11}, \\ 
U^{(\mathbf{m})} \rho_{\mathbf{ET}|\mathbf{m} \bar{\mathbf{m}}} (U^{(\mathbf{m})})^{\dagger} &= \rho^{\otimes n}_{ET|01}, \\ 
U^{(\mathbf{m})} \rho_{\mathbf{ET}|\bar{\mathbf{m}} \mathbf{m}} (U^{(\mathbf{m})})^{\dagger} &= \rho^{\otimes n}_{ET|10}.
\end{align*}
This transformation also maps the states
\begin{align*}
    U^{(\mathbf{m})} (  (1 - \delta_{n}) \rho_{\B{ET} | \B{mm}}  + \delta_{n} \rho_{\B{ET|m \bar{m} } })  (U^{(\mathbf{m})})^{\dagger} &= (1 - \delta_n) \rho^{\otimes n}_{ET|00} + \delta_{n} \rho^{\otimes n}_{ET|01}, \\
    U^{(\mathbf{m})} \left((1 - \delta_{n}) \rho_{\B{ET} | \B{\bar{m} \bar{m}}}  + \delta_{n} \rho_{\B{ET|  \bar{m} m} }) \right) (U^{(\mathbf{m})})^{\dagger} &= (1 - \delta_{n}) \rho^{\otimes n}_{ET|11} + \delta_{n} \rho^{\otimes n}_{ET|10}.
\end{align*}

To prove the lemma, we observe that $p_{\error}(\M{E}_1 (s))$ and $p_{\error}(\M{E}(s))$ depend only on the trace norms
\[
    \left\| s \left( (1 - \delta_{n}) \rho_{\B{ET} | \B{mm}}  + \delta_{n} \rho_{\B{ET|m \bar{m} }} \right)  
    - (1 - s) \left( (1 - \delta_{n}) \rho_{\B{ET} | \B{\bar{mm}}}  + \delta_{n} \rho_{\B{ET|  \bar{m} m} } \right) \right\|_1,
\]
and
\[
    \left\| s \left( (1 - \delta_{n}) \rho_{ET | 00}^{\ot n}  + \delta_{n} \rho_{ET|01}^{\ot n} \right)  
    - (1 - s) \left( (1 - \delta_{n}) \rho_{ET | 11}^{\ot n}  + \delta_{n} \rho_{ET|10}^{\ot n} \right) \right\|_1.
\]
Since the trace norm is invariant under unitary transformations, it follows that
\[
p_{\error}(C|\B{ET})_{\tilde{\rho}(s) } = p_{\error}(C|\B{ET})_{\tilde{\rho}(s)}.
\]

Finally, if the message string is chosen as $\mathbf{m} = (0, 0, 0, \dots, 0)$, the post-measurement states are exactly
\[
(1 - \delta_{n}) \rho_{ET | 00}^{\ot n}  + \delta_{n} \rho_{ET|01}^{\ot n}, \quad (1 - \delta_{n}) \rho_{ET | 11}^{\ot n}  + \delta_{n} \rho_{ET|10}^{\ot n}.
\]
Substituting these into the integral condition completes the proof. 
\end{proof}

For the remainder of the paper, we adopt the notation $ET \equiv E$ for convenience.  

The above result holds for all $n \in \mathbb{N}$. However, if we are interested in the key rate in the limit $n \rightarrow \infty$, we get a simpler condition as a corollary: 

\begin{corollary}\label{corr: asymptotic_neccesary_and_sufficient}
The necessary and sufficient condition to derive positive key rate in the limit $n \rightarrow \infty$ is 
\begin{equation}
    \lim_{n \rightarrow \infty}\frac{1}{ n \left(\frac{\epsilon}{1 - \epsilon}\right)^n}\int_{0}^{1} \frac{\dd s}{s \ln(2)} \left( p_{\error}(C|E)_{\bar{\rho}(s)} + p_{\error}(C|E)_{\bar{\rho}(1 -s)}\right) > 0.
\end{equation}
\end{corollary}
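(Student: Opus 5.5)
Starting from Lemma~\ref{corr: neccessary_and_sufficiant_condition}, positivity of the key rate at block length $n$ is equivalent to $\mathscr{R}_n := I_n / h_2(\delta_n) > 1$. Here
\[
I_n := \int_0^1 \frac{\dd s}{2\ln(2)\,s}\left(p_{\error}(C|E)_{\bar\rho(s)} + p_{\error}(C|E)_{\bar\rho(1-s)}\right).
\]
I would read the corollary's asymptotic condition as the statement that $\lim_{n\to\infty}\mathscr{R}_n > 1$, rewritten by replacing $h_2(\delta_n)$ with its leading-order asymptotics. The proof therefore has two steps: an asymptotic expansion of the denominator, and a comparison of the resulting normalisations.

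\textbf{Step 1: asymptotics of $h_2(\delta_n)$.} Since $\epsilon<1/2$, write $\beta_\epsilon=\epsilon/(1-\epsilon)<1$. Then
\[
\delta_n = \frac{\beta_\epsilon^n}{1+\beta_\epsilon^n} = \beta_\epsilon^n\bigl(1+O(\beta_\epsilon^n)\bigr) \to 0 .
\]
Use $h_2(x) = -x\log x - (1-x)\log(1-x)$ and the fact that $-(1-x)\log(1-x) = x/\ln 2 + O(x^2)$. This gives
\[
h_2(\delta_n) = \delta_n\log(1/\delta_n) + O(\delta_n) = n\,\beta_\epsilon^n \log(1/\beta_\epsilon)\bigl(1+O(1/n)\bigr).
\]
Hence $h_2(\delta_n)\sim c_\epsilon\, n\,\beta_\epsilon^n$, with the constant $c_\epsilon=\log(1/\beta_\epsilon)>0$ independent of $n$.

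\textbf{Step 2: compare the two normalisations.} Step 1 gives
\[
\mathscr{R}_n = \frac{I_n}{c_\epsilon\, n\beta_\epsilon^n}\,\bigl(1+o(1)\bigr).
\]
The integral in the corollary equals $2I_n$. So the quantity displayed in the corollary is, up to the fixed positive factor $2c_\epsilon$, exactly $\lim_n \mathscr{R}_n$. Positivity of one limit is then equivalent to positivity of the other. Any threshold such as $\mathscr{R}_n>1$ becomes a strict inequality on the rescaled limit with an $\epsilon$-dependent constant; the corollary absorbs this into the normalisation. I would state this equivalence carefully and track the constant $\log(1/\beta_\epsilon)$ explicitly.

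\textbf{Main obstacle.} The delicate point is that both $I_n$ and $h_2(\delta_n)$ vanish as $n\to\infty$. The existence of the limit therefore has to be justified, or the statement has to be read with $\liminf$ or $\limsup$. To handle this, I would first bound $p_{\error}(\bar\rho(s))$ using the mixture structure of $\omega_i^{(n)}$ from Lemma~\ref{corr: neccessary_and_sufficiant_condition}, together with the Chernoff-type upper bounds from Appendix~\ref{app: upper and lower bounds}. This shows that $I_n$ decays at most like $\max\{Q_{00;11},\beta_\epsilon\}^n$, up to polynomial factors, so that the ratio stays bounded. I would then interpret the limit in the $\liminf$ sense, which is what the subsequent asymptotic analysis uses.
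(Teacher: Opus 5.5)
You follow the paper's route: combine the finite-$n$ criterion $\mathscr{R}_n=I_n/h_2(\delta_n)>1$ from Lemma~\ref{corr: neccessary_and_sufficiant_condition} with the asymptotics of $h_2(\delta_n)$. Your Step~1 is correct; it is Lemma~\ref{lemm: ratio}, with a rate.

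The gap is in Step~2, where you say the threshold $1$ is ``absorbed into the normalisation.'' It is not. The integral in the corollary equals $2I_n$, so Step~1 gives
\[
\frac{1}{n\beta_\epsilon^n}\int_0^1\frac{\dd s}{s\ln 2}\bigl(p_{\error}(C|E)_{\bar\rho(s)}+p_{\error}(C|E)_{\bar\rho(1-s)}\bigr)=2\log_2(1/\beta_\epsilon)\,\mathscr{R}_n\,\bigl(1+o(1)\bigr).
\]
Hence $\lim_n\mathscr{R}_n>1$ translates into the displayed limit being $>2\log_2\frac{1-\epsilon}{\epsilon}$, not $>0$. A positive rescaling moves the threshold; it cannot send it to $0$. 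Your remark that positivity of one limit is equivalent to positivity of the other is true but beside the point, because the key criterion is $\mathscr{R}_n>1$, not $\mathscr{R}_n>0$.

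In fact the ``$>0$'' condition as displayed is not equivalent to key generation. Take $\rho_{E|00}=\rho_{E|10}=\proj{0}$ and $\rho_{E|11}=\rho_{E|01}=\proj{1}$, so Eve effectively holds Bob's bit. Then $\omega_0^{(n)},\omega_1^{(n)}$ commute and $H(C|E)=h_2(\delta_n)$ exactly. So $\mathscr{R}_n=1$ for every $n$ and no key is ever produced. Yet the corollary's quantity equals $2h_2(\delta_n)/(n\beta_\epsilon^n)\to 2\log_2\frac{1-\epsilon}{\epsilon}>0$.

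What your argument, and the paper's one-line proof via Lemma~\ref{lemm: ratio}, actually establishes is the form that keeps the logarithmic factor and threshold one:
\[
\lim_n \frac{1}{n\beta_\epsilon^n\log_2(1/\beta_\epsilon)}\int_0^1\frac{\dd s}{2s\ln 2}\bigl(p_{\error}(C|E)_{\bar\rho(s)}+p_{\error}(C|E)_{\bar\rho(1-s)}\bigr)>1 .
\]
That is the statement you should state and prove.

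Two smaller remarks. First, existence of the limit needs no Chernoff-type control of $I_n$. Since $h_2(\delta_n)/(n\beta_\epsilon^n\log_2(1/\beta_\epsilon))\to1$, the $\liminf$ and $\limsup$ of $\mathscr{R}_n$ equal those of the rescaled quantity divided by $2\log_2(1/\beta_\epsilon)$. Second, even the corrected statement is an equivalence only away from the boundary case. A positive key for all large $n$ yields only $\liminf_n\mathscr{R}_n\ge 1$, and the paper itself notes that $\mathscr{R}_n$ may approach $1$ from above.
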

\begin{proof}
The proof is immediate from Lemma \ref{corr: neccessary_and_sufficiant_condition} and the observation (See Lemma \ref{lemm: ratio} in appendix \ref{app: maths} for a proof) that 
\begin{eqnarray}
 \lim_{n \rightarrow \infty} \frac{h_2(\delta_{n})}{-n\log\left(\frac{\epsilon}{1- \epsilon} \right) \cdot \left(\frac{\epsilon}{1 - \epsilon} \right)^n } =1.
\end{eqnarray}
\end{proof}
\section{Necessary and sufficient condition to derive a key in the asymptotic limit}\label{app: asymptotic n_and_s}
We are now in a position to prove one of the main results of this work: a necessary and sufficient condition for obtaining a positive key rate in the asymptotic limit of the advantage distillation protocol.

We are now in a position to prove one of the main results of this work. 
We emphasize that the asymptotic positivity of the key rate refers to the ratio condition in Corollary~\ref{corr: asymptotic_neccesary_and_sufficient} being strictly greater than one in the limit of large \( n \).  Note, however, that the key rate itself tends to zero as \( n \rightarrow \infty \). 

We first derive the result for the general case where the shared states could be mixed. Subsequently, we extend the analysis to the case where Alice and Bob share a qubit pair and perform projective measurements. In this special case, we obtain a significantly stronger result than in the general mixed-state scenario.

\subsection{General result}\label{app: asymptotic n_and_s_general}
\begin{lemma}
A necessary and sufficient condition to obtain  positive key rate in the asymptotic limit is
\begin{equation}
  Q (\rho_{E|00} , \rho_{E|11}) > \frac{\epsilon}{1 - \epsilon}.
\end{equation}
\end{lemma}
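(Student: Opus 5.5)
We plan to prove the two directions separately, working with the ratio $\mathscr{R}_n = H(C|E)_\rho/h_2(\delta_n)$ from Lemma~\ref{corr: neccessary_and_sufficiant_condition}. Throughout, write $Q:=Q(\rho_{E|00},\rho_{E|11})$ and $\beta:=\beta_\epsilon=\epsilon/(1-\epsilon)$. By Lemma~\ref{lemm: ratio} (as used in Corollary~\ref{corr: asymptotic_neccesary_and_sufficient}), $h_2(\delta_n)\sim n\log(1/\beta)\,\beta^n$. It therefore suffices to compare $H(C|E)_\rho$ with $n\beta^n$. The plan is to show that $H(C|E)_\rho$ decays like $Q^{n+o(n)}$ up to an additive $O(\beta^n)$ term.

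\emph{Sufficiency ($Q>\beta$).} We use the second bound of Lemma~\ref{lemm: known_lower_bounds_using_the_integral_representation}, namely $H(C|E)_\rho\ge 2\,p_{\error}(C|E)_{\bar\rho(1/2)}$. Since the trace norm is convex, $\|\omega_0^{(n)}-\omega_1^{(n)}\|_1 \le (1-\delta_n)\|\rho_{E|00}^{\ot n}-\rho_{E|11}^{\ot n}\|_1+\delta_n\|\rho_{E|01}^{\ot n}-\rho_{E|10}^{\ot n}\|_1$. Hence
\[
p_{\error}(\mathcal{E}^{(n)}(1/2))\ \ge\ (1-\delta_n)\,p_{\error}\bigl(\{\tfrac12\rho_{E|00}^{\ot n},\tfrac12\rho_{E|11}^{\ot n}\}\bigr).
\]
By the converse part of the quantum Chernoff theorem \cite{nussbaum2009chernoff,audenaert2007discriminating}, the right-hand side equals $Q^{\,n+o(n)}$. (Alternatively, the paper's method-of-types bound gives $\ge Q^n/\mathrm{poly}(n)$.) Consequently $\mathscr{R}_n \gtrsim (Q/\beta)^{n}\,n^{-O(1)}\to\infty$, so $\mathscr{R}_n>1$ for all large $n$.

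\emph{Necessity ($Q<\beta$ implies no key).} This is the main step. We upper bound $p_{\error}(\mathcal{E}^{(n)}(s))$ pointwise in $s$ and integrate via Theorem~\ref{thm: CQ_integral_representation}. By the triangle inequality,
\[
\|s\omega_0^{(n)}-(1-s)\omega_1^{(n)}\|_1\ \ge\ (1-\delta_n)\|s\rho_{00}^{\ot n}-(1-s)\rho_{11}^{\ot n}\|_1-\delta_n\|s\rho_{01}^{\ot n}-(1-s)\rho_{10}^{\ot n}\|_1 .
\]
Using $\|\cdot\|_1\le 1$ on the last term, this gives
\[
p_{\error}(\mathcal{E}^{(n)}(s))\le (1-\delta_n)\,p_{\error}(\mathcal{E}^{(n)}_0(s))+\delta_n\,c(s),
\]
where $c(s)=O(\min\{s,1-s\})$ after a slightly more careful bookkeeping, for instance by bounding the error of any two-state ensemble by its smaller weight. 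For the first term we apply the Audenaert inequality \eqref{eqn: operator_ineqality} together with the multiplicativity $Q_\alpha(\tau^{\ot n}\|\sigma^{\ot n})=Q_\alpha(\tau\|\sigma)^n$. Taking $\alpha=\alpha^*$, the Chernoff minimiser, this yields $p_{\error}(\mathcal{E}^{(n)}_0(s))\le s^{\alpha^*}(1-s)^{1-\alpha^*}Q^n$. Integrating exactly as in the upper-bound lemma of Appendix~\ref{app: upper and lower bounds} gives
\[
H(C|E)_\rho\ \le\ \frac{\pi}{2\ln 2}\,\frac{Q^n}{\sin(\pi\alpha^*)}+O(\delta_n),
\]
since $\int_0^1\min\{s,1-s\}\,\dd s/s=\ln 2$.

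Dividing by $h_2(\delta_n)\sim n\log(1/\beta)\beta^n$ and using $\delta_n\le\beta^n$, both terms go to $0$: the first because $Q<\beta$, the second because of the extra factor $1/n$. Hence $\lim_n\mathscr{R}_n=0\le 1$, and by Corollary~\ref{corr: asymptotic_neccesary_and_sufficient} no positive asymptotic key is possible.

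The delicate points are twofold. First, one must make sure $\sin(\pi\alpha^*)\neq 0$. If the infimum were attained only at $\alpha\in\{0,1\}$, then $Q$ would equal $1$ or $\tr(\Pi_{\rho_{00}}\rho_{11})$-type quantities; we handle this by noting that $Q<\beta<1$ forces the minimiser to be achievable, or approximable, by some interior $\alpha$ with $Q_\alpha<\beta$, which suffices. Second, one must control the $\delta_n$ cross term sharply enough that it stays $o(n\beta^n)$. The boundary case $Q=\beta$ is left to the limiting statement.
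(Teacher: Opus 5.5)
Your sufficiency direction is correct; the paper does not reprove it and simply cites Stasiuk et al. The skeleton of your necessity direction is also the paper's: the triangle inequality to isolate $s\rho_{E|00}^{\ot n}-(1-s)\rho_{E|11}^{\ot n}$, the bound $s^{\eta}(1-s)^{1-\eta}Q_\eta^n$, integration against $\dd s/s$, and comparison with $h_2(\delta_n)$.

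\textbf{The gap is the cross term.} Your triangle inequality gives $p_{\error}(\mathcal{E}^{(n)}(s))\le(1-\delta_n)\,p_{\error}(\mathcal{E}^{(n)}_0(s))+\delta_n c(s)$ with $c(s)=\tfrac12\bigl(1+\|s\rho_{E|01}^{\ot n}-(1-s)\rho_{E|10}^{\ot n}\|_1\bigr)=1-p_{\error}(\mathcal{E}^{(n)}_1(s))$. This is a \emph{guessing} probability, and it tends to $1$ as $s\to0$ or $s\to1$. Bounding the error of $\mathcal{E}^{(n)}_1(s)$ by its smaller weight does not touch this term. With $c(s)\approx1$ the integral $\int_0\delta_n\,\dd s/s$ diverges. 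The only repair is to cap with $p_{\error}(\mathcal{E}^{(n)}(s))\le\min\{s,1-s\}$ and split the integral at $s=\delta_n$, which is exactly what the paper does. The remainder is then $\frac{\delta_n}{2\ln 2}-\frac{\delta_n\log_2\delta_n}{2}$ for each of the two halves of the integrand. Since $-\delta_n\log_2\delta_n/h_2(\delta_n)\to1$ (Lemma~\ref{lemm: ratio}), this remainder is of the same order as $h_2(\delta_n)$, not $O(\delta_n)=o(n\beta_\epsilon^n)$. Its contribution to the bound on $\mathscr{R}_n$ tends to $1$. So the repaired argument gives $\limsup_n\mathscr{R}_n\le1$, which is precisely the paper's conclusion, and not $\lim_n\mathscr{R}_n=0$.

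\textbf{Your stronger claim is false in general}, so sharper bookkeeping cannot rescue it. Let Eve hold a copy of Bob's bit: $\rho_{E|00}\perp\rho_{E|11}$, $\rho_{E|01}=\rho_{E|11}$, $\rho_{E|10}=\rho_{E|00}$. Then $Q(\rho_{E|00},\rho_{E|11})=0<\beta_\epsilon$. But $\omega_0^{(n)}$ and $\omega_1^{(n)}$ are mixtures of the same two orthogonal states with swapped weights $(1-\delta_n,\delta_n)$. Hence $H(C|E)_\rho=h_2(\delta_n)$ and $\mathscr{R}_n=1$ for every $n$. Concretely, at $s=2\delta_n$ one has $p_{\error}(\mathcal{E}^{(n)}_0(s))=0$ while $p_{\error}(\mathcal{E}^{(n)}(s))=\delta_n$, which contradicts $c(s)=O(\min\{s,1-s\})$.

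You should therefore drop $\lim_n\mathscr{R}_n=0$ and settle for $\limsup_n\mathscr{R}_n\le1$. That establishes the lemma only in the sense the paper adopts, namely that the ratio does not exceed one in the limit. As the paper itself remarks, this does not exclude $\mathscr{R}_n>1$ at every finite $n$.
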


\begin{proof}
It suffices to prove the necessary condition. We will show that if 
\[
\inf_{\alpha \in [0,1]} Q_{\alpha}(\rho_{E|00} \,\|\, \rho_{E|11}) < \frac{\epsilon}{1 - \epsilon},
\]
then a positive key rate cannot be achieved in the asymptotic limit.

The proof is based on the following observation: for any \( s \in [0,1] \),
\begin{equation}
p_{\error}(C|E)_{\bar{\rho}(s)} \leq s.
\end{equation}
This holds because
\begin{align}
p_{\error}(C|E)_{\bar{\rho}(s)} &= \inf_{\Lambda \succeq 0} \left( s \, \mathrm{Tr}[\Lambda \omega_0^{(n)}] + (1-s) \, \mathrm{Tr}[(\mathbb{I} - \Lambda) \omega_1^{(n)}] \right).
\end{align}
By choosing \(\Lambda = \mathbb{I}\), we obtain the upper bound \(s\).

Next, observe that 
\begin{align}
p_{\error}(C|E)_{\bar{\rho}(s)} 
= \frac{1}{2}\left( 1 - \left\| (1 - \delta_n) \Delta_0(s) + \delta_n \Delta_1(s) \right\|_1 \right),
\end{align}
where
\begin{align}
\Delta_0(s) &= s \, \rho_{E|00}^{\otimes n} - (1-s) \, \rho_{E|11}^{\otimes n}, \\
\Delta_1(s) &= s \, \rho_{E|01}^{\otimes n} - (1-s) \, \rho_{E|10}^{\otimes n}.
\end{align}
From this, we obtain the bound
\begin{align}
p_{\error}(C|E)_{\bar{\rho}(s)} 
&\leq \frac{1}{2}\left( 1 - \| \Delta_0(s) \|_1 + \delta_n \, \| \Delta_0(s) - \Delta_1(s) \|_1 \right).
\end{align}
Using the triangle inequality and the fact that \(\| \Delta_0(s) \|_1, \| \Delta_1(s) \|_1 \leq 1\), we have
\[
\| \Delta_0(s) - \Delta_1(s) \|_1 \leq \| \Delta_0(s) \|_1 + \| \Delta_1(s) \|_1 \leq 2.
\]
Thus,
\begin{align}
p_{\error}(C|E)_{\bar{\rho}(s)} 
\leq \frac{1}{2}\left( 1 - \| \Delta_0(s) \|_1 + 2 \delta_n \right).
\end{align}

Now we apply the quantum Hoeffding bound: for any \(\eta \in (0,1)\),
\begin{align}
\| \Delta_0(s) \|_1 \geq 1 - 2 s^{\eta} (1-s)^{1-\eta} Q_{\eta}(\rho_{E|00} \,\|\, \rho_{E|11})^n.
\end{align}
Thus,
\begin{align}
p_{\error}(C|E)_{\bar{\rho}(s)} 
&\leq  s^{\eta} (1-s)^{1-\eta} Q_{\eta}(\rho_{E|00} \,\|\, \rho_{E|11})^n + \delta_n.
\end{align}

We now analyze the following integral appearing in the expression for \(H(C|E)\):
\begin{align}
I_n := \int_0^1 \frac{\mathrm{d}s}{2s \ln 2} \, p_{\error}(C|E)_{\bar{\rho}(s)}.
\end{align}
Splitting the integral as
\begin{align}
I_n &= \int_0^{\delta_n} \frac{\mathrm{d}s}{2 s \ln 2} \, p_{\error}(C|E)_{\bar{\rho}(s)} + \int_{\delta_n}^1 \frac{\mathrm{d}s}{2 s \ln 2} \, p_{\error}(C|E)_{\bar{\rho}(s)},
\end{align}
we obtain
\begin{align}
I_n \leq \frac{\delta_n}{2 \ln 2} - \frac{\delta_n \log_2 \delta_n}{2} 
+ Q_{\eta}(\rho_{E|00} \,\|\, \rho_{E|11})^n \int_{\delta_n}^1 \frac{\mathrm{d}s}{2 \ln 2} \, s^{\eta-1} (1-s)^{1-\eta}.
\end{align}

Thus, we find that
\begin{align}
\lim_{n \to \infty} \frac{I_n}{h_2(\delta_n)} 
&\leq \lim_{n \to \infty} \left( \frac{Q_{\eta}(\rho_{E|00} \,\|\, \rho_{E|11})}{\frac{\epsilon}{1-\epsilon}} \right)^n \cdot  {\frac{\left(\frac{\epsilon}{1-\epsilon}\right)^n}{h_2(\delta_n)} \cdot} \left( \int_{\delta_n}^1 \frac{\mathrm{d}s}{2 \ln 2} \, s^{\eta-1} (1-s)^{1-\eta} \right) \\
&\quad + \lim_{n \to \infty} \frac{\delta_n}{{2 \ln 2}h_2(\delta_n)} + \lim_{n \to \infty} \frac{- \delta_n \log_2 \delta_n}{2 h_2(\delta_n)}.
\end{align}
We use:
\begin{align*}
&\lim_{n \to \infty} \frac{\delta_n}{h_2(\delta_n)} = 0, \qquad 
\lim_{n \to \infty} \frac{- \delta_n \log_2 \delta_n}{h_2(\delta_n)} = 1
\end{align*}
and as
\begin{align*}
\lim_{n \to \infty} \frac{\frac{\left(\frac{\epsilon}{1-\epsilon}\right)^n}{h_2(\delta_n)}}{\frac{1}{n\log\left(\frac{\epsilon}{1-\epsilon}\right)}}=1,
\end{align*}
Convergence of the first term merely depends on the convergence of
\begin{align*}
\lim_{n \to \infty}\left( \frac{Q_{\eta}(\rho_{E|00} \,\|\, \rho_{E|11})}{\frac{\epsilon}{1 - \epsilon}} \right)^n.
\end{align*}

And under the assumption that 
\[
Q_{\eta}(\rho_{E|00} \,\|\, \rho_{E|11}) <  \frac{\epsilon}{1 - \epsilon},
\]
we have 
\[
\left( \frac{Q_{\eta}(\rho_{E|00} \,\|\, \rho_{E|11})}{\frac{\epsilon}{1 - \epsilon}} \right)^n \to 0.
\]
Thus,
\[
\lim_{n \to \infty} \frac{I_n}{h_2(\delta_n)} \leq \frac{1}{2}.
\]

A similar argument holds for the integral with \(p_{\error}(C|E)_{\bar{\rho}(1-s)}\), yielding
\[
\lim_{n \to \infty} \frac{1}{h_2(\delta_n)} \int_0^1 \frac{\mathrm{d}s}{2 s \ln 2} \left( p_{\error}(C|E)_{\bar{\rho}(s)} + p_{\error}(C|E)_{\bar{\rho}(1-s)} \right) \leq 1.
\]
Therefore,
\[
\lim_{n \to \infty} \left( H(C|E)_{\rho} - h_2(\delta_n) \right) \leq 0.
\]
This shows that no positive key rate can be achieved when 
\[
\inf_{\alpha \in [0,1]} Q_{\alpha}(\rho_{E|00} \,\|\, \rho_{E|11}) \leq \frac{\epsilon}{1 - \epsilon}.
\]
\end{proof}
\subsection{Results for Qubit Pairs}\label{app: asymptotic n_and_s_qubit}

We now turn to the case where Alice and Bob share a qubit pair and perform projective measurements. Our approach is to first develop the necessary tools and bounds in the most general setting, without assuming any special structure. Once such bounds are established, we adapt it to the qubit case.

\subsubsection{Bounding the error using multiple hypothesis testing} 
{In order to obtain upper bounds on the error in distinguishing our ensemble 
\(\{ s \,\omega_{0}^{(n)},\ (1 - s)\,\omega_{1}^{(n)} \}\), we adopt the following strategy. 
We first compute the error probability for the enlarged ensemble 
\(\{ s \,\omega_{0}^{(n)},\ (1 - s)\,\rho_{E|11}^{\otimes n},\ (1 - s)\,\rho_{E|10}^{\otimes n} \}\), 
which is obtained by ``de-mixing'' \(\omega_{1}^{(n)}\). This is because the error probability can only increase under such a ``de-mixing,'' 
since discriminating between three distinct states is strictly more difficult than 
discriminating between a single state and a mixture of the other two 
(see Lemma~\ref{lemm: min-entropy_monotonicity}). 
As we will see in later sections, this approach yields more tractable upper bounds in terms of Petz--Rényi quantities, 
which are essential for proving the upper bounds on the error probability in the qubit case.

We therefore begin by considering the conditional error probability of the following cq state: 
\begin{equation}\label{eqn: three_state_discrimination}
\rho'(s) :=  s(1-\delta_n)\proj{0} \otimes\,\rho_{E|00}^{\otimes n}  
+ s\delta_n\proj{1} \otimes \,\rho_{E|01}^{\otimes n}  
+ (1-s)\proj{2} \otimes \omega_1^{(n)}.
\end{equation}
}

We now state Theorem from \cite{li2016discriminating} that allows us to upper bound the error probability for multiple hypothesis testing problem: 
\begin{theorem}[Li \cite{li2016discriminating}]\label{thm: discriminating_multiple_hypothesis}
Let $ A_1 ,  A_{2} , \cdots , A_{r} \in \M{S}(\M{H})$ be non-negative matrices on a finite dimensional Hilbert spaces $\M{H}$. For all $1 \leq i \leq r$, let $A_{i } = \sum_{k = 1}^{T_{i}} \lambda_{ik} \Pi_{ik}$ be the spectral decomposition of $A_{i}$ and write $T = \max \{ T_1 , T_{2} , \cdots , T_{r} \}$. {Let $\rho = \sum_{c} p_{c} \proj{c}_{C} \ot A_{c}$ be the cq state corresponding to the ensemble $\{ p_{0} A_{0} , p_1 A_{1} ,  \cdots , p_{n} A_{n}  \}$}. Then,

\begin{eqnarray}
p_{\error}( {C|E})_{{\rho}} &\leq&  10 (r - 1)^2 T^2 \sum_{(i,j): i < j} \sum_{k,l} \min \{ p_i\lambda_{ik} , p_j\lambda_{jl} \} \tr \left( {\Pi}_{ik} {\Pi}_{jl} \right)   \\ 
&\leq& 10 (r - 1)^2 T^2 \sum_{(i,j): i < j} p_{i}^{\alpha_{ij}} p_{j}^{1 - \alpha_{ij}}  Q_{\alpha_{ij}}({A_i} ||{A_j} )
\end{eqnarray}
where $\alpha_{{ij}} \in {[}0 ,1{]}$ {and $p_i \geq 0$ with $\sum_i p_i=1$}.
\end{theorem}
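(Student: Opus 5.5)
The statement has two inequalities. I would prove the first with an explicit measurement and derive the second from it by an elementary estimate.

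\emph{Second inequality.} This one is elementary. For nonnegative $a,b$ and any $\alpha\in[0,1]$ we have $\min\{a,b\}\le a^{\alpha}b^{1-\alpha}$. Apply this with $a=p_i\lambda_{ik}$ and $b=p_j\lambda_{jl}$, sum over $k,l$, and use the spectral decompositions:
\[
\sum_{k,l}\lambda_{ik}^{\alpha}\lambda_{jl}^{1-\alpha}\tr(\Pi_{ik}\Pi_{jl})=\tr(A_i^{\alpha}A_j^{1-\alpha})=Q_{\alpha}(A_i\|A_j).
\]
This gives $p_i^{\alpha_{ij}}p_j^{1-\alpha_{ij}}Q_{\alpha_{ij}}(A_i\|A_j)$ term by term, with the exponent $\alpha_{ij}$ chosen independently for each pair $(i,j)$.

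\emph{First inequality: reduction to commuting operators.} I would follow the Nussbaum--Szko\l{}a philosophy of comparing the quantum problem with the classical distributions $p_i\lambda_{ik}\tr(\Pi_{ik}\Pi_{jl})$. The tool is the pinching inequality: if $\mathcal P_j$ denotes pinching in the eigenbasis of $A_j$, then $X\le T\,\mathcal P_j(X)$ for every $X\ge 0$. Since $\mathcal P_j(X)$ commutes with $A_j$, this lets me compare each $p_iA_i$ with $p_jA_j$ as if they commuted, at a multiplicative cost of $T$ per use. I expect to use it twice, once on each side of a pair, which accounts for the factor $T^2$.

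\emph{First inequality: the candidate measurement.} I would build a Holevo--Helstrom/Hayashi--Nagaoka type measurement. For $i<j$, let $S_{ij}$ be the projector onto the positive part of $p_i\mathcal P_j(A_i)-p_jA_j$ (in the pinched picture), and set $S_{ji}=\id-S_{ij}$. Let $\Lambda_i$ be a square-root normalisation of $\prod_{j\neq i}S_{ij}$, or more simply of $B_i:=\sum_{j\ne i}S_{ij}$ arranged so that the $\Lambda_i$ form a POVM.

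\emph{First inequality: bounding the error.} The Hayashi--Nagaoka inequality $\id-(S+R)^{-1/2}S(S+R)^{-1/2}\le 2(\id-S)+4R$ then bounds the total error by a sum over ordered pairs of "pairwise confusion" terms $\tr(p_iA_i(\id-S_{ij}))$. Each such term has at most $r-1$ competitors, and after symmetrising over $i\leftrightarrow j$ this produces the prefactor $(r-1)^2$. The constant 10 should come from collecting the $2$ and $4$ of Hayashi--Nagaoka together with the symmetrisation. It then remains to show that each pairwise confusion term is at most $T^2\sum_{k,l}\min\{p_i\lambda_{ik},p_j\lambda_{jl}\}\tr(\Pi_{ik}\Pi_{jl})$. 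I would do this by expanding in the joint eigenbasis of the pinched pair: $\tr(p_iA_i(\id-S_{ij}))+\tr(p_jA_j S_{ij})$ is at most the classical Bayes error between the Nussbaum--Szko\l{}a distributions, up to the two pinching factors of $T$.

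\emph{Main obstacle.} The hard step is this last pairwise bound together with its combination into a single POVM. The issue is that the Hayashi--Nagaoka combination introduces non-commuting sums of the $S_{ij}$, so the pinching comparison has to be carried through them without losing more than the factor $T^2$ and without losing more than $(r-1)^2$ in the combinatorics. If the square-root construction becomes unwieldy, I would try instead a sequential measurement that tests $i$ against each $j$ in turn. Its union-bound error is easier to control pairwise, although it may produce a worse constant than 10.
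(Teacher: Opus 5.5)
\textbf{Verdict.} Your second inequality is fine, but the first inequality is not established: the pairwise test you define fails, and the Hayashi--Nagaoka combination leaves uncontrolled cross terms.

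\textbf{The pairwise step.} Your derivation of the second inequality from the first is correct; it is the standard termwise bound $\min\{a,b\}\le a^{\alpha}b^{1-\alpha}$. The first inequality, however, is the whole content of Li's theorem, and the paper quotes it without proof. With the test you define, the pairwise step you flag as the main obstacle is actually false. Because $S_{ij}$ commutes with $A_j$, you have $\tr(A_i(\id-S_{ij}))=\tr(\mathcal{P}_j(A_i)(\id-S_{ij}))$ exactly. So pinching plays no role, and your pairwise confusion term is exactly the Bayes error of the commuting pair $p_i\mathcal{P}_j(A_i)$, $p_jA_j$. Compressing $A_i$ onto an eigenspace $\Pi_{jl}$ averages its eigenvalues (zero included). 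Since $x\mapsto\min\{p_ix,p_j\lambda_{jl}\}$ is concave, the resulting blockwise Bayes error is never smaller than $\sum_k\min\{p_i\lambda_{ik},p_j\lambda_{jl}\}\tr(\Pi_{ik}\Pi_{jl})$, and it can be arbitrarily larger. A concrete counterexample:
\begin{itemize}
\item Take $r=2$, $A_1=\proj{\psi}$, $A_2=\proj{\phi}$, with $|\braket{\psi}{\phi}|^2=F$ and $p_2<p_1F$. The right-hand side of the first inequality is at most $40\,p_2F$.
\item Pinching gives $p_1\mathcal{P}_2(A_1)-p_2A_2=(p_1F-p_2)\proj{\phi}+p_1(1-F)\proj{\chi}$, with $\chi\propto(\id-\proj{\phi})\psi$.
\item Hence $S_{12}$ projects onto $\mathrm{span}\{\phi,\chi\}\ni\psi$, and your measurement errs with probability exactly $p_2$. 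This exceeds $40p_2F$ once $F<1/40$.
\end{itemize}
The weight $\min\{p_i\lambda_{ik},p_j\lambda_{jl}\}$ requires every overlap to be resolved in favour of the larger eigenvalue. Which hypothesis should win changes from one pair of eigenspaces to the next, so no single one-sided pinching can achieve this.

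\textbf{The combination step.} This has a separate defect. First, $B_i=\sum_{j\neq i}S_{ij}$ need not satisfy $B_i\le\id$, which Hayashi--Nagaoka requires. More seriously, the term $4R$ contains $\tr(p_iA_iS_{kj})$ for pairs with $k,j\neq i$, that is, the weight of $A_i$ in the region where $k$ beats $j$. This is not a confusion term for $i$, and the right-hand side does not control it. For example, work in $\mathbb{C}^3$ with equal priors, and set
\[
A_i=\proj{\psi},\qquad A_k=\proj{\chi},\qquad A_j=(\proj{e_1}+\eta\proj{e_2})/(1+\eta),
\]
where $\psi=(e_2-e_3)/\sqrt{2}$ and $\chi=(e_2+e_3)/\sqrt{2}$. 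The right-hand side is $O(\eta)$. Yet whichever of $S_{jk}$, $S_{kj}$ your index convention uses has $\psi$ in its range, so your Hayashi--Nagaoka bound stays of order one as $\eta\to0$. Your accounting of the factor $(r-1)^2$ and the constant $10$ is therefore not an argument.

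\textbf{What is missing.} You need a measurement built from the eigenprojections of all the $p_iA_i$ \emph{jointly}. One option is to process them in decreasing order of $p_i\lambda_{ik}$, giving each eigenspace only the part of its range not already claimed. Alongside this you need a quantitative lemma that bounds how much of an eigenspace can lie in the span of those with larger eigenvalues, in terms of pairwise overlaps. Any near-linear dependence among the earlier eigenspaces must be charged to their own large mutual overlaps. This is the hard part of Li's theorem. Nothing in your plan replaces it, including the sequential fallback, which is built from the same flawed pairwise tests.
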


Using the theorem above, we prove our main result: 

\begin{theorem}\label{thm: main_result_general}
{For $\bar{\rho}(s)=s\proj{0} \ot \omega_0+(1-s)\proj{1} \ot \omega_1$, $p_{\error}(C|E)_{\bar{\rho}(s)}$ can be upper bounded as} 
\begin{eqnarray}
 p_{\error}({C|E})_{{\bar{\rho}(s)}} \leq c(n) \left( s^{1 - \alpha} (1 - s)^{\alpha} A_{\alpha} + s^{1- \beta} (1 - s)^{\beta} B_{\beta} +  s \delta_{n} \right) 
\end{eqnarray}
where $c(n)$ is some polynomial in $n$, $\alpha , \beta \in  {(0} , 1)$ and 
\begin{eqnarray}
A_{\alpha} &=& (1 - \delta_{n}^2)^{1 - \alpha} \tilde{Q}_{\alpha}(\rho_{E|11}|| \rho_{E|00})^{n} + \delta_{n}^{{\alpha}} (1 + \delta_{n})^{1 - \alpha}   \tilde{Q}_{\alpha}(\rho_{E|{10}} || \rho_{E|{00}})^{n} \\ 
B_{\beta} &=& \delta_{n}^{1 - \beta} (1 + \delta_{n})^{1 - \beta} \tilde{Q}_{\beta}(\rho_{E|11} || \rho_{E|01})^{n} + \delta_{n}  (1 + \delta_{n})^{1 - \beta} \tilde{Q}_{\beta}(\rho_{E|{10}} || \rho_{E|{01}})^{n}.      
\end{eqnarray}
\end{theorem}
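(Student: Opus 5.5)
The plan is to pass from the two-state ensemble $\{s\,\omega_0^{(n)},(1-s)\,\omega_1^{(n)}\}$ to a finer ensemble in which every member is a tensor power, and then apply Theorem~\ref{thm: discriminating_multiple_hypothesis}. First I would ``de-mix'' the mixtures $\omega_0^{(n)}=(1-\delta_n)\rho_{E|00}^{\otimes n}+\delta_n\rho_{E|01}^{\otimes n}$ and $\omega_1^{(n)}=(1-\delta_n)\rho_{E|11}^{\otimes n}+\delta_n\rho_{E|10}^{\otimes n}$. This gives the cq state $\rho'(s)$ of \eqref{eqn: three_state_discrimination}, or its four-component analogue in which $\omega_1^{(n)}$ is split as well. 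By Lemma~\ref{lemm: min-entropy_monotonicity}, coarse-graining the classical label can only decrease the optimal error, so
\[
p_{\error}(C|E)_{\bar\rho(s)}\le p_{\error}(C|E)_{\rho'(s)} .
\]
If the paper's normalisation of the weights is used (total weight $1+\delta_n$ after rescaling), this is where the factors $(1+\delta_n)$ and $(1-\delta_n^2)$ in $A_\alpha,B_\beta$ should come from. Some care is needed to match the exact prefactors, and I expect them to follow from elementary inequalities such as $(1-\delta_n)\le 1$ and $\delta_n\le 1$.

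Next I apply Theorem~\ref{thm: discriminating_multiple_hypothesis} with $r\le 4$. The key observation is that each $\rho_{E|ij}^{\otimes n}$ has at most $T\le (n+1)^{d}$ distinct eigenvalues, because eigenvalues of a tensor power are indexed by types. Hence the prefactor $10(r-1)^2T^2$ is a polynomial in $n$, and this polynomial is $c(n)$. The theorem then bounds the error by a sum over pairs $(i,j)$ of $p_i^{\alpha_{ij}}p_j^{1-\alpha_{ij}}Q_{\alpha_{ij}}(A_i\|A_j)$. Multiplicativity gives $Q_\alpha(\tau^{\otimes n}\|\sigma^{\otimes n})=Q_\alpha(\tau\|\sigma)^n$, and the inequality $Q_\alpha\le\tilde Q_\alpha$ from the preliminaries then lets me replace the Petz quantities by the sandwiched ones appearing in the statement.

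The pairs are grouped as follows. The pairs linking a component of $\omega_1$ with $\rho_{E|00}^{\otimes n}$, namely $(11,00)$ and $(10,00)$, get a common exponent $\alpha$ and, after ordering the weights so that $s$ carries exponent $1-\alpha$ and $(1-s)$ carries exponent $\alpha$, give $s^{1-\alpha}(1-s)^\alpha A_\alpha$. The pairs with $\rho_{E|01}^{\otimes n}$, namely $(11,01)$ and $(10,01)$, get exponent $\beta$ and produce $s^{1-\beta}(1-s)^\beta B_\beta$. The powers of $\delta_n$ and $(1\pm\delta_n)$ in $A_\alpha,B_\beta$ are exactly the weight factors $p_i^{\cdot}p_j^{\cdot}$ of the respective components.

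The within-$\omega_0$ pair $(00,01)$ must not be charged a Rényi term, since confusing these two components is not an error for the original problem. I bound its contribution directly from the first (min-based) line of Theorem~\ref{thm: discriminating_multiple_hypothesis}: it is at most the weight of the smaller component, $s\delta_n$ (using $\sum_{k,l}\lambda_{ik}\tr(\Pi_{ik}\Pi_{jl})=1$). If $\omega_1$ is also de-mixed, the $(11,10)$ pair raises the same issue; to avoid it I would keep $\omega_1^{(n)}$ intact as in \eqref{eqn: three_state_discrimination}, use the concavity/joint structure of $Q_\alpha$ in its first argument to split $Q_\alpha(\omega_1^{(n)}\|\cdot)$ into the two component terms, and absorb the constants into $c(n)$.

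I expect the main obstacle to be this bookkeeping. The points to get right are that the harmless internal pair contributes only $O(s\delta_n)$, that the exponents on $s$ and $1-s$ come out as stated, and that the polynomial bound on $T$ survives for the mixed (non-tensor-power) state $\omega_1^{(n)}$. If the last point fails, I would de-mix $\omega_1$ too and handle the $(11,10)$ pair by the same min-argument, which produces an extra $(1-s)\delta_n$-type term bounded within the existing $\delta_n$ terms of $A_\alpha$.
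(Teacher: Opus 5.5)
The gap is in the step that turns the pairs involving the mixture $\omega_1^{(n)}$ into the componentwise terms of $A_\alpha$ and $B_\beta$. If you keep $\omega_1^{(n)}$ intact, as in $\rho'(s)$ of \eqref{eqn: three_state_discrimination}, Li's bound gives $(1-s)^{\alpha}s^{1-\alpha}(1-\delta_n)^{1-\alpha}Q_\alpha(\omega_1^{(n)}\|\rho_{E|00}^{\ot n})$ and the analogous $\beta$-term against $\rho_{E|01}^{\ot n}$. You therefore need an \emph{upper} bound on a R\'enyi quantity whose first argument is a mixture. Concavity in the first argument gives $Q_\alpha(\omega_1^{(n)}\|\sigma)\geq(1-\delta_n)Q_\alpha(\rho_{E|11}^{\ot n}\|\sigma)+\delta_n Q_\alpha(\rho_{E|10}^{\ot n}\|\sigma)$, which points the wrong way.

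The missing ingredient is subadditivity and homogeneity of the \emph{sandwiched} quantity in its first argument: $\tilde{Q}_\alpha(X+Y\|\sigma)\leq\tilde{Q}_\alpha(X\|\sigma)+\tilde{Q}_\alpha(Y\|\sigma)$ and $\tilde{Q}_\alpha(cX\|\sigma)=c^{\alpha}\tilde{Q}_\alpha(X\|\sigma)$. These follow from $\tr(P+R)^{\alpha}\leq\tr P^{\alpha}+\tr R^{\alpha}$ for $0<\alpha<1$, applied to $P=\sigma^{\frac{1-\alpha}{2\alpha}}X\sigma^{\frac{1-\alpha}{2\alpha}}$ and $R=\sigma^{\frac{1-\alpha}{2\alpha}}Y\sigma^{\frac{1-\alpha}{2\alpha}}$. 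This is why passing from $Q_\alpha$ to $\tilde{Q}_\alpha$ is essential rather than cosmetic: the Petz quantity is not subadditive in its first argument for non-commuting operators. The paper packages this as Theorem~\ref{lemm: continuity_of_tilde_Q}, combining subadditivity with concavity applied to $\omega_1^{(n)}+\delta_n\rho_{E|11}^{\ot n}=\rho_{E|11}^{\ot n}+\delta_n\rho_{E|10}^{\ot n}$. That step is where the factors $(1+\delta_n)^{1-\alpha}$ and $(1-\delta_n^2)^{1-\alpha}$ come from. They do not come from renormalising the de-mixed weights, which still sum to $s(1-\delta_n)+s\delta_n+(1-s)=1$. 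Subadditivity and homogeneity alone already give $\tilde{Q}_\alpha(\omega_1^{(n)}\|\tau^{\ot n})\leq(1-\delta_n)^{\alpha}\tilde{Q}_\alpha(\rho_{E|11}\|\tau)^n+\delta_n^{\alpha}\tilde{Q}_\alpha(\rho_{E|10}\|\tau)^n$, and the resulting terms are dominated by the stated $A_\alpha$ and $B_\beta$.

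Your fallback of de-mixing $\omega_1^{(n)}$ as well does not work, and the pairing $(11,00),(10,00),(11,01),(10,01)$ in your third paragraph presupposes it. In the four-component ensemble the pair $(11,10)$ has both weights proportional to $1-s$. Its contribution, at most $(1-s)\delta_n$ by your min-argument, therefore carries no positive power of $s$, whereas every term on the right-hand side of the statement vanishes as $s\to 0$. In fact, as $s\to0$ the four-state error itself tends to the error of $\{(1-\delta_n)\rho_{E|11}^{\ot n},\delta_n\rho_{E|10}^{\ot n}\}$, which is generally nonzero. So no bound of the stated form can be extracted from it, and this behaviour near $s=0$ is exactly what keeps the later $\int_0^1 \dd s/s$ finite.

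Your worry about $T$ for the non-product state $\omega_1^{(n)}$ is legitimate. Type counting covers only tensor powers, and the number of distinct eigenvalues is not subadditive for non-commuting summands, so the paper's one-line justification is too quick as well. It is resolved without de-mixing: $\omega_1^{(n)}$ is permutation invariant, so by Schur--Weyl duality it has the block form $\bigoplus_\lambda X_\lambda\otimes\id$, with polynomially many blocks of polynomial dimension. Hence it has $\mathrm{poly}(n)$ distinct eigenvalues.

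With $\omega_1^{(n)}$ kept intact and $r=3$, your $s\delta_n$ treatment of the $(00,01)$ pair (equivalent to the paper's choice $\gamma=1$) plus the subadditivity step above turns your outline into the paper's proof.
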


\begin{proof}
The proof begins by establishing an upper bound on the error probability \( p_{\error}({C|E})_{{\bar{\rho}(s)}} \) in terms of the error probability \( p_{\error}({C|E})_{{\rho'(s)}} \), where the {cq state} \( {\rho'(s)} \) is defined as in \eqref{eqn: three_state_discrimination}. {This step follows from Lemma \ref{lemm: min-entropy_monotonicity}, which intuitively states that that the error probability for a given ensemble cannot decrease under "fine-gaining" the ensemble}. \\

Next, we apply Theorem \ref{thm: discriminating_multiple_hypothesis} to obtain an upper bound on \( p_{\error}({C|E})_{{\rho'(s)}} \) in terms of the Petz-Rényi divergence quantities \( Q_{\alpha}(\cdot \Vert \cdot) \). The function \( f(r, T){:=10(r-1)^2T^2} \) appearing in Theorem \ref{thm: discriminating_multiple_hypothesis} can be upper bounded by a polynomial \( p(n) \) in \( n \). This follows from Theorem 1 of \cite{li2016discriminating}, which utilizes the type counting lemma to bound the number of eigenspaces \( \Omega(A^{\otimes n}) \) by  
\[
\Omega(A^{\otimes n}) \leq (n+1)^d,
\]  
where \( d \) denotes the dimension of the support of A. Furthermore, the dimension of the eigenspace corresponding to the sum of two satisfies  
\[
\Omega(A + B) \leq \Omega(A) + \Omega(B),
\]  
with equality holding if and only if \( A \) and \( B \) have orthogonal supports. Applying this result to our setting, we obtain the bound  
\[
T \leq 2(n+1)^d.
\]  
Moreover, in our case, the number of hypotheses to be discriminated is \( r = 3 \), where \( d \) corresponds to the dimension of the support of \( \omega_1^{({n})} \) or \( \omega_0^{({n})} \). Thus, 
\begin{eqnarray}
p_{\error}(\M{\tilde{E}}) &\leq& {160} (n + 1)^{2d} \Big( A_{1} + A_{2} + A_{3}\Big) \label{eqn: multiple_hypothesis}. 
\end{eqnarray}
where 
\begin{eqnarray}
    A_{1} &:=& (1 - s)^{\alpha} s^{1 - \alpha} {(1-\delta_{n})}^{1- \alpha} Q_{\alpha}(\omega_{1}^{(n)} || \rho_{E|00}^{\ot n} ) \\ 
    A_{2}  &:=&  (1 - s)^{\beta} s^{1 - \beta} \delta_{n}^{1 - \beta}  Q_{\beta}(\omega_{1}^{(n)} || \rho_{E|01}^{\ot n} )  \\ 
    A_{3}  &:=&  s \delta_{n}^{\gamma} (1 - \delta_{n})^{1 - \gamma} Q_{\gamma}(\rho_{E|01}^{\ot n} || \rho_{E|00}^{\ot n} )  
\end{eqnarray}
Note that the \eqref{eqn: multiple_hypothesis} works for any  $\alpha , \beta, \gamma \in [0 ,1]$. We now upper bound terms in \eqref{eqn: multiple_hypothesis} individually. We first look at the $A_{ {3}}$:
\begin{eqnarray}
A_{3}    = s (1 - \delta_{n})^{1 - \gamma} \delta_{n}^{\gamma} Q_{\gamma}(\rho_{E|01} || \rho_{E|00} )^{n}. 
\end{eqnarray}
 For the first and the second term, we first upper bound $Q_{\lambda}(. || . )$ by $\tilde{Q}_{\lambda}(.||.) $ and then use the continuity result \ref{lemm: continuity_of_tilde_Q}. For the second term we get 
\begin{eqnarray*}
A_{2} &:=&  s^{1 - \beta} (1 - s)^{ \beta} \delta_{n}^{1 - \beta} Q_{\beta}(\omega_{1}^{(n)} || \rho_{E|01}^{\ot n} )\\
 &\leq& s^{1 - \beta} (1 - s)^{ \beta} \delta_{n}^{1 - \beta} \tilde{Q}_{\beta}(\omega_{1}^{(n)} || \rho_{E|01}^{\ot n} )  \\ 
&\leq&  s^{1 - \beta} (1 -s )^{\beta} \delta_{n}^{1- \beta}\left( \frac{\tilde{Q}_{\beta}(\rho_{E|11} ||\rho_{E|01} )^{n} + \delta^{\beta}_{n}\tilde{Q }_{\beta}(\rho_{E|10} || \rho_{E|01}    )^{n} }{(1 + \delta_{n})^{\beta - 1}} - \delta_{n} \tilde{Q}_{\beta}(\rho_{E|11} || \rho_{E|01} )^{n} \right) \\ 
&{\leq}& s^{1 - \beta} (1 - s)^{\beta} \left(  \frac{\delta_{n}^{1 - \beta}}{(1 + \delta_{n})^{\beta - 1}} \tilde{Q}_{\beta}(\rho_{E|11} || \rho_{E|01})^{n} + \frac{\delta_{n}}{(1 + \delta_{n})^{\beta - 1}} \tilde{Q}_{\beta}(\rho_{E|10}|| \rho_{E|01})^{{n}} \right) 
\end{eqnarray*} 
Now, for the first term we get 
\begin{eqnarray}
A_{1}   \leq (1 - s)^{\alpha} s^{1 - \alpha}  \left( \frac{(1 - \delta_{n})^{1 - \alpha}}{(1 + \delta_{n})^{\alpha - 1}} \tilde{Q}_{\alpha}(\rho_{E| {11}} || \rho_{E|00})^{n} +  \frac{\delta_{n}^{\alpha} {(1 - \delta_n)^{1-\alpha}}}{(1 + \delta_{n})^{\alpha - 1} } \tilde{Q}_{\alpha}(\rho_{E|{10}} || \rho_{E|00})^{n} \right)
\end{eqnarray}
\end{proof} 

The above result greatly simplifies to when we restrict the analysis to pure states and projective measurements. 
\begin{theorem}\label{thm:main_result_pure}
Then, under the assumption that the conditional states $\rho_{E|ij} = \ket{\psi_{E|ij}}\bra{\psi_{E|ij}}$ are pure, there exists $\alpha^* < 1 $ such that $\forall \alpha \in [\alpha^* , 1 )$: 
\begin{eqnarray}
p_{\error}({C|E})_{{\rho'(s)}} \leq  160 s^{1- \alpha} (1 - s)^{\alpha}  F(\rho_{E|00} , \rho_{E|11})^{2n \alpha}   + c(s) \left(\frac{\epsilon}{1 - \epsilon} \right)^{n} 
\end{eqnarray}
where $c(s)$ is independent of the block length $n$ and is bounded from above:
\begin{eqnarray}
    \int_{0}^{1} \frac{\dd s}{s} c(s) < \infty.   
\end{eqnarray}
\end{theorem}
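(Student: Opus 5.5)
Starting point: the three-hypothesis bound \eqref{eqn: multiple_hypothesis} obtained in the proof of Theorem~\ref{thm: main_result_general}. We specialise it to pure states, but the prefactor must be made $n$-independent ($160$ instead of $160(n+1)^{2d}$). Theorem~\ref{thm: discriminating_multiple_hypothesis} is applied directly to $\rho'(s)$ with $r=3$ hypotheses $\rho_{E|00}^{\otimes n}$, $\rho_{E|01}^{\otimes n}$, $\omega_1^{(n)}$. All three operators have small rank: the first two are rank one, and $\omega_1^{(n)}=(1-\delta_n)\proj{\psi_{11}}^{\otimes n}+\delta_n\proj{\psi_{10}}^{\otimes n}$ has rank at most two. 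Hence $T\le 2$ and $10(r-1)^2T^2=160$, uniformly in $n$; purity is used exactly here. This gives
\[
p_{\error}(C|E)_{\rho'(s)}\le 160\,(A_1+A_2+A_3),
\]
with $A_1,A_2,A_3$ as in the proof of Theorem~\ref{thm: main_result_general}. The $Q_\alpha$ are evaluated on rank-$\le 2$ operators and spectral projections, so they are explicit.

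\textbf{Term $A_1$.} For $\alpha\in(0,1)$, $(1-\delta_n)^{1-\alpha}\le 1$, and $\rho_{E|00}^{\otimes n}$ is a pure state. Therefore
\[
Q_\alpha(\omega_1^{(n)}\|\rho_{E|00}^{\otimes n})=\bra{\psi_{00}}^{\otimes n}(\omega_1^{(n)})^{\alpha}\ket{\psi_{00}}^{\otimes n}.
\]
Diagonalise the $2\times2$ Gram block of $\omega_1^{(n)}$. Its eigenvectors are $\ket{\psi_{11}}^{\otimes n}$ plus an $O(\sqrt{\delta_n}\,|\langle\psi_{11}|\psi_{10}\rangle|^n)$ correction, with eigenvalues near $1-\delta_n$ and near $\delta_n(1-|\langle\psi_{11}|\psi_{10}\rangle|^{2n})$. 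The dominant contribution is
\[
(1-\delta_n)^{\alpha}\,|\langle\psi_{00}|\psi_{11}\rangle|^{2n}=F(\rho_{E|00},\rho_{E|11})^{2n}\le F^{2n\alpha}.
\]
The cross terms are bounded by a constant times $\delta_n^{\alpha}$ or $\delta_n^{\min(\alpha,1/2)}$ times overlaps of modulus at most $1$. Choosing $\alpha^*\ge 1/2$ close enough to $1$ makes every cross term $O(\delta_n^{\alpha})$. We then need $\delta_n^{\alpha}\lesssim(\epsilon/(1-\epsilon))^{n}$. This is only true up to the exponent $\alpha$, so the residue has to be absorbed more carefully: the cross terms carry an extra overlap factor $|\langle\psi_{00}|\psi_{10}\rangle|^{n}$ or $|\langle\psi_{11}|\psi_{10}\rangle|^n$. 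In the qubit projective-measurement setting these overlaps are expressible through $\epsilon$, and combined with $\delta_n^{\alpha}$ they give at most $(\epsilon/(1-\epsilon))^n$ once $\alpha\ge\alpha^*$. This is the step where $\alpha^*$ is fixed.

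\textbf{Term $A_2$.} This contains $\delta_n^{1-\beta}Q_\beta(\omega_1^{(n)}\|\rho_{E|01}^{\otimes n})$. Choose $\beta$ freely, e.g.\ $\beta$ small, so that $\delta_n^{1-\beta}\approx\delta_n$, while $Q_\beta\le \mathrm{tr}(\omega_1^{(n)})^{\beta}\cdot\ldots\le 2$. This gives $A_2\le 2\,s^{1-\beta}(1-s)^{\beta}\delta_n^{1-\beta}$. To reach a clean $(\epsilon/(1-\epsilon))^{n}$, use $\delta_n\le(\epsilon/(1-\epsilon))^n$ and take $\beta\to0$, or more precisely take $\beta=0$. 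In that limit $Q_0$ is the support-overlap term, and Theorem~\ref{thm: discriminating_multiple_hypothesis} allows $\alpha_{ij}\in[0,1]$. The contribution is then $s\,\delta_n\,\mathrm{tr}(\Pi_{\omega_1}\rho_{E|01}^{\otimes n})\le s\,(\epsilon/(1-\epsilon))^n$.

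\textbf{Term $A_3$.} Taking $\gamma=1$ gives $A_3\le s\,\delta_n\le s\,(\epsilon/(1-\epsilon))^n$.

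\textbf{Collecting terms.} Everything is of the form $c(s)(\epsilon/(1-\epsilon))^n$ with
\[
c(s)\le C\,(s+s^{1-\alpha}(1-s)^{\alpha}),
\]
so $\int_0^1 c(s)\,\dd s/s<\infty$ because $s^{-\alpha}$ is integrable for $\alpha<1$.

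\textbf{Main obstacle.} The cross terms in the $2\times2$ diagonalisation of $(\omega_1^{(n)})^{\alpha}$ inside $A_1$ are the delicate part. The goal is to show they are dominated either by $F^{2n\alpha}$ or by $(\epsilon/(1-\epsilon))^n$ with $n$-independent constants. I would first try the operator inequality $(X+Y)^{\alpha}\le X^{\alpha}+Y^{\alpha}$ for $\alpha\in(0,1]$, which gives
\[
(\omega_1^{(n)})^{\alpha}\le(1-\delta_n)^{\alpha}\proj{\psi_{11}}^{\otimes n}+\delta_n^{\alpha}\proj{\psi_{10}}^{\otimes n}.
\]
This yields directly
\[
A_1\le s^{1-\alpha}(1-s)^{\alpha}\Big(F_{00,11}^{2n}+\delta_n^{\alpha}|\langle\psi_{00}|\psi_{10}\rangle|^{2n}\Big).
\]
The first piece is at most $F^{2n\alpha}$. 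The second requires
\[
\delta_n^{\alpha}\,|\langle\psi_{00}|\psi_{10}\rangle|^{2n}\le C(\epsilon/(1-\epsilon))^n,
\]
which holds for $\alpha$ close to $1$ whenever $|\langle\psi_{00}|\psi_{10}\rangle|<1$. If this overlap equals $1$, a separate, easier argument is needed.
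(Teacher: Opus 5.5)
\textbf{Gap in the bound on $A_1$.} Your treatment of the prefactor ($T\le 2$, hence $160$) and of $A_3$ ($\gamma=1$) is fine. So is your treatment of $A_2$. Taking $\beta=0$ gives $A_2= s\,\delta_n\,\mathrm{tr}(\Pi_{\omega_1}\rho_{E|01}^{\otimes n})\le s\,\delta_n$. This is simpler than the paper's continuity argument and does not need $F(\rho_{E|11},\rho_{E|01})<1$.

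The gap is in $A_1$, exactly where $\alpha^*$ is supposed to be fixed. The operator inequality $(X+Y)^{\alpha}\le X^{\alpha}+Y^{\alpha}$ that you invoke is false for non-commuting $X,Y$. Only the trace version $\mathrm{tr}(X+Y)^{\alpha}\le \mathrm{tr}X^{\alpha}+\mathrm{tr}Y^{\alpha}$ holds. It fails precisely in your situation. Take $X=a\proj{u}$ and $Y=b\proj{v}$ with $0<|\braket{u}{v}|<1$, and set $\kappa^2:=1-|\braket{u}{v}|^2$. On $\mathrm{span}\{u,v\}$ one has $\det (X+Y)^{\alpha}=(ab)^{\alpha}\kappa^{2\alpha}>(ab)^{\alpha}\kappa^{2}=\det(X^{\alpha}+Y^{\alpha})$. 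By Weyl monotonicity this is incompatible with the claimed operator ordering. Concretely, let $\phi\in\mathrm{span}\{u,v\}$ be the unit vector orthogonal to $u$, with $a=1-b$ and $b\to 0$. Then $\bra{\phi}(X+Y)^{\alpha}\ket{\phi}= b^{\alpha}\kappa^{2\alpha}(1+O(b))$, which exceeds $\bra{\phi}(X^\alpha+Y^\alpha)\ket{\phi}=b^{\alpha}\kappa^{2}$.

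So your bound $A_1\le s^{1-\alpha}(1-s)^{\alpha}\bigl(F^{2n}+\delta_n^{\alpha}|\braket{\psi_{00}}{\psi_{10}}|^{2n}\bigr)$ is not established. Note that it is even stronger than what the theorem claims, since it has the unpowered $F^{2n}$. Your earlier diagonalisation sketch does not repair this. The cross terms are never actually controlled. The appeal to overlaps being ``expressible through $\epsilon$ in the qubit projective-measurement setting'' uses structure that is not among the hypotheses; only purity of the $\rho_{E|ij}$ is assumed.

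\textbf{How to close it.} The missing idea is to use concavity of $t\mapsto t^{\alpha}$ through Jensen's inequality for the vector $\phi=\ket{\psi_{00}}^{\otimes n}$, rather than through an operator inequality. Write $\omega_1^{(n)}=\sum_k\lambda_kP_k$, including the kernel. Then $\bra{\phi}(\omega_1^{(n)})^{\alpha}\ket{\phi}=\sum_k\lambda_k^{\alpha}\|P_k\phi\|^2\le\bra{\phi}\omega_1^{(n)}\ket{\phi}^{\alpha}$. Scalar subadditivity then gives
$(1-\delta_n)^{\alpha}F(\rho_{E|00},\rho_{E|11})^{2n\alpha}+\delta_n^{\alpha}F(\rho_{E|00},\rho_{E|10})^{2n\alpha}$
as an upper bound. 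This is exactly why the statement carries $F^{2n\alpha}$ rather than $F^{2n}$. The paper reaches essentially the same estimate, up to $(1\pm\delta_n)$ factors, via $Q_\alpha\le\tilde Q_\alpha$ and its joint-concavity/subadditivity bound for $\tilde Q_\alpha$ of the mixture $\omega_1^{(n)}$; for a pure second argument, $\tilde Q_\alpha(\omega\|\phi)=\bra{\phi}\omega\ket{\phi}^{\alpha}$.

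The second term is then absorbed using $\delta_n\le(\tfrac{\epsilon}{1-\epsilon})^n$:
$\delta_n^{\alpha}F(\rho_{E|00},\rho_{E|10})^{2n\alpha}\le\bigl((\tfrac{\epsilon}{1-\epsilon})^{\alpha}F(\rho_{E|00},\rho_{E|10})^{2\alpha}\bigr)^n\le(\tfrac{\epsilon}{1-\epsilon})^{n}$.
The last step holds once $F(\rho_{E|00},\rho_{E|10})^{2\alpha}\le(\tfrac{\epsilon}{1-\epsilon})^{1-\alpha}$, and this is what fixes $\alpha^*$. As you noted, this requires $F(\rho_{E|00},\rho_{E|10})<1$, which the paper also tacitly assumes.
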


\begin{proof}
Since the states $\rho_{E|ij}$ are pure, the quantum R\'enyi divergence satisfies the relation  
\begin{equation}
 \forall \alpha \in (0 ,1): \quad    \tilde{Q}_{\alpha}(\rho_{E|ij} || \rho_{E|kl}) = F(\rho_{E|ij}, \rho_{E|kl})^{2\alpha} , \quad Q_{\alpha}(\rho_{E|ij} || \rho_{E|kl} ) = F(\rho_{E|ij} , \rho_{E|kl})^{2}
\end{equation}
Furthermore, in Theorem~\ref{thm: main_result_general}, ${T}$ was bounded by $2(n+1)^d$. However, in the present case, we have  
\begin{equation}
    \max \big\{ \Omega(\omega_{1}^{(n)}), \Omega(\proj{\psi_{E|00}^{\ot n}}), \Omega(\proj{\psi_{E|01}^{\ot n}}) \big\} \leq 2.
\end{equation}
This ensures that $f(r, T)$ remains bounded by a constant, rather than a polynomial in $n$, which allows us to derive a significantly tighter bound.  
Our next observation is that:
\begin{equation}
    \delta_n :=  \left( \frac{\epsilon^{n}}{(1 - \epsilon)^{n} + \epsilon^{n}} \right)^n \leq  \left( \frac{\epsilon}{1 - \epsilon}\right)^{n}.
\end{equation}
From this point, the proof proceeds in the same manner as in Theorem~\ref{thm: main_result_general}. Setting $\gamma = 1$ in the third term ($A_{3}$) of Eq.~\eqref{eqn: multiple_hypothesis} yields the upper bound $\delta_{n} s$. We can upper bound $\delta_{n}$ by $\left(\frac{\epsilon}{1 - \epsilon}\right)^{n}$ to get the bound 
\begin{eqnarray}
    A_{3} \leq s \left( \frac{\epsilon}{1 - \epsilon} \right)^n 
\end{eqnarray}
We now proceed to bound the remaining terms. Following \cite{hahn2024bounds} (see Section~6), we obtain  
\begin{equation}
    \big| Q_{\beta}(\omega_{n}^{(1)} || \rho_{E|01}^{\ot n }) - Q_{\beta}(\rho_{E|11}^{\ot n} || \rho_{E|01}^{\ot n}) \big| 
    \leq  \|\omega_{1}^{(n)} - \rho_{E|11}^{\ot n} \|_1^{\beta} 
    \leq {(2\delta_n)^{\beta}}.
\end{equation}
Thus, we have  
\begin{align}
    A_{2} 
    &\leq  \delta_n^{1 - \beta} s^{1- \beta} (1 - s)^{\beta} \left(Q_{\beta}(\rho_{E|11} || \rho_{E|01})^n + {(2\delta_n)}^{\beta} \right)  \\ 
    &= \delta_n^{1 - \beta} s^{1 - \beta} (1 - s)^{\beta} F(\rho_{E|11}, \rho_{E|01})^{2n} + {2^\beta} s^{1 - \beta} (1 - s)^{\beta} \delta_n \\ 
    &\leq  \delta_n^{1 - \beta} s^{1 - \beta} (1 - s)^{\beta} F(\rho_{E|11}, \rho_{E|01})^{2n} + {2^\beta} s^{1 - \beta} (1 - s)^{\beta} \left(\frac{\epsilon}{1 - \epsilon} \right)^n.
\end{align}
Now note that if $F(\rho_{E|11} , \rho_{E|01})^{2} < 1$, then there exists $\beta^{{*}} \in ( 0 , 1) $ such that {(See lemma \ref{lemm: bounding_g_by_h})}
$$ \forall \beta \in (0 , \beta^*]: \left(\frac{\epsilon}{1 - \epsilon}\right)^{\beta} \geq F(\rho_{E|11} , \rho_{01})^{2}. $$ We choose any such $\beta$ to get the bound for all $\beta \in ( 0 , \beta^*]$: 
\begin{eqnarray}
A_{2} \leq  {(1+2^\beta)} s^{{1-\beta}} (1 -s)^{\beta} \left(\frac{\epsilon}{1 - \epsilon} \right)^{n}. 
\end{eqnarray}
We now bound $A_{1}$  from theorem  
\begin{align}
    A_{1} 
    &\leq  s^{1 - \alpha} (1 - s)^{\alpha} 
    \big((1 - \delta_{n}^{2} )^{1 - \alpha} F(\rho_{E|00}, \rho_{E|11})^{2n\alpha} + \delta_n^{\alpha} (1 + \delta_{n})^{1 -\alpha} F(\rho_{E|00}, \rho_{E|10})^{2n\alpha} \big).
\end{align}

Since $F(\rho_{E|00}, \rho_{E|10})^2 < 1$, we can guarantee existence of a parameter $\alpha^* {\in (0,1)}$ (See lemma \ref{lemm: bounding_g_by_h}) such that  
\begin{equation}
\forall \alpha \in [\alpha^* , 1) : \quad F(\rho_{E|00}, \rho_{E|11} )^{2\alpha}  \leq   \left( \frac{\epsilon}{1 - \epsilon} \right)^{1 -\alpha} 
\end{equation}
Substituting this bound, we obtain that for $\alpha \geq \alpha^*$,
\begin{align}
   A_1 &\leq   s^{1 - \alpha} (1 - s)^{\alpha} 
    \left( F(\rho_{E|00}, \rho_{E|11})^{2n\alpha} + \left(\frac{\epsilon}{1 - \epsilon} \right)^n (1 + \delta_{n})^{1 - \alpha}\right) \\ 
    &\leq s^{1 - \alpha} (1 - s)^{\alpha} 
    \left( F(\rho_{E|00}, \rho_{E|11})^{2n\alpha} + 2 \left(\frac{\epsilon}{1 - \epsilon} \right)^n \right) 
\end{align}
where the final inequality stems from that fact that $(1 + \delta_{n}) <  2$. Thus there exists $\alpha^*$ such that $\forall \alpha \in [\alpha^* , 1)$,
\begin{eqnarray}
p_{\error}({C|E})_{{\rho'(s)}} \leq 160 s^{1- \alpha} (1 -s)^{\alpha} F(\rho_{E|00} , \rho_{E|11})^{2 n \alpha} + c(s) \left(\frac{\epsilon}{1 - \epsilon} \right)^{n}
\end{eqnarray}
where 
\begin{eqnarray}
 c(s) = 160 \left(s +{(1+2^\beta)} 2 s^{\beta} (1 -s)^{1 - \beta} + s^{1 - \alpha} (1- s)^{\alpha}  \right). 
\end{eqnarray}
Now, note that when $\alpha < 1$, we have that 
\begin{eqnarray}
    \int_{0}^{1} \frac{\dd s}{s} s^{1 - \alpha} (1 - s)^{\alpha} = \frac{\pi \alpha}{\sin(\pi \alpha)} < \infty.  
\end{eqnarray}
This proves the claim. 
\end{proof}
Finally{,} we prove the main result{.}
\begin{corollary}
In the DI advantage distillation protocol, under the assumption that the post-measurement states $\rho_{E|ij}$ are pure, the necessary and sufficient condition to derive key is 
\begin{eqnarray}
F(\rho_{E|00} , \rho_{E|11})^2 \geq \frac{\epsilon}{1- \epsilon}. 
\end{eqnarray}
\end{corollary}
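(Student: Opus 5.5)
The plan is to treat the two directions separately. Both rest on one observation: for pure states, Theorem~\ref{thm:main_result_pure} gives $Q_{\alpha}(\rho_{E|00}\|\rho_{E|11}) = F(\rho_{E|00},\rho_{E|11})^2$ for every $\alpha\in(0,1)$. Hence the Chernoff quantity satisfies $Q(\rho_{E|00},\rho_{E|11}) = F(\rho_{E|00},\rho_{E|11})^2$, and the fidelity and Chernoff criteria coincide in this setting.

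\emph{Sufficiency.} Since $Q=F^2$, the Chernoff-type sufficient condition of~\cite{stasiuk2022quantum}, $Q(\rho_{E|00},\rho_{E|11})>\beta_\epsilon$, becomes $F^2>\beta_\epsilon$. Equivalently, one can use the fidelity condition of Theorem~\ref{thm: Tan_fidelity} together with Lemma~\ref{lemm: known_lower_bounds_using_the_integral_representation}. The boundary case $F^2=\beta_\epsilon$ does not follow from either strict criterion. I expect this to be the delicate point: here both the numerator and $h_2(\delta_n)$ decay like $\beta_\epsilon^n$ up to polynomial factors. I would first try to settle it by comparing the lower bound $1-h_2(\tfrac12+\tfrac12F^n)\approx F^{2n}/(2\ln 2)$ with $h_2(\delta_n)\approx n\log(1/\beta_\epsilon)\beta_\epsilon^n$. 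If that comparison fails, the honest statement is the strict inequality together with the necessity below.

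\emph{Necessity.} Assume $F^2<\beta_\epsilon$; the goal is $\mathscr{R}_n\to 0$.
\begin{enumerate}
\item By Lemma~\ref{lemm: min-entropy_monotonicity}, $p_{\error}(C|E)_{\bar\rho(s)}\le p_{\error}(C|E)_{\rho'(s)}$.
\item Theorem~\ref{thm:main_result_pure} then bounds this by $160\,s^{1-\alpha}(1-s)^\alpha F^{2n\alpha}+c(s)\beta_\epsilon^n$ for all $\alpha\in[\alpha^*,1)$.
\item Substitute this into the integral representation (Theorem~\ref{thm: CQ_integral_representation}). This gives
$$\int_0^1\frac{ds}{2s\ln2}\,p_{\error}(C|E)_{\bar\rho(s)}\le K_\alpha F^{2n\alpha}+K'\beta_\epsilon^n,$$
where $K_\alpha=\tfrac{80\pi\alpha}{\ln 2\,\sin\pi\alpha}$ and $K'=\int_0^1 c(s)\,ds/(2s\ln2)<\infty$.
\item The term with $\bar\rho(1-s)$ cannot be handled by reusing the same bound with $s\mapsto 1-s$, because $c(1-s)/s$ is not integrable at $0$. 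Instead I would exploit the symmetry of the problem under the relabelling $00\leftrightarrow11$, $01\leftrightarrow10$ (equivalently the unitary used in Lemma~\ref{corr: neccessary_and_sufficiant_condition}). With this relabelling I de-mix $\omega_0^{(n)}$ rather than $\omega_1^{(n)}$. Repeating the proof of Theorem~\ref{thm:main_result_pure} then yields a bound of the same form with an integrable constant $\tilde c(s)$. This step needs checking but should be routine.
\end{enumerate}

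To finish, divide by $h_2(\delta_n)$, using Lemma~\ref{lemm: ratio}: $h_2(\delta_n)\sim n\log(1/\beta_\epsilon)\,\beta_\epsilon^n$. The $K'\beta_\epsilon^n$ contributions then give $O(1/n)\to0$. Because $F^2<\beta_\epsilon$ strictly, I can choose $\alpha\in[\alpha^*,1)$ close enough to $1$ that $F^{2\alpha}<\beta_\epsilon$. The fidelity term then contributes $O\!\big((F^{2\alpha}/\beta_\epsilon)^n/n\big)\to 0$. Hence $\mathscr{R}_n\to0$, so $H(C|E)-h_2(\delta_n)<0$ for all large $n$ and no key can be extracted. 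This gives necessity and, with the sufficiency part, closes the gap.

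The main obstacles are the symmetric $\bar\rho(1-s)$ term in step 4 and the equality case $F^2=\beta_\epsilon$.
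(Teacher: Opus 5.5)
Your proposal is correct and follows the paper's route: sufficiency comes from the strict fidelity criterion of Tan et al. (equivalently $Q=F^2$ plus the Chernoff criterion), and necessity comes from Lemma~\ref{lemm: min-entropy_monotonicity}, Theorem~\ref{thm:main_result_pure}, the integral representation, Lemma~\ref{lemm: ratio}, and a fixed $\alpha<1$ with $F^{2\alpha}<\beta_\epsilon$. Both of your caveats are genuine and both are glossed over in the paper. First, the paper's constant $k=\int_0^1 (c(s)+c(1-s))\,ds/(s\ln 2)$ diverges as written, so for $\bar\rho(1-s)$ one must de-mix the state carrying prior $s$ (the original $\omega_1^{(n)}$), which is what your relabelling does. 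Second, the paper also proves sufficiency only for $F^2>\beta_\epsilon$, and your proposed boundary comparison fails: $\beta_\epsilon^n/(2\ln 2)$ divided by $n\log(1/\beta_\epsilon)\,\beta_\epsilon^n$ tends to $0$. So what is actually established is strict sufficiency together with necessity of $F^2\ge\beta_\epsilon$.
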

\begin{proof}
The sufficient condition follows immediately from \cite{TanDI}. Now, we show that if $F(\rho_{E|00} , \rho_{E|11})^2 < \frac{\epsilon}{1 - \epsilon}$, then no key can be derived. 

The proof follows from the proof of Theorem \ref{thm:main_result_pure}. From corollary \ref{corr: neccessary_and_sufficiant_condition}, we know that the necessary and sufficient condition to derive key is 
\begin{eqnarray}
 \lim_{n \rightarrow \infty} \frac{1}{n \left(\frac{\epsilon}{1 - \epsilon} \right)^n \log \left(\frac{\epsilon}{1 - \epsilon} \right) } \int_{0}^{1} \frac{1}{s 
 \ln(2)} \left( p_{\error}({C|E})_{{\bar{\rho}(s)}} + p_{\error}({C|E} )_{{\bar{\rho}(1-s)}}\right) > 0.
\end{eqnarray}

Using Theorem \ref{thm:main_result_pure}, we upper-bound the error probabilities $  p_{\error}({C|E})_{{\bar{\rho}(s)}} $ and  $  p_{\error}({C|E})_{{\bar{\rho}(1-s)}} $ (by noting that ${\bar{\rho}(1-s)}$ is the {cq state as in \eqref{eqn: three_state_discrimination}} and repeating the same steps as in Theorem \ref{thm:main_result_pure}):
\begin{eqnarray}
 \forall \alpha \in ( \alpha_1^* ,  1] : \quad    p_{\error}({C|E})_{{\bar{\rho}(s)}} &\leq& s^{1 - \alpha} (1 -s)^{\alpha} F(\rho_{E|00} , \rho_{E|11})^{2 {n} \alpha} + c (s) {\left(\frac{\epsilon}{1-\epsilon}\right)}^n  \\
 \forall \alpha \in (\alpha_2^* , 1] :   p_{\error}({C|E})_{{\bar{\rho}(1-s)}} &\leq& s^{1 - \alpha} (1 -s)^{\alpha}  F(\rho_{E|{{11}}} , \rho_{E|{00}})^{2 {n}\alpha} + c(s) {\left(\frac{\epsilon}{1-\epsilon}\right)}^n.
\end{eqnarray}
{Noting that $F(\tau, \sigma)^2 = F(\sigma , \tau)^2$ we can choose $\alpha^* = \max \{ \alpha_1^{*} , \alpha_{2}^* \}$ to get}
\begin{eqnarray}
p_{\error}({C|E})_{{\bar{\rho}(s)}} +p_{\error}({C|E})_{{\bar{\rho}(1-s)}} {\leq}   {2}  s^{1 - \alpha} (1 - s)^{\alpha}  F(\rho_{E|00} , \rho_{E|11})^{2 n \alpha} {+2{c(s)}\left(\frac{\epsilon}{1-\epsilon}\right)^n} .  
\end{eqnarray}
where $\alpha $ is any real in the interval $[\alpha^* , 1)$. Thus, we cannot derive any key if 
\begin{eqnarray}
{\lim_{n \to \infty}}\frac{1}{n \left(\frac{\epsilon}{1 - \epsilon} \right)^n \log(\frac{\epsilon}{1- \epsilon})}  \int_{0}^{1} \frac{\dd s}{\ln(2) s} \left( {s^{1-\alpha}(1-s)^{\alpha}} F(\rho_{E|00} , \rho_{E|11})^{2 n \alpha}  + {2c(s)} \left(\frac{\epsilon}{1- \epsilon} \right)^n \right){<0.}     
\end{eqnarray}
Now, consider the second term inside the integral. We analyze its behavior as $n \to \infty$. We have
\begin{eqnarray}
 \frac{1}{n \left( \frac{\epsilon}{1 - \epsilon} \right)^n \log\left(\frac{\epsilon}{1 - \epsilon} \right) \ln(2)}  
 \int_{0}^{1} {\frac{2 c(s)}{s}} \left( \frac{\epsilon}{1 - \epsilon} \right)^n \dd s 
 = \frac{1}{n \log\left(\frac{\epsilon}{1 - \epsilon} \right) \ln(2)}  
 \int_{0}^{1} \frac{{2c(s)}}{s}  \dd s 
 {=} \frac{1}{n} {k}, 
\end{eqnarray}
where ${k}$ is a constant satisfying  ${k} {=} \int_{0}^{1}\dd s  \frac{{(c(s)+c(1-s)}}{s \ln(2)} .$ Since the factor $\frac{1}{n}$ in front of ${k}$ vanishes as $n \to \infty$, this term goes to zero.

For the remaining term, 
we note that if $F(\rho_{E|00} , \rho_{E|11})^2 < 1$, then $F(\rho_{E|00} , \rho_{E|11} )^{2 n \alpha} \rightarrow F(\rho_{E|00} , \rho_{E|11} )^{2{n}} \leq \left( \epsilon/(1 - \epsilon)\right)^{{n}}$ as $\alpha \rightarrow 1$. Thus, there exists $\alpha' \in [\alpha^* , 1 )$ such that  
\begin{eqnarray}
\forall \alpha \in [\alpha' , 1): \quad F(\rho_{E|00} , \rho_{E|11} )^{2 \alpha} \leq \frac{\epsilon}{1 - \epsilon}.
\end{eqnarray} 
This follows from the continuity of the function $\alpha \mapsto x^{\alpha}$. 

Using this we see that 
\begin{eqnarray*}
\frac{1}{n \left(\frac{\epsilon}{1 - \epsilon} \right)^{{n}}{\log\left(\frac{\epsilon}{1-\epsilon}\right)}}\int_{0}^{1} \frac{\dd s}{\ln(2) s} 160 s^{1 - \alpha} (1- s)^{\alpha} F(\rho_{E|00} , \rho_{E|11})^{2 n \alpha} = 160 \frac{\pi \alpha}{n \ln(2) \sin(\pi \alpha){\log\left(\frac{\epsilon}{1-\epsilon}\right)}} \left(\frac{F(\rho_{E|00} , \rho_{E|11})^{2 \alpha}}{\frac{\epsilon}{1 - \epsilon}} \right)^{n}
\end{eqnarray*}
Note that any as for any $c >0 , a \in (0, 1)$ we have that 
\begin{eqnarray*}
    \lim_{n \rightarrow \infty}\frac{c}{n} a^{n} = 0,
\end{eqnarray*}    
Thus, the entire expression goes to zero as $n \to \infty$, implying that no key can be derived when $F(\rho_{E|00} , \rho_{E|11})^2 < \frac{\epsilon}{1 - \epsilon}$. 

This proves the claim.
\end{proof}
Now, consider the case where Alice and Bob share a two-qubit state $\rho_{Q_{A} Q_{B}}$. Without loss of generality, we may assume that the overall tripartite state $\rho_{Q_{A} Q_{B} E}$ is pure. It can then be easily verified that if Alice and Bob perform rank-one projective measurements, the resulting post-measurement states $\rho_{E|ij}$ are also pure. A detailed argument for this can be found in~\cite{tan2020advantage}. This concludes the  analysis for the case when Alice and Bob share a qubit pair and perform rank one projective measurements.

This fact is also exploited by Tan \emph{et al.}~\cite{tan2020advantage,hahn2022fidelity} in deriving tighter fidelity-based bounds for the DI advantage distillation protocol, particularly when using the violation of the CHSH inequality. We conclude this section with the following useful reduction, which can also be useful for deriving tighter fidelity bounds for DI-advantage distillation protocol.

\begin{lemma}\label{lemm: reduction to Bell diagonal states}
For the DI advantage distillation protocol based on CHSH inequality violation, it suffices to consider the case where Alice and Bob share (a convex combination of) Bell-diagonal states, and their projective measurements are real and defined in the Bell basis.
\end{lemma}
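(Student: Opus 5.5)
The plan follows the standard reduction for CHSH-based device-independent protocols (Jordan's lemma followed by symmetrization), as in \cite{tan2020advantage}. The one new requirement is to check that every step leaves unchanged, or only decreases, the quantity that matters here: the entropy $H(C|\B{ETM})_\rho$, equivalently the error probabilities $p_{\error}(C|E)_{\bar{\rho}(s)}$ entering Lemma~\ref{corr: neccessary_and_sufficiant_condition}. The observed statistics, and hence $\epsilon$ and the CHSH value, must also stay fixed.

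\emph{Step 1: block decomposition.} Each party has binary inputs and outputs, so the measurements can be taken projective by Naimark dilation, which is absorbed into the devices. Jordan's lemma then applies: Alice's two observables are simultaneously block-diagonal in blocks of dimension at most two, and the same holds for Bob. I let Eve hold a classical register $\Lambda$ recording the pair of blocks; the projection onto the blocks commutes with all measurements. Giving Eve this register can only decrease $H(C|\B{ETM})$. Since the statistics, and therefore $\delta_n$ and $h_2(\delta_n)$, are averages over $\lambda$, the state becomes a convex combination $\sum_\lambda p_\lambda \rho^{\lambda}_{Q_AQ_B}$ of two-qubit states, each measured with qubit projective measurements, with $\lambda$ known to Eve. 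Within each block I apply local unitaries, relabelling Eve's side consistently, so that every observable lies in the $x$--$z$ plane of the Bloch sphere, i.e.\ is a real combination of $\sigma_x$ and $\sigma_z$.

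\emph{Step 2: symmetrization to Bell-diagonal form.} Inside each block I twirl with the group generated by $\sigma_y\ot\sigma_y$ and complex conjugation (transposition in the computational basis), with Eve holding the twirl label.
\begin{itemize}
\item Conjugating by $\sigma_y\ot\sigma_y$ flips both observables in the $x$--$z$ plane. Both outputs are therefore flipped, which is exactly the bit-flip $T_i$ already built into the protocol. Because Eve holds $T$ in $\rho_{ET|ab}$, this part of the twirl costs nothing and leaves $p_{\tilde A\tilde B}$ and the CHSH value unchanged.
\item Transposition leaves real observables invariant. It is implemented on the purification by an antiunitary acting on Eve's system, so Eve's states $\rho_{E|ij}$ are replaced by their complex conjugates. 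This preserves all trace norms, hence preserves every $p_{\error}(C|E)_{\bar{\rho}(s)}$.
\end{itemize}
Mixing the conjugated and unconjugated versions, with Eve holding the flag, gives a state that is real and invariant under $\sigma_y\ot\sigma_y$. It is also invariant under the bit-flip symmetry $\sigma_x\ot\sigma_x$ and $\sigma_z\ot\sigma_z$ relevant to Alice's and Bob's correlation, after combining with the output relabellings. Such two-qubit states are Bell-diagonal in a suitable basis. The measurements stay real, and in that basis they take the claimed form.

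\emph{Step 3: monotonicity and the main obstacle.} The final step is to show that the entropy computed for the twirled ensemble lower-bounds the original one. This follows because conditioning Eve on additional classical labels ($\lambda$, twirl label, $T$) can only lower $H(C|\B{ETM})$. The other ingredient is that the pure-state and error-probability arguments of the previous sections apply block-wise and then average, using the chain rule exactly as in the proof of Lemma~\ref{corr: neccessary_and_sufficiant_condition}.

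I expect the hardest point to be showing that all the twirl operations commute with the advantage-distillation post-processing over $n$ rounds. The per-round twirl must act identically across the block, or it must be shown that round-dependent twirls are absorbed by the unitaries $U^{(\B{m})}$ used in Lemma~\ref{corr: neccessary_and_sufficiant_condition}. I would first try round-independent twirls and verify directly that $\omega_c^{(n)}$ transforms covariantly, so that the trace norms $\|s\omega_0^{(n)}-(1-s)\omega_1^{(n)}\|_1$ are unchanged.
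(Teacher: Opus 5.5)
Your Step~1 and your identification of the $\sigma_y\ot\sigma_y$ twirl with the protocol's flip $T_i$ agree with what the paper takes from Theorem~2 of \cite{tan2020advantage}. The Bell-diagonalization step, however, is not justified. A real, $\sigma_y\ot\sigma_y$-invariant two-qubit state need not be invariant under $\sigma_x\ot\sigma_x$ or $\sigma_z\ot\sigma_z$, and output relabellings cannot produce these symmetries. Conjugation by $\sigma_x$ maps $\cos\theta\,\sigma_z+\sin\theta\,\sigma_x$ to $-\cos\theta\,\sigma_z+\sin\theta\,\sigma_x$, which for generic CHSH angles is neither the same observable nor its negative. What you actually have is a state that is block diagonal on $\mathrm{span}\{\Phi^+,\Psi^-\}\oplus\mathrm{span}\{\Phi^-,\Psi^+\}$, the $-1$ and $+1$ eigenspaces of $\sigma_y\ot\sigma_y$, with real symmetric $2\times 2$ blocks. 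Such a state is generally not Bell diagonal. The missing ingredient, which the paper imports from \cite{PABGMS} (Lemma~14 of \cite{BhRC}), concerns local rotations about the $y$-axis, $e^{-i\theta_A\sigma_y/2}\ot e^{-i\theta_B\sigma_y/2}$. These are real, commute with $\sigma_y\ot\sigma_y$, and rotate the two blocks by independent angles. They therefore diagonalize both blocks simultaneously while keeping every measurement in the $x$--$z$ plane; only the measurement angles shift. Without this, ``Bell diagonal in a suitable basis'' does not guarantee that the change of basis is local or that it preserves the realness of the measurements.

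The second gap is the one you flag yourself, and neither of your options closes it. Complex conjugation is guaranteed to preserve $\|s\omega_0^{(n)}-(1-s)\omega_1^{(n)}\|_1$ only when it is applied to all $n$ rounds at once. To obtain the i.i.d.\ real state $\big(\tfrac12(\rho_{AB}+\rho_{AB}^{T})\big)^{\ot n}$ you need an independent flag in every round. Conditioned on a flag string with mixed entries, Eve then holds a \emph{partially} conjugated product, i.e.\ a partial transpose of the operator above, and partial transposition is not trace-norm preserving in general. For example, with pure conditional states an overlap $G=\braket{\psi_{E|00}}{\psi_{E|11}}$ enters as $G^{k}\bar G^{\,n-k}$ instead of $G^{n}$, which alters the relative phases among the $n$-round vectors making up $\omega_0^{(n)}$ and $\omega_1^{(n)}$. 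Hence Step~3 only shows that the twirled entropy is an average over flag strings, not that it is at most the original one. A round-independent flag does preserve the entropy, but it yields $\tfrac12\big(\rho_{AB}^{\ot n}+(\rho_{AB}^{T})^{\ot n}\big)$. That is not of the i.i.d.\ Bell-diagonal form the lemma, and the later $\rho_{E|ij}^{\ot n}$ analysis, requires.

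The paper avoids the flag altogether. Since the measurements are real rank-one projectors $P$, one has $\tr_{AB}[(P\ot\id)X^{T_{AB}}]=\tr_{AB}[(P^{T}\ot\id)X]=\tr_{AB}[(P\ot\id)X]$. Via Lemma~13 of \cite{BhRC}, the paper asserts that the realified state $\ket{\psi''}$ leaves Eve's post-measurement states \emph{identical}, not conjugated. Then $\rho_{E|ij}$, and with it $\omega_c^{(n)}$ and every $p_{\error}(C|E)_{\bar\rho(s)}$, are literally unchanged, and nothing has to be checked across the block.
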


\begin{proof}
The proof rests on the proof of Theorem 2 in \cite{tan2020advantage}, where it is shown that the symmetrization step [i.e. the step in which the raw bits $A_{i}$ and $B_{i}$ are modified to $\tilde{A}_{i}$ and $\tilde{B}_{i}$ by generating a bit $T_{i}$] renders the states $\ket{\psi}_{ABE|\lambda}$ and the states $\ket{\psi'}_{ABET|\lambda}$

\begin{eqnarray}
\ket{\psi'}_{ABET|\lambda} &=& \frac{1}{\sqrt{2}} \left( \ket{\psi}_{ABE|\lambda} \otimes \ket{0} + (Y \otimes Y \otimes \id) \ket{\psi}_{ABE|\lambda} \otimes \ket{1} \right) \nonumber
\end{eqnarray}

as identical -- meaning that any strategy that involves pre-sharing the state $\ket{\psi}_{ABE|\lambda}$ followed by locally measuring the state and then performing the symmetrization step is identical to a strategy in which the state $\ket{\psi'}_{ABET|\lambda}$ is used, in the sense that both strategies produce the same experimental statistics and give identical (pessimistic) key rates. Note that here $Y$ is the Pauli $y$ operator with respect to any basis in which the measurements $M_{a|x} := \proj{\alpha_{a|x}}$ and $N_{b|y} := \proj{\beta_{b|y}}$ are real and take the form:

\begin{eqnarray}\label{eqn: real_measurements}
    \ket{\alpha_{a|x}} =  \cos(\alpha_{a|x}) \ket{0}_{A} + \sin (\alpha_{a|x}) \ket{1}_{A}, \quad \ket{\beta_{b|y}} = \cos(\beta_{b|y}) \ket{0}_{B} + \sin(\beta_{b|y}) \ket{1}_{B}.
\end{eqnarray}

Here, we stress that while we have used a degree of freedom to define the $x$-$z$ plane in the Bloch sphere representation on the state space $\mathcal{S}(\mathcal{H}_{A})$ and $\mathcal{S}(\mathcal{H}_{B})$ respectively, we still have a degree of freedom to choose the $x$-axis -- i.e., there is a degree of freedom to choose the Pauli $x$ and $z$ operators respectively.

To complete the proof of the lemma, we simply extend the above argument to make further reductions following \cite{BhRC, rutvij_thesis}. In particular, we argue that for the proof of Theorem 2 in \cite{tan2020advantage}, it suffices to consider the state $\ket{\psi''}_{ABET|\lambda}$ instead of $\ket{\psi'}_{ABET|\lambda}$, given by 

\begin{eqnarray}
    \ket{\psi''}_{ABET|\lambda} = \frac{1}{\sqrt{4}} \left(  \left( \ket{\psi}_{ABE|\lambda}  +  (\ket{\psi}_{ABE|\lambda})^{P} \right) \otimes \ket{0} + \left( (Y \otimes Y \otimes \id) \ket{\psi}_{ABE|\lambda} \otimes \ket{1} + ((Y \otimes Y \otimes \id) \ket{\psi}_{ABE|\lambda})^{P} \right) \right) \nonumber
\end{eqnarray}

Here $(.)^{P}$ corresponds to taking the partial transpose over $\mathcal{H}_{A} \otimes \mathcal{H}_{B}$ induced by any Bell basis for which the measurements $M_{a|x}$ and $N_{b|y}$ are real -- as in, they take the form \eqref{eqn: real_measurements}. To see this, we note from Lemma 13 of \cite{BhRC} that the post-measurement states obey the following:

\begin{eqnarray}
    \tr_{AB}(M_{a|x} \otimes N_{b|y} \otimes \id_{ET} \proj{\psi}_{ABET|\lambda}) = \tr_{AB}(M_{a|x} \otimes N_{b|y} \otimes \id_{ET} \proj{\psi''}_{ABET|\lambda})  
\end{eqnarray}

if $M_{a|x}$ and $N_{b|y}$ are rank one projectors. Thus, $\ket{\psi''}$ can be replaced by $\ket{\psi'}$ in the proof of Theorem 2 in \cite{tan2020advantage}, as the post-measurement states held by Eve and the statistics generated by measuring the two states are identical.

The difference in using $\ket{\psi'}_{ABET|\lambda}$ and $\ket{\psi''}_{ABET|\lambda}$ lies in the marginal states $\rho'_{AB|\lambda} = \tr_{ET} \proj{\psi'}_{ABET|\lambda}$ and $\rho''_{AB|\lambda} = \tr_{ET} (\proj{\psi''}_{ABET|\lambda})$. The main difference is that $\rho''_{AB|\lambda} = (\rho''_{AB|\lambda})^{T}$, whereas such a relation does not hold for $\rho'_{AB|\lambda}$. In other words, the state $\rho''_{AB|\lambda}$ can be taken to be real in the family of Bell basis in which the measurement operators are real.

The final step is to choose this Bell basis such that the state $\rho_{AB}''$ is Bell diagonal. This is again shown in \cite{PABGMS} (also stated more formally in Lemma 14 of \cite{BhRC}), where it is shown that for every $\rho_{AB}''$ satisfying $(\rho_{AB}'')^{T} = \rho''_{AB}$ and $Y \otimes Y \rho''_{AB} Y \otimes Y = \rho''_{AB}$, there exist local unitaries $U_{A}$ and $U_{B}$ that diagonalize $\rho_{AB}''$, while also satisfying $U_{A} Y  U_{A} = Y$ and $U_{B} Y U_{B} = Y$. In other words, $\rho''_{AB}$ can be taken to be Bell diagonal by freely choosing the $x$-axis locally.  

Furthermore, the we can use simultaneously also ensure that the Bell diagonal state $\rho_{AB}'' = \lambda_{\Phi^+} \proj{\Phi^+ } + \lambda_{\Phi^-} \proj{\Phi^-} + \lambda_{\Psi^-} \proj{\Psi^-} + \lambda_{\Psi^+} \proj{\Psi^+}$ satisfies $\lambda_{\Phi^+} \geq \lambda_{\Phi^-} $ and $ \lambda_{\Psi^-} \geq \lambda_{\Psi^-}$ as also done in \cite{PABGMS}.
\end{proof}

We now bound the fidelities
\(
    F(\rho_{E|ab}, \rho_{E|a'b'})
\)
between Eve's post–measurement states when Alice and Bob share a Bell–diagonal state.  
Since $M_{a|x^*}$ and $N_{b|y^*}$ are rank–one projectors, the conditional state held by Eve after obtaining outcomes $(a,b)$ is pure. More precisely,
\[
    \rho_{E|ab} = \proj{\psi_{E|ab}}, \qquad
    \sqrt{p(ab|x^*y^*)}\ket{\psi_{E|ab}}
    =\sum_{i}\sqrt{\lambda_i}\,
    \braket{\alpha_{a|x^*},\beta_{b|y^*}}{i}\ket{i}_E,
\]
where $\{\ket{i}\}$ denotes the Bell basis, the $\lambda_i$ are the Bell–diagonal eigenvalues of $\rho_{AB}$, and
\[
    p(ab|x^*y^*)
    =\tr\!\big[(M_{a|x^*}\otimes N_{b|y^*})\rho_{AB}\big]
\]
is directly observable (e.g., determined by the noise parameter~$\epsilon$).

Because both conditional states are pure, the fidelity reduces to the squared overlap:
\[
    F(\rho_{E|ab},\rho_{E|a'b'})
    =\big|\braket{\psi_{E|ab}}{\psi_{E|a'b'}}\big|^2
    =\frac{\big|\braket{\tilde\psi_{E|ab}}
                      {\tilde\psi_{E|a'b'}}\big|^2}
           {p(ab|x^*y^*)\,p(a'b'|x^*y^*)},
\]
where $\ket{\tilde\psi_{E|ab}}$ is the unnormalised post–measurement vector.  
Substituting the expression above gives
\begin{align}
    F(\rho_{E|ab},\rho_{E|a'b'}) 
    &= \frac{\Big|\sum_i \lambda_i\,
       \braket{\alpha_{a|x^*},\beta_{b|y^*}}{i}\,
       \braket{i}{\alpha_{a'|x^*},\beta_{b'|y^*}}
       \Big|^2}
       {p(ab|x^*y^*)\,p(a'b'|x^*y^*)} \\
    &= \frac{
        \big|
            \bra{\alpha_{a|x^*},\beta_{b|y^*}}
            \rho_{AB}
            \ket{\alpha_{a'|x^*},\beta_{b'|y^*}}
        \big|^2
       }
       {p(ab|x^*y^*)\,p(a'b'|x^*y^*)}.
\end{align}

Since $\rho_{AB}$ is Bell–diagonal, the fidelity expression may be bounded explicitly by parameterising $\rho_{AB}$ in terms of its eigenvalues $\{\lambda_i\}$.  
For obtaining tight key–rate bounds in the Device Independent setting, one may further convert the optimisation over $\{\alpha_{a|x} , \beta_{b|y}, \lambda_i\}$ and the resulting conditional probabilities into a commuting polynomial optimisation problem and apply SDP–based relaxations (See \cite{rutvij_thesis,sharma2025enhancing} for a more detailed explanation of this technique, as well as illustrative examples of its use.). 

\section{Lower bounds on the key rate in finite block length regime}\label{app: lower bounds finite}

To derive lower bounds on key rates in the finite block length regime, we begin by reviewing -- and reproving for completenes -- the standard technique that relates a quantum hypothesis testing problem to a classical one. We then recall key results from classical asymptotic hypothesis testing and apply them to derive lower bounds on the conditional von Neumann entropy. This in turn enables us to obtain a lower bound on the DI advantage distillation key rate in terms of the Chernoff divergence. Crucially, our approach also yields improved von Neumann entropy lower bounds in the finite block length regime, thereby outperforming traditional fidelity-based bounds in practical scenarios. We note that the key

For the upcoming sections, we adopt the following notation for convenience: if $P$ is a distribution on the finite set 
$\mathcal{X}$, then $P^{\otimes n}$ denotes the i.i.d.\ product distribution on $\mathcal{X}^n$, 
defined by
\[
P^{\otimes n}(x_1, x_2, \ldots, x_n) = P(x_1) P(x_2) \cdots P(x_n).
\]

\subsection{Lower bounding quantum hypothesis testing problem to classical hypothesis testing problem} 

This section presents the details of the key insight that allows us to obtain lower bounds on the error probabilities for finite block length—namely, that a quantum binary hypothesis testing problem can be lower bounded by a suitably chosen classical hypothesis testing problem. This is the central result of~\cite{nussbaum2009chernoff} (see also~\cite{audenaert2012quantum} for further discussion), which was used to show that the Quantum Chernoff bound gives the correct error exponent in the asymptotic (binary) hypothesis testing problem. For completeness, we reproduce the main steps of~\cite{nussbaum2009chernoff}, adapting them to our setting by keeping the prior probabilities $\pi_0$ and $\pi_1$ general instead of fixing them to $\frac{1}{2}$ and $\frac{1}{2}$, as the role of the priors is crucial for our analysis.
 
\begin{lemma}[Theorem~2 from~\cite{nussbaum2009chernoff}]
\label{lemm: classical_hypothesis_testing_chernoff}
Let $\tau_0$ and $\tau_1$ be $d$-dimensional quantum states with spectral decompositions  
\[
\tau_0 = \sum_{i=1}^{d} \lambda_{\tau_0^{(i)}} \proj{{x_i}}, 
\quad
\sigma = \sum_{j=1}^{d} \lambda_{\tau_1^{(j)}} \proj{{y_j}}.
\]
{For some orthonormal basis $\{x_i\}_i$ and $\{y_j\}_j$.} Define the joint probability distributions
\[
P_0(i,j) := \lambda_{\tau_0^{(i)}} \,|\braket{{x_i}}{{y_j}}|^2, 
\qquad
P_1(i,j) := \lambda_{\tau_1^{(j)}} \,|\braket{{x_i}}{{y_j}}|^2.
\] 

Let
\[
\hat{\rho}^{(n)}(\pi_0) := \pi_0 \proj{0} \otimes \tau_0^{\ot n} 
+ \pi_{1} \proj{1} \otimes \tau_1^{\ot n}.
\]
Then the quantum minimal error probability satisfies
\[
p_{\error}(C|E)_{\hat{\rho}(\pi_0)} 
\ \geq\ \frac{1}{2} \,
p_{\error}^{C}\big( \{ \pi_0\,P_0^{\otimes n},\ \pi_1\,P_1^{\otimes n} \} \big).
\]
Here $p_{\error}^{C}( .)$ denotes the optimal error probability of distinguishing a classical hypothesis  -- i.e.,  
\(p_{\error}(C|E)\) evaluated for the cq state  
\(\pi_{0} \proj{0} \otimes P^{\ot n}_0 + \pi_{1} \proj{1} \otimes P_1^{\otimes n}\),  
where $P$ and $R$ are regarded as diagonal density operators.
\end{lemma}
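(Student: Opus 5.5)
We follow the Nussbaum--Szkoła argument directly, keeping the priors $\pi_0,\pi_1$ general. By the Holevo--Helstrom theorem, the quantum error is attained by a projective test $\{\Pi, \id-\Pi\}$:
\[
p_{\error}(C|E)_{\hat{\rho}(\pi_0)} = \pi_0 \tr\big((\id-\Pi)\tau_0^{\ot n}\big) + \pi_1 \tr\big(\Pi\,\tau_1^{\ot n}\big).
\]
Write the spectral decompositions of the $n$-fold states in the product bases, $\tau_0^{\ot n}=\sum_{\mathbf{i}} \lambda_{\mathbf{i}} \proj{x_{\mathbf{i}}}$ and $\tau_1^{\ot n}=\sum_{\mathbf{j}} \mu_{\mathbf{j}} \proj{y_{\mathbf{j}}}$. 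Here $\lambda_{\mathbf{i}}=\prod_k \lambda_{\tau_0^{(i_k)}}$, and $\mu_{\mathbf{j}}$ is defined analogously. By construction, $P_0^{\ot n}(\mathbf{i},\mathbf{j}) = \lambda_{\mathbf{i}} |\braket{x_{\mathbf{i}}}{y_{\mathbf{j}}}|^2$ and $P_1^{\ot n}(\mathbf{i},\mathbf{j})=\mu_{\mathbf{j}}|\braket{x_{\mathbf{i}}}{y_{\mathbf{j}}}|^2$. Since the product structure factorizes, the argument only needs to be done for a single pair of states $\tau_0,\tau_1$ in dimension $d^n$ with arbitrary eigenbases.

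The key step is a pointwise lower bound. Insert resolutions of identity in the two eigenbases:
\[
\pi_0\tr((\id-\Pi)\tau_0) + \pi_1\tr(\Pi\tau_1) = \sum_{i,j}\Big(\pi_0\lambda_i\,|\bra{x_i}(\id-\Pi)\ket{y_j}|^2 + \pi_1\mu_j\,|\bra{x_i}\Pi\ket{y_j}|^2\Big).
\]
This identity uses $\tr((\id-\Pi)\tau_0)=\sum_i\lambda_i\bra{x_i}(\id-\Pi)^2\ket{x_i}=\sum_{i,j}\lambda_i|\bra{x_i}(\id-\Pi)\ket{y_j}|^2$, since $\Pi$ is a projector, and similarly for the second term. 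Set $a=\bra{x_i}(\id-\Pi)\ket{y_j}$ and $b=\bra{x_i}\Pi\ket{y_j}$, so that $a+b=\braket{x_i}{y_j}$. Then
\[
\pi_0\lambda_i|a|^2+\pi_1\mu_j|b|^2 \ge \min\{\pi_0\lambda_i,\pi_1\mu_j\}\,(|a|^2+|b|^2)\ge \tfrac12\min\{\pi_0\lambda_i,\pi_1\mu_j\}\,|\braket{x_i}{y_j}|^2,
\]
where the last step uses $|a+b|^2\le 2(|a|^2+|b|^2)$.

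Summing over $(i,j)$ gives $\tfrac12\sum_{i,j}\min\{\pi_0 P_0(i,j),\pi_1 P_1(i,j)\}$. This is exactly $\tfrac12$ times the optimal classical error for the pair of diagonal states, because the classical optimum is achieved by the likelihood-ratio test. Applying this with $\tau_0^{\ot n},\tau_1^{\ot n}$ and using the factorization of the product measures yields the claim.

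The only delicate point is the Cauchy--Schwarz/parallelogram step, which produces the factor $\tfrac12$, together with making sure the resolution of identity is inserted correctly so that $\Pi^2=\Pi$ is used. The remainder is bookkeeping of indices in the tensor product.
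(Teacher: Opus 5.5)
Your proposal is correct and follows the paper's proof: expand the error of an optimal projective test in the two eigenbases, bound each term pointwise using $|a|^2+|b|^2\geq\tfrac12|a+b|^2$, identify $\tfrac12\sum_{i,j}\min\{\pi_0P_0(i,j),\pi_1P_1(i,j)\}$ with half the classical optimal error, and pass to $n$ copies via product eigenbases. You are slightly more explicit than the paper in noting that the expansion uses $\Pi^2=\Pi$. The step would in fact also hold for any POVM element $0\le M\le\id$, since then $M^2\le M$.
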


\begin{proof}
We start by proving this for $n =1$.
By definition of $p_{\error}(C|E)_{\hat{\rho}^{(1)}(\pi_0)}$ we have that
\begin{align}
p_{\error}(C|E)_{\hat{\rho}^{(1)}(\pi_0)} &= \pi_0 \tr(\Pi \tau_0) + \pi_1 \tr((\id - \Pi)\tau_1) \\
&= \sum_{i,j=1}^{d} \left( \pi_0 \lambda_{\tau_0^{(i)}} |\bra{{x_i}} \Pi \ket{{y_j}}|^2 + \pi_1 \lambda_{\tau_1^{(j)}} |\bra{{x_i}} (\id - \Pi) \ket{{y_j}}|^2 \right).
\end{align}
Using the inequality $|a|^2 + |b|^2 \geq \frac{1}{2}|a + b|^2$ for any $a,b \in \mathbb{C}$, we obtain
\begin{align}
p_{\error}(C|E)_{\hat{\rho}} &\geq \frac{1}{2} \sum_{i,j=1}^{d} \min \left\{ \pi_0 \lambda_{\tau_0^{(i)}}, \pi_1 \lambda_{\tau_1^{(j)}} \right\} |\braket{{x_i}}{{y_j}}|^2.
\end{align}
Now, observe that
\[
\min\left\{ \pi_0 P_0(i,j), \pi_1 {P}_1(i,j) \right\} = \min \left\{ \pi_0 \lambda_{\tau_0^{(i)}}, \pi_1 \lambda_{\tau_1^{(j)}} \right\} |\braket{{x_i}}{{y_j}}|^2.
\]
Hence,
\[
p_{\error}(C|E)_{\hat{\rho}^{(1)}(\pi_0)} \geq \frac{1}{2} \sum_{i,j=1}^{d} \min\left\{ \pi_0 P_0(i,j), \pi_1 P_1(i,j) \right\} \geq \frac{1}{2} p_{\error}(\tilde{\mathcal{E}}_{C}(\{ \pi_0 P_{1}, \pi_1 {P}_{2}  \})).
\]
Substituting $\rho \mapsto \rho^{\ot n}$ , $\sigma \mapsto \sigma^{\ot n}$ also implies $P \mapsto P^{\ot n}$ and ${R}\mapsto{R}^{\ot n}$. This concludes the proof for all $n \in \mathbb{N}$.
\end{proof} 

The following lemma reduces the problem of computing the error probability in quantum hypothesis testing to a classical hypothesis testing problem. This reduction simplifies the analysis considerably, as it essentially involves computing the trace distance between diagonal elements in a suitable basis. When the quantum states \(\tau_0\) and \(\tau_1\) are known, and for any finite \(n\), this classical hypothesis testing can be performed numerically by first carrying out a spectral decomposition of the states, which allows the definition of classical distributions \(P_0\) and \(P_1\). This method scales as \(d^{2n}\), where \(d\) is the dimension of the Hilbert space, since \(P_0\) and \(P_1\) are supported on a classical random variable with \(d^2\) outcomes—compared to \(d^{3n}\) for the full quantum hypothesis testing problem.

For small values of \(n\), direct numerical computation is feasible. However, if one is interested in asymptotic behavior, analytical lower bounds such as those involving the Chernoff divergence become useful. This is captured in the lemma below:

\begin{lemma}
Consider the classical-quantum (cq) state \({\hat{\rho}^{(n)}}(s) = s \proj{0}_C\ot \tau_0^{\ot n} + (1-s) \proj{1}_C \ot \tau_{1}^{\ot n} \). Then,
\[
 H(C|E)_{\hat{\rho}^{(n)}(1/2) } \geq \frac{{\pi}}{{2}\ln(2)} \frac{1}{(n+1)^{d^2}} {Q(\tau_{0} , \tau_{1})^n}.
\]
\end{lemma}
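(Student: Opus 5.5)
The plan is to start from the integral representation of Theorem~\ref{thm: CQ_integral_representation} with equal priors:
\[
H(C|E)_{\hat\rho^{(n)}(1/2)}=\int_0^1\frac{\dd s}{2s\ln 2}\Big(p_{\error}(C|E)_{\hat\rho^{(n)}(s)}+p_{\error}(C|E)_{\hat\rho^{(n)}(1-s)}\Big).
\]
It then suffices to lower bound each error probability pointwise in $s$ and integrate. For the pointwise bound I would invoke Lemma~\ref{lemm: classical_hypothesis_testing_chernoff} with priors $(s,1-s)$. In its proof the minimum appears explicitly, so we get
\[
p_{\error}(C|E)_{\hat\rho^{(n)}(s)}\geq \tfrac12\sum_{\mathbf{x}}\min\{sP_0^{\ot n}(\mathbf{x}),(1-s)P_1^{\ot n}(\mathbf{x})\},
\]
where $\mathbf{x}$ ranges over sequences in an alphabet of size $d^2$. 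Note also that $\sum_{i,j}P_0(i,j)^\alpha P_1(i,j)^{1-\alpha}=\tr(\tau_0^\alpha\tau_1^{1-\alpha})$. Hence the classical Chernoff quantity of $(P_0,P_1)$ coincides with $Q(\tau_0,\tau_1)$; this is the Nussbaum--Szko{\l}a observation.

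Next I would apply the method of types. On a type class $T_Q$ both $P_0^{\ot n}$ and $P_1^{\ot n}$ are constant, equal to $2^{-n(H(Q)+D(Q\|P_k))}$ for $k=0,1$. Moreover $|T_Q|\geq (n+1)^{-d^2}2^{nH(Q)}$ (Cover--Thomas). Restricting the sum to a single type class $Q$ gives
\[
p_{\error}(C|E)_{\hat\rho^{(n)}(s)}\geq \frac{1}{2(n+1)^{d^2}}\min\{s\,2^{-nD(Q\|P_0)},(1-s)\,2^{-nD(Q\|P_1)}\}.
\]
For each $\alpha\in[0,1]$ I would choose $Q$ close to the tilted distribution $P_\alpha\propto P_0^\alpha P_1^{1-\alpha}$. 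For this distribution $-D(P_\alpha\|P_0)$ and $-D(P_\alpha\|P_1)$ are affine in $\log(P_0/P_1)$, and the identity
\[
(1-\alpha)D(P_\alpha\|P_0)+\alpha D(P_\alpha\|P_1)=-\log Q_\alpha
\]
holds. Using this identity, the minimum $\min\{s a,(1-s)b\}$ can be bounded below by a weighted geometric-mean-type expression in $s$, $1-s$ and $Q_\alpha^n\geq Q(\tau_0,\tau_1)^n$. The weighting would be chosen, per $s$, so that the two exponents balance; at the Chernoff-optimal $\alpha^*$ they balance exactly, since $D(P_{\alpha^*}\|P_0)=D(P_{\alpha^*}\|P_1)=-\log Q$.

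Finally I would integrate. Terms of the form $s^{\alpha}(1-s)^{1-\alpha}$ in the pointwise bound, combined with the swapped term from $\hat\rho^{(n)}(1-s)$, give
\[
\int_0^1\frac{\dd s}{s}\big(s^{\alpha}(1-s)^{1-\alpha}+s^{1-\alpha}(1-s)^{\alpha}\big)=\frac{\pi}{\sin(\pi\alpha)}\geq \pi .
\]
Together with the prefactor $1/(2\ln 2)$, the $1/2$ from Lemma~\ref{lemm: classical_hypothesis_testing_chernoff}, and the factor $2$ from the two terms, this should produce the stated constant $\pi/(2\ln 2)$. Even a crude choice using the single balanced type, which yields $\min\{s,1-s\}$, already gives $H\geq Q^n/\big(2(n+1)^{d^2}\big)$; this has the correct exponential and polynomial scaling, so the only remaining issue is the constant.

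The main obstacle is the middle step. The tilted distribution $P_\alpha$ is generally not an $n$-type, so $Q$ must be a nearby rational type. This perturbs the exponents by $O(1/n)$, and the perturbation has to be absorbed, either by the slack in $(n+1)^{d^2}$ or by replacing $\min$ with a sum over all types near the optimum. The second delicate point is extracting the precise $s$-dependence needed for the $\pi/\sin(\pi\alpha)$ integral. If the exact constant resists, I would first establish the bound with the constant $\tfrac12$ via $\min\{s,1-s\}$, and then refine the $s$-dependence by letting the chosen type vary with $s$ along the tilted family.
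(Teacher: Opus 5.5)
\textbf{There is a genuine gap, and it is the step you flag as the main obstacle: it cannot be closed, because the statement is false as written.}

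Replacing the tilted distribution $P_\alpha$ by a nearby $n$-type $Q$ shifts $n\sum_x Q(x)\log\frac{P_0(x)}{P_1(x)}$ by up to about $d^2L$, where $L=\max_x|\log(P_0(x)/P_1(x))|$ over the common support. It can therefore happen that every available type class is unbalanced by a factor comparable to $2^{L}$. On such a class, $\min\{sP_0^{\ot n}(T_Q),(1-s)P_1^{\ot n}(T_Q)\}$ loses that factor relative to your geometric-mean expression. Since $L$ is not controlled by $n$ or $d$, neither the slack in $(n+1)^{d^2}$ nor summing over nearby types can absorb it.

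A concrete counterexample: take $n=1$, $d=2$, $\tau_0=\mathrm{diag}(1-\eta,\eta)$, $\tau_1=\mathrm{diag}(\eta,1-\eta)$. Then $H(C|E)=h_2(\eta)$ and $Q(\tau_0,\tau_1)=2\sqrt{\eta(1-\eta)}$. The claim therefore reads $h_2(\eta)\ge\frac{\pi}{16\ln 2}\sqrt{\eta(1-\eta)}$. At $\eta=10^{-6}$ the left side is about $2.1\times10^{-5}$ and the right side about $2.8\times10^{-4}$, so it fails. Here $P_{\alpha^*}=(\frac12,\frac12)$ is not a $1$-type, both $1$-types have likelihood ratio $(1-\eta)/\eta$, and $p_{\error}$ is of order $\eta$ rather than $\sqrt{\eta}$.

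The same pair breaks the bound for every odd $n$. The most balanced types, with $(n\pm1)/2$ occurrences of the second symbol, are unbalanced by exactly $(1-\eta)/\eta$, so $H(C|E)/Q^n\to0$ as $\eta\to0$. Pairs such as $\tau_1=\mathrm{diag}(\eta^m,1-\eta^m)$ with $m+1\nmid n$ do the same for even $n$. Your ``crude'' fallback $H\ge Q^n/(2(n+1)^{d^2})$ fails on the same example, since $Q/32\approx 6.3\times10^{-5}$ exceeds $h_2(10^{-6})$. Any correct version must either carry a state-dependent constant (e.g.\ one involving $L$) or hold only for $n\ge n_0(\tau_0,\tau_1)$.

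There are also some smaller slips. The tilting identity has its weights swapped: the correct form is $\alpha D(P_\alpha\|P_0)+(1-\alpha)D(P_\alpha\|P_1)=-\log Q_\alpha$. Since $\min\{x,y\}\le x^\lambda y^{1-\lambda}$ in general, a geometric-mean lower bound on the minimum needs exact balance, which is precisely the missing balanced type. Your constant count also double-counts. With the $\frac12$ from Nussbaum--Szko{\l}a and the minimum over a single type class, the two terms together already integrate to $\pi/\sin(\pi\alpha)$. So even in the ideal case you would obtain $\frac{\pi}{4\ln 2}$, not $\frac{\pi}{2\ln 2}$.

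The paper's proof follows the same route (Nussbaum--Szko{\l}a reduction, Sanov at $P^*_\lambda$ with $\lambda$ tuned to the priors, AM--GM, $\int_0^1 s^{-1/2}(1-s)^{1/2}\dd s=\pi/2$), and it does not supply the missing step. It applies the type-class bound at $P^*_\lambda$ as though $P^*_\lambda$ were an $n$-type. It adds the two type-class contributions instead of taking their minimum, which is the source of its extra factor $2$. It assumes a balancing $\lambda\in[0,1]$ exists for every $s$, which in general fails for $s$ near $0$ or $1$. Finally, it sets $\lambda_1=\lambda_2$, although the constraints for priors $(s,1-s)$ and $(1-s,s)$ differ.
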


Before proceeding to the proof, we emphasize that although this bound gives right scaling asymptotically, it may not be very useful for small \(n\) due to the pre-factor scaling as \((n+1)^{-d^2}\). For practical, finite \(n\), it is preferable to compute the classical error probabilities directly for \(P_0^{\ot n}\) and \({P}_1^{\ot n}\). Even in the asymptotic regime, more refined estimates based on large deviation theory, such as the result of Bahadur and Rao \cite{bahadur1960deviations} (see also \cite{audenaert2012quantum}), offer sharper characterizations:

\[
    {p_{\error}(C|E)_{\hat{\rho}^{(n)}(\pi_0)} }\sim \frac{1}{\sqrt{n}} c(\pi_0,\rho,\sigma) e^{- n I(\pi_0)},
\]
where \(I(\pi_0)\) is the corresponding rate function. These results yield a \(\frac{1}{\sqrt{n}}\) scaling for the pre-factor. While this constant \(c(\pi_0, \rho, \sigma)\) can in principle be calculated and inserted into the entropy integral representation, such techniques do not readily extend to the device-independent setting, where bounding this constant remains an open problem—perhaps addressable with future techniques.

For the time being, we provide a naive lower bound using the Chernoff exponent, which is tighter than the fidelity bound in the asymptotic regime, but not suitable for practical finite block-length scenarios. We now present the proof of the lemma.

\begin{proof}
From Lemma~\ref{lemm: classical_hypothesis_testing_chernoff} and Lemma~\ref{lemm: lower_bound_classical_chernoff}, we have
\begin{align}
p_{\error}(C|E)_{{\hat{\rho}^{(n)}(s)}} 
&\geq s^{\lambda_1}(1 - s)^{\lambda_1} \frac{{1}}{(n + 1)^{|\mathcal{X}|}} \pi_0^{\lambda_1} \pi_1^{1 - \lambda_1}  C(P_0 , P_1)^{n}, \\
p_{\error}(C|E)_{\hat{\rho}^{(n)}(1 - s)} 
&\geq s^{\lambda_2}(1 - s)^{\lambda_2} \frac{{1}}{(n + 1)^{|\mathcal{X}|}} \pi_0^{\lambda_2} \pi_1^{1 - \lambda_2} C(P_1 , P_0)^{n}.
\end{align}

Furthermore, note that the classical Chernoff bound is symmetric, i.e.\ $C(P_0 , P_1) = C(P_1 , P_0)$.  
Due to symmetry in the problem, we also have $\lambda_1 = \lambda_2 =: \lambda$.  
(Indeed, from the proof of Lemma~\ref{lemm: lower_bound_classical_chernoff}, the parameter $\lambda$ is determined by the constraints in Eq.~\eqref{eqn: constaint_for_classical_hypothesis_testing}, which are identical in both cases.)

A simple calculation yields
\[
C(P_0 , P_1) = \inf_{\alpha \in [0,1]} Q_{\alpha}(\tau_0 \| \tau_1) = Q(\tau_0 , \tau_1),
\]
which gives the lower bound
\[
p_{\error}(C|E)_{\hat{\rho}^{(n)}(s)} + p_{\error}(C|E)_{\hat{\rho}^{(n)} (1 - s)}  
\geq \frac{2}{(n + 1)^{d^2}}\, s^{1/2}(1 - s)^{1/2} \, Q(\tau_0 , \tau_1)^n.
\]
Here, we used the AM–GM inequality
\[
s^{\lambda}(1 - s)^{1 - \lambda} + (1 - s)^{\lambda} s^{1 - \lambda}  \geq 2 \, s^{1/2} (1 - s)^{1/2}.
\]

Finally, inserting this into the entropy integral representation \eqref{eqn: integral_representation_of_entropy} gives
\[
\begin{aligned}
H(C|E)_{\rho} 
&= \int_{0}^{1} \frac{\mathrm{d}s}{2 {s} \ln 2 \, } 
\left[ p_{\error}(C|E)_{\hat{\rho}^{(n)}(s)} + p_{\error}(C|E)_{\hat{\rho}^{(n)}(1 - s)} \right] \\
&\geq \frac{1}{\ln 2 \, (n + 1)^{d^2}} Q(\tau_0 , \tau_1)^n 
\int_{0}^{1} \frac{\mathrm{d}s}{s} \, s^{1/2} (1 - s)^{1/2} \\
&= \frac{{\pi}}{{2}\ln 2 \, (n + 1)^{d^2}} Q(\tau_0 , \tau_1)^n. 
\end{aligned}
\]
\end{proof}

\subsection{Asymptotic hypothesis testing for classical distributions}

{In this section, for the convenience of the reader, we present key results that bound the error probability in the classical binary hypothesis testing problem: given a sample 
$x \in \mathcal{X}^n$ drawn from either $P_0^{\otimes n}$ or $P_1^{\otimes n}$, 
decide which distribution generated the data.  
We adapt these results to our setting, extracting specific, narrowly tailored statements that will be directly relevant for our purposes.}

\subsubsection*{Bounding the Error Probability of the Optimal Test}

Given prior probabilities $\pi_0$ and $\pi_1$, the goal is to minimize the average error:
\[
\inf_{T: \mathrm{test}}\left[\pi_0 \alpha(T) + \pi_1 \beta(T)\right],
\]
where $\alpha(T)$ and $\beta(T)$ denote the Type I and Type II error probabilities under the test $T$. The optimal solution to this problem is the \emph{likelihood ratio test}, which decides in favor of $P_1$ if
\[
\frac{P_1^n(x)}{P_0^n(x)} \geq \frac{\pi_{0}}{\pi_{1}}.
\]

Our objective in this section is to derive a lower bound on the average error of this optimal test. To do this, we apply standard tools from classical hypothesis testing—specifically, the \emph{method of types} and \emph{Sanov's theorem}—which provide lower bounds on $\alpha(T)$ and $\beta(T)$.

These tools allow us to estimate the probabilities that the empirical distribution of the observed sample lies near a particular distribution $P^*_\lambda$, which we now define.

\paragraph{Interpolating Distributions:}

For any $\lambda \in [0,1]$, define the \emph{interpolating distribution}:
\begin{align}
P^*_\lambda(x) &:= \frac{P_0(x)^\lambda P_1(x)^{1-\lambda}}{N_\lambda}, \\
N_\lambda &:= \sum_{y \in \M{X}} P_0(y)^\lambda P_1(y)^{1-\lambda}.
\end{align}
These distributions interpolate between $P_0$ (when $\lambda = 1$) and $P_1$ (when $\lambda = 0$). They play a central role in bounding the performance of hypothesis tests by capturing the trade-off between the two hypotheses.

\paragraph{Lower Bounds via Sanov’s Theorem:}

By Sanov’s theorem, the probability that the empirical distribution of a sample of size $n$ lies near $P^*_\lambda$ under $P_0^{\otimes n}$ or $P_1^{\otimes n}$ scales exponentially with the KL divergence. Specifically, we have:
\begin{align}
\alpha(T) &\geq \frac{1}{(n+1)^{|\M{X}|}} 2^{-n D(P^*_\lambda \| P_0)}, \\
\beta(T) &\geq \frac{1}{(n+1)^{|\M{X}|}} 2^{-n D(P^*_\lambda \| P_1)}.
\end{align}

As a result, the average error is bounded from below as:
\[
\pi_0 \alpha(T) + \pi_1 \beta(T) 
\geq \frac{1}{(n+1)^{|\M{X}|}} \left[ \pi_0 \cdot 2^{-n D(P^*_\lambda \| P_0)} + \pi_1 \cdot 2^{-n D(P^*_\lambda \| P_1)} \right].
\]

This bound depends on the choice of $\lambda$ and the corresponding interpolating distribution $P^*_\lambda$. In the next section, we will explore how to choose $\lambda$ optimally to make this bound as tight as possible, using the relative divergences and prior probabilities.

Now consider the following Lemma 
\begin{lemma}\label{lemm: lower_bound_classical_chernoff}
Let $P_0$ and $P_1$ be two probability distributions defined on a finite alphabet $\mathcal{X}$, and let $\pi_0, \pi_1 > 0$ with $\pi_0 + \pi_1 = 1$ denote the prior probabilities. Consider the hypothesis testing problem of distinguishing between the two product distributions $\pi_0 P_0^{\otimes n}$ and $\pi_1 P_1^{\otimes n}$. Then the minimum error probability $p_{\error}$ satisfies the lower bound
\begin{equation}
  p_{\error}(\{ \pi_0 P_0^{\ot n} , \pi_1 P_{1}^{\ot n}  \} ) \geq \frac{2}{(n +1)^{|\mathcal{X}|}} \pi_0^{\lambda} \pi_1^{1 -\lambda} C(P_0 , P_1)^{n},
\end{equation}
where
\begin{equation}
  C(P_0 , P_1) := \inf_{\lambda \in [0 ,1]}\sum_{x \in \mathcal{X}} P_0(x)^{\lambda} P_1(x)^{1-\lambda}
\end{equation}
is the classical {Chernoff bound}. 
\end{lemma}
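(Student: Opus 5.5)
The plan is to use the method of types directly on the exact classical error. For classical hypotheses the optimal (likelihood ratio) test gives
\[
p_{\error}(\{\pi_0P_0^{\ot n},\pi_1P_1^{\ot n}\})=\sum_{\B{x}\in\M{X}^n}\min\{\pi_0P_0^{\ot n}(\B{x}),\pi_1P_1^{\ot n}(\B{x})\}.
\]
Since every summand is nonnegative, I will lower bound this sum by keeping only one well-chosen type class $T_Q$.

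\textbf{Restriction to one type class.} For $\B{x}\in T_Q$ we have $P_i^{\ot n}(\B{x})=2^{-n(H(Q)+D(Q\|P_i))}$. Combined with the standard estimate $|T_Q|\ge (n+1)^{-|\M{X}|}2^{nH(Q)}$, this gives
\[
p_{\error}\ \ge\ \frac{1}{(n+1)^{|\M{X}|}}\min\big\{\pi_0 2^{-nD(Q\|P_0)},\ \pi_1 2^{-nD(Q\|P_1)}\big\}.
\]
This is exactly the Sanov-type bound sketched just before the lemma, but with a $\min$ rather than a sum.

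\textbf{Choice of $Q$ on the tilted family.} Take $Q=P^*_\lambda$. A direct computation gives
\[
\lambda D(P^*_\lambda\|P_0)+(1-\lambda)D(P^*_\lambda\|P_1)=-\log N_\lambda .
\]
The function
\[
g(\lambda)=\log\pi_0-nD(P^*_\lambda\|P_0)-\log\pi_1+nD(P^*_\lambda\|P_1)
\]
is continuous in $\lambda$. It is $\le\log(\pi_0/\pi_1)-nD(P_1\|P_0)$ at $\lambda=0$ and $\ge \log(\pi_0/\pi_1)+nD(P_0\|P_1)$ at $\lambda=1$. In the generic case it therefore changes sign, so by the intermediate value theorem there is a $\lambda$ with $\pi_02^{-nD_0}=\pi_12^{-nD_1}$, writing $D_i:=D(P^*_\lambda\|P_i)$. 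Call this common value $V$. Then $\log V$ equals the $\lambda$-weighted average of the two exponents, and the identity above yields
\[
V=\pi_0^{\lambda}\pi_1^{1-\lambda}N_\lambda^{\,n}\ \ge\ \pi_0^\lambda\pi_1^{1-\lambda}C(P_0,P_1)^n .
\]
The inequality holds because $N_\lambda\ge\inf_{\lambda'}N_{\lambda'}=C(P_0,P_1)$. The degenerate cases where $g$ does not change sign (an endpoint, or disjoint supports) are handled separately and trivially.

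\textbf{Main obstacle and constant.} The main obstacle is that $P^*_\lambda$ is generally not an $n$-type. I would replace it by the nearest type $Q_n$, with $\|Q_n-P^*_\lambda\|_\infty\le 1/n$, and control the change of $nD(Q\|P_i)$. When $\mathrm{supp}\,Q_n\subseteq\mathrm{supp}\,P^*_\lambda$ this change is $O(1)$ uniformly in $n$, since $D(\cdot\|P_i)$ is Lipschitz on the relevant support. The resulting constant factor should be absorbed into the polynomial prefactor. If needed, I would enlarge $(n+1)^{|\M{X}|}$ slightly, or sum over the $O(1)$ neighbouring types to recover it.

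The factor $2$ in the statement does not appear in the argument above, which gives a prefactor $1$. I would try to recover it by summing the contributions of two adjacent types that straddle the equalising point. Failing that, I would note that the lemma is only applied together with Lemma~\ref{lemm: classical_hypothesis_testing_chernoff}, whose factor $\tfrac12$ cancels it, so only the form $\mathrm{poly}(n)^{-1}\pi_0^\lambda\pi_1^{1-\lambda}C^n$ matters downstream.
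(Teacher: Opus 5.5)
\textbf{Where the proof breaks.} Your first three steps are sound: the exact formula $p_{\error}=\sum_{\mathbf{x}}\min\{\pi_0P_0^{\ot n}(\mathbf{x}),\pi_1P_1^{\ot n}(\mathbf{x})\}$, the restriction to a single type class with a $\min$, and the identity $V=\pi_0^\lambda\pi_1^{1-\lambda}N_\lambda^n$ at the equalising $\lambda$. The step that fails is rounding $P^*_\lambda$ to an $n$-type. The change in $nD(Q\|P_i)$ is $O(1)$ in $n$, but its size depends on $P_0,P_1$, for instance through $L:=\sum_a|\log(P_0(a)/P_1(a))|$. This quantity is unbounded over pairs of distributions. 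For a good rounding the loss factor is of order $2^{-O(|\mathcal{X}|L)}$, so it cannot be absorbed into $(n+1)^{-|\mathcal{X}|}$, nor into any prefactor depending only on $n$ and $|\mathcal{X}|$.

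\textbf{The lemma as stated is false.} No sharper argument repairs this. Take $n=1$, $\mathcal{X}=\{0,1\}$, $P_0=(1-p,p)$, $P_1=(p,1-p)$ and $\pi_0=\pi_1=\tfrac12$. Then $p_{\error}=p$ and $C(P_0,P_1)=2\sqrt{p(1-p)}$, attained at $\lambda=\tfrac12$. Also $\pi_0^\lambda\pi_1^{1-\lambda}=\tfrac12$ for every $\lambda$. So the claimed lower bound is $\tfrac12\sqrt{p(1-p)}$, which exceeds $p$ whenever $p<1/5$. Your prefactor-$1$ version fails whenever $p<1/17$. For every odd $n$ the same example gives $p_{\error}=\Theta(p^{(n+1)/2})$ against $C^n=\Theta(p^{n/2})$, so no prefactor $f(n,|\mathcal{X}|)>0$ works. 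In particular, your fallback that only the shape $\mathrm{poly}(n)^{-1}\pi_0^\lambda\pi_1^{1-\lambda}C^n$ matters downstream does not rescue the argument. The same example with $n=1$, $p=10^{-4}$ also violates the paper's subsequent bound $H(C|E)\ge\frac{\pi}{2\ln 2}(n+1)^{-d^2}Q(\tau_0,\tau_1)^n$: the left side is $h_2(10^{-4})\approx1.5\times10^{-3}$, the right side about $2.8\times10^{-3}$.

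\textbf{The paper's proof has the same hole.} It avoids your obstacle only by treating $P^*_\lambda$ as an $n$-type without comment. It also adds Sanov-type lower bounds on $\alpha(T)$ and $\beta(T)$ as if both held for the same test, which they do not. A test that always accepts $P_0$ has $\alpha(T)=0$. When $P^*_\lambda$ is a type, each sequence in $T_{P^*_\lambda}$ is a likelihood-ratio tie at the equalising $\lambda$ and enters only one of the two errors. This double counting is the only source of the factor $2$. Your single-type-class $\min$ is the correct repair.

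\textbf{What your route does prove.} If the equalising $P^*_\lambda$ is an $n$-type, then
\[
p_{\error}\ \ge\ (n+1)^{-(|\mathcal{X}|-1)}\,\pi_0^\lambda\pi_1^{1-\lambda}N_\lambda^n .
\]
This even recovers the factor $2$: there are at most $(n+1)^{|\mathcal{X}|-1}$ types, so $|T_Q|\ge(n+1)^{-(|\mathcal{X}|-1)}2^{nH(Q)}$, and $(n+1)\ge 2$. In general, rounding to a nearby type (for example the mode type of $(P^*_\lambda)^{\ot n}$) costs a constant $K(P_0,P_1)>0$. Hence the stated inequality holds for $n\ge n_0(P_0,P_1)$, or for all $n$ with an extra factor $K(P_0,P_1)$. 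That suffices for error exponents, but not for the uniform finite-$n$ claim of the lemma.
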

\begin{proof}
We use Sanov's theorem to bound the error probability of the hypothesis testing
\begin{eqnarray}\label{eqn:sanov_lowerbound}
\pi_0 \alpha(T) + \pi_1 \beta(T) \geq \frac{1}{(n +1)^{|\M{X}| }} \left( \pi_0 2^{-n {D}(P_{{\lambda}}^*|| P_0)} + \pi_1 2^{-n {D}(P_{\lambda}^*||P_1)} \right) 
\end{eqnarray}

We compute the KL divergences of $P^*_\lambda$ from $P_0$ and $P_1$:
\begin{align}
D(P^*_\lambda \| P_0) 
&= \sum_x P^*_\lambda(x) \log \frac{P^*_\lambda(x)}{P_0(x)} \\
&= (1 - \lambda) \mathbb{E}_{P^*_\lambda} \left[ \log \frac{P_1}{P_0} \right] - \log N_\lambda, \\[1em]
D(P^*_\lambda \| P_1) 
&= \sum_x P^*_\lambda(x) \log \frac{P^*_\lambda(x)}{P_1(x)} \\
&= -\lambda \mathbb{E}_{P^*_\lambda} \left[ \log \frac{P_1}{P_0} \right] - \log N_\lambda.
\end{align}

The optimal $\lambda$ for bounding the average error satisfies:
\begin{eqnarray}\label{eqn: constaint_for_classical_hypothesis_testing}
 D(P^*_\lambda \| P_0) - D(P^*_\lambda \| P_1) = \frac{1}{n} \log \frac{\pi_0}{\pi_1}.
\end{eqnarray}

This condition ensures that the relative divergence cost from $P_0$ vs. $P_1$ aligns with the logarithmic ratio of the priors.

Using the expressions above:
\[
D(P^*_\lambda \| P_0) - D(P^*_\lambda \| P_1)
= \mathbb{E}_{P^*_\lambda} \left[ \log \frac{P_1}{P_0} \right].
\]

Hence, the condition simplifies to:
\[
\mathbb{E}_{P^*_\lambda} \left[ \log \frac{P_1}{P_0} \right] = \frac{1}{n} \log \frac{\pi_0}{\pi_1}.
\]

Hence, we can simplify \eqref{eqn:sanov_lowerbound} to:
\begin{align*}
&= \frac{1}{(n+1)^{|\M{X}|}} \left( \pi_0^{\lambda} \pi_1^{1 - \lambda} N_\lambda^n + \pi_1^{1 - \lambda} \pi_0^{\lambda} N_\lambda^n \right) \\
&= \frac{2}{(n+1)^{|\M{X}|}} \pi_0^{\lambda} \pi_1^{1 - \lambda} N_\lambda^n.
\end{align*}
 
The term $N_\lambda$ is minimized at a particular $\lambda \in [0,1]$ depending on $P_0$ and $P_1$. Define the Chernoff coefficient and distance as:
\begin{align}
C(P_0 {,} P_1) &:= \inf_{\lambda \in [0,1]} \sum_x P_0(x)^{\lambda} P_1(x)^{1 - \lambda}, \\
             &= \inf_{\lambda \in [0 , 1] } Q_{\lambda}(P_0 \| P_1) 
\end{align}

Thus, the average error is bounded as:
\[
\pi_0 \alpha {(T)} + \pi_1 \beta {(T)} \geq \frac{2}{(n+1)^{|\M{X}|}} {\pi_0^\lambda \pi_1^{1-\lambda}} \cdot C(P_0 {,} P_1)^n.
\]
\end{proof}

\section{Supplementary mathematical results}\label{app: maths} 
In this section, we present all the supplementary mathematical results that go in towards proving other results in the different sections of the appendix. 

\subsection{Results Concerning the Binary Entropy}

We begin by proving an integral representation of the binary entropy, which will be instrumental in establishing Lemma~\ref{lemm: known_lower_bounds_using_the_integral_representation}. The following result provides an alternative expression for the binary entropy in terms of an integral, which may be of independent interest. For related formulations and applications in deriving device-independent randomness and key rates, see also~\cite{rutvij_thesis}.

\begin{lemma}\label{lem:integral_expression_binary_entropy}
The binary entropy admits the following integral representation:
\begin{equation}
    \forall x \in [-1,1]: \quad h_2\left( \frac{1+x}{2} \right) = 1 - \int_0^1 \frac{\dd s}{2s \ln 2} \left( 1 - \sqrt{1 - 4s(1-s)x^2} \right).
\end{equation}
\end{lemma}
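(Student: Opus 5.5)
One route is to use Theorem~\ref{thm: CQ_integral_representation} itself. Take the classical two-state ensemble $\omega_0=\proj{0}$-type commuting states chosen so that $H(C|E)$ equals $1-h_2(\tfrac{1}{2}+\tfrac{x}{2})$. That is more indirect than needed, though. I would instead do a direct calculus verification, which is self-contained.

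\textbf{Step 1: reduce the case.} Set $g(x):=1-\int_0^1 \frac{\dd s}{2s\ln 2}\bigl(1-\sqrt{1-4s(1-s)x^2}\bigr)$. Both sides of the claimed identity are even in $x$, so it suffices to take $x\in[0,1]$. The integrand is well defined on this range:
\begin{itemize}
\item $0\le 4s(1-s)x^2\le 1$ for all $s\in[0,1]$;
\item near $s=0$, $1-\sqrt{1-4s(1-s)x^2}\approx 2s(1-s)x^2$, so the factor $1/s$ causes no singularity.
\end{itemize}
At $x=0$ the integrand vanishes, so $g(0)=1=h_2(1/2)$.

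\textbf{Step 2: differentiate in $x$.} Differentiating under the integral sign (justified for $x\in[0,1)$ by dominated convergence) gives
\[
g'(x)=-\frac{1}{\ln 2}\int_0^1 \frac{2(1-s)x\,\dd s}{\sqrt{1-4s(1-s)x^2}} .
\]
The target derivative is
\[
\frac{\dd}{\dd x}h_2\!\left(\tfrac{1+x}{2}\right)=-\frac{1}{2}\log_2\frac{1+x}{1-x}=-\frac{\operatorname{artanh}(x)}{\ln 2}.
\]
So it suffices to show $\int_0^1 \frac{2(1-s)x}{\sqrt{1-4s(1-s)x^2}}\,\dd s=\operatorname{artanh}(x)$.

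\textbf{Step 3: evaluate the integral.} Substitute $u=2s-1$, so that $4s(1-s)=1-u^2$ and $2(1-s)=1-u$. The integral becomes
\[
\frac{x}{2}\int_{-1}^{1}\frac{(1-u)\,\dd u}{\sqrt{1-x^2+x^2u^2}} .
\]
The odd part $-u/\sqrt{\cdot}$ integrates to zero. The even part gives
\[
x\int_0^1\frac{\dd u}{\sqrt{(1-x^2)+x^2u^2}}=\operatorname{arsinh}\!\left(\frac{x}{\sqrt{1-x^2}}\right)=\operatorname{artanh}(x),
\]
as required.

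\textbf{Step 4: conclude and extend.} Since $g(0)=h_2(1/2)$ and the derivatives agree on $[0,1)$, the identity holds on $[0,1)$. It extends to $x=1$ by continuity of both sides, with the integral side handled by monotone convergence.

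The main care point is Step 2: justifying differentiation under the integral sign and handling the endpoint $x=1$, where $\operatorname{artanh}$ diverges. Working on $[0,1)$ and passing to the limit avoids both issues.
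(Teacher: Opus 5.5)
Your argument is correct, and it takes a different route from the paper's. The paper proves the identity by power series in $x^2$. It expands $\sqrt{1-4s(1-s)x^2}$ binomially and integrates term by term using the Beta integrals $\int_0^1 s^{n-1}(1-s)^n\,\dd s=\frac{n!\,(n-1)!}{(2n)!}$. It then simplifies the double-factorial coefficients to $\frac{1}{2n(2n-1)\ln 2}$, matching the Taylor series $h_2\bigl(\frac{1+x}{2}\bigr)=1-\frac{1}{\ln 2}\sum_{n\ge 1}\frac{x^{2n}}{2n(2n-1)}$.

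You instead differentiate in $x$, which collapses everything to one elementary integral. The following steps all check out:
your substitution $u=2s-1$, the vanishing of the odd part, the identity $\mathrm{arsinh}\bigl(x/\sqrt{1-x^2}\bigr)=\mathrm{artanh}(x)$, the anchor $g(0)=1$, and the domination needed to differentiate under the integral on $[0,x_0]$ with $x_0<1$.

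Your route buys three things. It is short. It needs no coefficient bookkeeping; the paper's intermediate displays contain small slips, e.g.\ $n!$ where $(n-1)!$ is meant, though its final coefficients are right. And it treats the endpoint $|x|=1$ cleanly. There the square root is not analytic at $s=1/2$, so the paper's justification of term-by-term integration (``analytic on $[-1,1]$'') is loose. The swap really rests on the nonnegativity of the coefficients of $1-\sqrt{1-t}$ plus monotone convergence. The paper's route, in exchange, exhibits the full Taylor expansion and shows coefficient by coefficient how the integral reproduces it.

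One correction to your unpursued aside. Making the operational route via Theorem~\ref{thm: CQ_integral_representation} work requires two \emph{pure} states with overlap $|x|$. For these, $H(C|E)=1-h_2\bigl(\frac{1+|x|}{2}\bigr)$ and the Helstrom error is exactly $\frac12\bigl(1-\sqrt{1-4s(1-s)x^2}\bigr)$. Commuting states give error probabilities that are piecewise linear in $s$, so they would not produce the square-root integrand.
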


\begin{proof}
We prove the identity by computing the power series expansion of both sides and showing they agree term-by-term.

\paragraph{Left hand side: Series expansion of the binary entropy.} Recall that for \( x \in [-1,1] \), the binary entropy satisfies:
\[
h_2\left( \frac{1+x}{2} \right) = -\frac{1+x}{2} \log\left( \frac{1+x}{2} \right) - \frac{1-x}{2} \log\left( \frac{1-x}{2} \right).
\]
Expanding \( \log(1 \pm x) \) using the Taylor series
\[
\log(1 \pm x) = \pm \sum_{n=1}^{\infty} \frac{(\mp 1)^{n +1} x^n}{n \ln 2},
\]
and simplifying, one obtains the standard power series expansion:
\begin{equation}
    h_2\left( \frac{1+x}{2} \right) = 1 - \frac{1}{\ln 2} \sum_{n=1}^{\infty} \frac{x^{2n}}{2n(2n - 1)}.
\end{equation}

\paragraph{Right hand side: Series expansion of the integral.} Now consider the right-hand side of the proposed identity. We expand the square root in a power series:
\[
\sqrt{1 - 4s(1-s)x^2} = \sum_{n=0}^{\infty} \binom{1/2}{n} \left[ -4s(1-s)x^2 \right]^n = \sum_{n=0}^{\infty} \binom{1/2}{n} (-1)^n 4^n x^{2n} s^n (1-s)^n.
\]
Subtracting from 1 and integrating term-by-term (justified since the integrand is analytic in \(x\) on \([-1,1]\)), we get:
\begin{align}
    &- \int_0^1 \frac{\dd s}{2s \ln 2} \left( 1 - \sqrt{1 - 4s(1-s)x^2} \right) \\
    &= -\sum_{n=1}^{\infty} \frac{x^{2n}}{2 \ln 2} \binom{1/2}{n} (-1)^{n{+1}} 4^n \int_0^1 s^{n-1} (1-s)^n \dd s.
\end{align}
Using the Beta function identity
\[
\int_0^1 s^{n-1} (1-s)^n \dd s = B(n, n+1) = \frac{\Gamma(n)\Gamma(n+1)}{\Gamma(2n+1)} = \frac{n!(n-1)!}{(2n)!},
\]
we obtain:
\begin{equation}
    - \int_0^1 \frac{\dd s}{2s \ln 2} \left( 1 - \sqrt{1 - 4s(1-s)x^2} \right) = \sum_{n=1}^{\infty} \frac{x^{2n}}{2 \ln 2} \binom{1/2}{n} (-1)^{{n}} 4^n \cdot \frac{n! (n-1)!}{(2n)!}.
\end{equation}

\paragraph{Step 3: Simplifying the coefficients.}      {First note that for \(n=1\), the coefficient of \(x^{2n}\) is \(\frac{1}{2}\)}. Recall that
\[
\binom{1/2}{n} = \frac{(-1)^{n+1} (2n - 3)!!}{2^n n!}, \quad \text{for } n \geq {2}.
\]
Substituting this into the expression gives:
\begin{align}
    \text{Coefficient of } x^{2n} &= {-}\frac{1}{2 \ln 2} \cdot \frac{(2n - 3)!!}{2^n n!} \cdot 4^n \cdot \frac{n! (n-1)!}{(2n)!} \\
    &= {-}\frac{1}{2 \ln 2} \cdot \frac{(2n - 3)!!}{2^n} \cdot 2^{2n} \cdot \frac{(n-1)!}{(2n)!} \\
    &= {-}\frac{1}{2 \ln 2} \cdot \frac{(2n - 3)!! \cdot 2^n \cdot (n-1)!}{(2n)!}.
\end{align}
Now observe that 
\[
(2n - 3)!! \cdot 2^n \cdot (n-1)! = {2(2n-2)!},
\]
which implies:
\[
\frac{(2n - 3)!! \cdot 2^n \cdot n!}{(2n)!} = \frac{1}{n(2n - 1)}.
\]
Hence, the series becomes:
\[
{-}\sum_{n=1}^{\infty} \frac{x^{2n}}{2n(2n - 1)\ln 2},
\]
which agrees exactly with the expansion of the binary entropy. This completes the proof.
\end{proof}

Next, we analyze the behavior of the binary entropy term \( h_2(\delta_n) \) in the asymptotic limit. Specifically, we show that it scales as
\[
- n \log\left(\frac{\epsilon}{1 - \epsilon} \right) \cdot \left( \frac{\epsilon}{1 - \epsilon} \right)^{n}.
\]
This is used in the proof of Corollary~\ref{corr: asymptotic_neccesary_and_sufficient}, as well as in other key results throughout the paper.

\begin{lemma}\label{lemm: ratio}
Let \(\delta_n := \frac{\epsilon^{n}}{\epsilon^{n} + (1 - \epsilon)^n}\), where \(\epsilon \in (0, \tfrac{1}{2})\). Then,
\begin{eqnarray}
    \lim_{n \rightarrow \infty} \frac{h_2(\delta_n)}{ - n \left( \frac{\epsilon}{1 - \epsilon} \right)^n \log \left( \frac{\epsilon}{1 - \epsilon} \right) } = 1.
\end{eqnarray}
\end{lemma}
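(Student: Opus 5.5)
Write $r := \epsilon/(1-\epsilon) \in (0,1)$. The plan is to first pin down the asymptotics of $\delta_n$ itself and then expand $h_2$ near zero.

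\emph{Asymptotics of $\delta_n$.} Dividing numerator and denominator by $(1-\epsilon)^n$ gives
\[
\delta_n = \frac{r^n}{1+r^n},
\]
so $\delta_n \to 0$ and $\delta_n / r^n \to 1$. Moreover,
\[
\log \delta_n = n\log r - \log(1+r^n),
\]
hence
\[
\frac{-\log\delta_n}{-n\log r} = 1 + \frac{\log(1+r^n)}{-n\log r} \;\to\; 1 .
\]

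\emph{Expansion of $h_2$ near zero.} Write
\[
h_2(\delta) = -\delta\log\delta - (1-\delta)\log(1-\delta).
\]
As $\delta \to 0$ we have $-(1-\delta)\log(1-\delta) = \delta/\ln 2 + O(\delta^2)$. Therefore
\[
\frac{h_2(\delta)}{-\delta\log\delta} = 1 + \frac{-(1-\delta)\log(1-\delta)}{-\delta\log\delta} \;\to\; 1,
\]
because the second ratio behaves like $1/(-\ln 2\,\log\delta) \to 0$. This also gives the two limits used earlier in the general-case proof, namely $\delta_n/h_2(\delta_n)\to 0$ and $-\delta_n\log\delta_n/h_2(\delta_n)\to 1$.

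\emph{Combining the factors.} Split the target ratio as
\[
\frac{h_2(\delta_n)}{-n r^n \log r}
= \frac{h_2(\delta_n)}{-\delta_n\log\delta_n}\cdot\frac{\delta_n}{r^n}\cdot\frac{-\log\delta_n}{-n\log r}.
\]
Each of the three factors tends to $1$ by the steps above, so their product tends to $1$.

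There is no real obstacle here. The only care needed is the sign convention: $\log r < 0$, so the denominator is positive. One also needs $\epsilon < 1/2$ strictly, so that $r<1$ and $r^n \to 0$; this is where the hypothesis $\epsilon\in(0,\tfrac12)$ enters.
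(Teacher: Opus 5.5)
Your proof is correct and follows essentially the same route as the paper. Both arguments factor the ratio through $-\delta_n\log\delta_n$ and $r^n$, with $r=\epsilon/(1-\epsilon)$, and show that each factor tends to $1$. The paper uses L'H\^opital's rule for $h_2(\delta_n)/(-\delta_n\log\delta_n)$ and for the remaining logarithmic factor, whereas you use the Taylor expansion of $(1-\delta)\log(1-\delta)$ and the exact identity $\log\delta_n = n\log r - \log(1+r^n)$, which makes your version a bit cleaner and more transparent.
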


\begin{proof}
First, we compute the asymptotic behavior of the binary entropy function $h_2(\delta_n)$. Using L'Hôpital's rule, we obtain  
\begin{equation}
    \lim_{n \rightarrow \infty }\frac{-\delta_n \log(\delta_n)}{h_2(\delta_n)} 
    = \lim_{n \rightarrow \infty} \frac{1 + \ln(\delta_n)}{\ln(\delta_n) - \ln(1 - \delta_n)} 
    = \lim_{n \rightarrow \infty} (1 - \delta_n) = 1.
\end{equation}
Furthermore, we evaluate the ratio  
\begin{equation}
    \lim_{n \rightarrow \infty} \frac{\left(\frac{\epsilon}{1 - \epsilon}\right)^n}{\delta_n} 
    = \lim_{n \rightarrow \infty} \frac{\epsilon^n + (1 - \epsilon)^n}{(1 - \epsilon)^n} = 1.
\end{equation}
Combining these results, we find  
\begin{align}
    \lim_{n \rightarrow \infty} \frac{n \left( \frac{\epsilon}{1 - \epsilon} \right) ^{n}\log\left( \frac{\epsilon}{1 - \epsilon} \right) }{h_2(\delta_n)} 
    &= \lim_{n\rightarrow\infty} \left( \frac{\delta_n \log(\delta_n)}{h_2(\delta_n)} \right)  
    \lim_{n \rightarrow \infty} \left(\frac{\left( \frac{\epsilon}{1 - \epsilon} \right)^n \log\left(\left( \frac{\epsilon}{1 - \epsilon} \right)^{n}  \right) }{\delta_n \log(\delta_n)} \right)   \\ 
    &= \lim_{n \rightarrow \infty} -\frac{(1-\epsilon )^{-n} \left(\epsilon^n+(1-\epsilon )^n\right) \log
   \left(\frac{\epsilon }{1-\epsilon }\right)}{2 \tanh ^{-1}(1-2 \epsilon )} \\
    &= 1.
\end{align}
Again, in the last equality, we have used L'Hôpital's rule. This completes the proof.
\end{proof}
\subsection{Continuity results}

We present two continuity bounds that support proofs in the remainder of the appendix.

The first result allows us to bound any \( g^{\alpha} \) by another quantity \( h^{1-\alpha} \) by choosing \(\alpha\) arbitrarily close to 1. This step is crucial in the proof of Theorem~\ref{thm:main_result_pure}, where we aim to bound 
\( F(\rho_{E|00}, \rho_{E|11})^{2 \alpha} \) by 
\(\left(\frac{\epsilon}{1 - \epsilon}\right)^{1 - \alpha} \). 
That is, we consider the case when \( g = F(\rho_{E|00}, \rho_{E|11})^{2} < 1 \) and 
\( h = \frac{\epsilon}{1 - \epsilon} < 1 \).

\begin{lemma}\label{lemm: bounding_g_by_h}
Let \( h < 1 \) and \( g < 1 \). Then there exists \( \beta \in {(}0,1) \) such that for all \( \alpha \in {[}\beta,1) \),
\begin{equation}
    g^{\alpha} \leq h^{1 - \alpha}.
\end{equation}
\end{lemma}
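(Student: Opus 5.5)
The plan is to reduce the inequality to a linear condition on $\alpha$ by taking logarithms. We assume, as in the intended application $g = F(\rho_{E|00},\rho_{E|11})^2$ and $h = \frac{\epsilon}{1-\epsilon}$, that $0 \le g < 1$ and $0 < h < 1$. The hypothesis $h>0$ is genuinely needed. If $h = 0$, then $h^{1-\alpha} = 0$ for every $\alpha<1$, and the claim would force $g=0$.

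First, the degenerate case $g = 0$. For every $\alpha \in (0,1)$ we have $g^{\alpha} = 0 \le h^{1-\alpha}$. Any $\beta \in (0,1)$, say $\beta = \tfrac12$, therefore works.

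Now suppose $0 < g < 1$ and $0 < h < 1$. Since $x \mapsto \ln x$ is strictly increasing on $(0,\infty)$, the inequality $g^{\alpha} \le h^{1-\alpha}$ is equivalent to $\alpha \ln g \le (1-\alpha)\ln h$. Write $a := -\ln g > 0$ and $b := -\ln h > 0$. The condition becomes $-\alpha a \le -(1-\alpha) b$, that is, $\alpha a \ge (1-\alpha) b$, which is equivalent to
\begin{equation*}
\alpha \ \ge\ \alpha_0 := \frac{b}{a+b}.
\end{equation*}
Because $a,b>0$, we get $\alpha_0 \in (0,1)$. Setting $\beta := \alpha_0$, every $\alpha \in [\beta, 1)$ satisfies the linear condition. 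Reversing the chain of equivalences then gives $g^{\alpha} \le h^{1-\alpha}$, as claimed.

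There is no real obstacle here. The only points requiring care are that both logarithms are strictly negative, so the inequality direction is preserved when rearranging, and the boundary cases $g=0$ and $h=0$ noted above. The explicit threshold
\begin{equation*}
\beta = \frac{\ln h}{\ln g + \ln h}
\end{equation*}
also gives a concrete value of $\alpha^*$ in the proof of Theorem~\ref{thm:main_result_pure}.
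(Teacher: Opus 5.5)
Your proof is correct, but it takes a different route from the paper's.

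The paper sets $f(\alpha) := g^{\alpha} - h^{1-\alpha}$ and computes $f'(\alpha) = g^{\alpha}\ln g + h^{1-\alpha}\ln h < 0$. It then uses $f(0) = 1-h > 0$, $f(1) = g-1 < 0$ and the intermediate value theorem to get a root $\beta \in (0,1)$ with $f \le 0$ on $[\beta,1)$. It proves only that such a $\beta$ exists and never identifies it.

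You instead take logarithms, which turns the inequality into the linear condition $\alpha(a+b) \ge b$. This gives the exact solution set with no calculus, together with the explicit threshold $\beta = \ln h/(\ln g + \ln h)$. That threshold is exactly the unique zero of the paper's $f$. Your explicit value could be used to make $\alpha^*$ and $\alpha'$ concrete in the proof of Theorem~\ref{thm:main_result_pure}. The paper's argument gives only existence, though its structure (a continuous, strictly monotone function changing sign) would survive replacing the power functions by other such functions.

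Your version also states the hypotheses the paper uses silently when it writes $\ln g$ and $\ln h$. The case $g = 0$ is trivial. The condition $h > 0$ is genuinely needed: for $h = 0 < g$ the lemma as literally stated fails. This does not affect the application, where $h = \epsilon/(1-\epsilon) > 0$ for $\epsilon \in (0,1/2)$.
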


\begin{proof} 
Define the function \( f(\alpha) := g^{\alpha} - h^{1 - \alpha} \). We compute the derivative:
\begin{equation}
 f'(\alpha) = \ln(g) g^{\alpha} + \ln(h) h^{1- \alpha}.
\end{equation}
Since \( g < 1 \) and \( h < 1 \), we have \( \ln(g) < 0 \) and \( \ln(h) < 0 \), so both terms are negative and thus \( f'(\alpha) < 0 \). Therefore, \( f(\alpha) \) is strictly decreasing on \( (0,1) \).

Observe that \( f(0) = 1 - h > 0 \) and \( f(1) = g - 1 < 0 \). By the intermediate value theorem, there exists \( \beta \in (0,1) \) such that \( f(\alpha) \leq 0 \) for all \( \alpha \in (\beta,1) \), which proves the claim.
\end{proof}

We now turn to another continuity bound, this time for the sandwiched Rényi quantity \( \tilde{Q}_{\alpha}(\cdot \| \cdot) \). The following result is used in the proof of Theorem~\ref{thm: main_result_general} to upper-bound the error probabilities in terms of the individual sandwiched Rényi quantities \( \tilde{Q}_{\alpha}( \rho_{E|ij}^{\ot n} \| \rho^{\ot n}_{E|kl}) \), rather than the full quantity \( \tilde{Q}_{\alpha}(\omega_{i}^{(n)} \| \rho_{E|kl}^{\ot n}) \).

\begin{theorem}\label{lemm: continuity_of_tilde_Q}
Let \( \tau \in \M{S}(\M{H}) \). Then for \( \alpha \in (0,1) \), we have:
\begin{equation}
\tilde{Q}_{\alpha}(\omega_{i}^{(n)} \| \tau^{\ot n}) 
\leq 
\frac{
    \tilde{Q}_{\alpha}(\rho_{E|ii} \| \tau)^n + \delta_n^{\alpha} \tilde{Q}_{\alpha}(\rho_{E|i\bar{i}} \| \tau)^n
}{
    (1 + \delta_n)^{\alpha - 1}
}
- 
\delta_n \tilde{Q}_{\alpha}(\rho_{E|ii} \| \tau)^n,
\end{equation}
where \( \delta_n \) is a mixing parameter defined by the protocol.
\end{theorem}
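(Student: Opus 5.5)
The plan is to first split the sandwiched quantity of the mixture into a sum of the two pure-branch terms, and then weaken the prefactors to the form stated in Theorem~\ref{lemm: continuity_of_tilde_Q}. Throughout, write $\omega_i^{(n)} = (1-\delta_n)\rho_{E|ii}^{\ot n} + \delta_n \rho_{E|i\bar i}^{\ot n}$. Fix $\alpha\in(0,1)$ and set $\gamma = \frac{1-\alpha}{2\alpha}$. By definition,
\[
\tilde{Q}_\alpha(X\|\tau^{\ot n}) = \tr\big[\big((\tau^{\ot n})^{\gamma} X (\tau^{\ot n})^{\gamma}\big)^{\alpha}\big].
\]
The map $X\mapsto (\tau^{\ot n})^{\gamma} X (\tau^{\ot n})^{\gamma}$ is linear and preserves positivity. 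Hence, with $A := (1-\delta_n)(\tau^{\ot n})^{\gamma}\rho_{E|ii}^{\ot n}(\tau^{\ot n})^{\gamma}$ and $B := \delta_n(\tau^{\ot n})^{\gamma}\rho_{E|i\bar i}^{\ot n}(\tau^{\ot n})^{\gamma}$, both positive semidefinite, we have $\tilde{Q}_\alpha(\omega_i^{(n)}\|\tau^{\ot n}) = \tr[(A+B)^\alpha]$.

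\textbf{Step 1 (subadditivity).} I will invoke the standard trace inequality $\tr[(A+B)^\alpha]\le \tr[A^\alpha]+\tr[B^\alpha]$, valid for $A,B\succeq 0$ and $\alpha\in(0,1)$ (McCarthy / Rotfel'd). Combining it with the homogeneity $\tr[(cX)^\alpha]=c^\alpha\tr[X^\alpha]$ and the multiplicativity of $\tilde Q_\alpha$ under tensor products gives
\[
\tilde{Q}_\alpha(\omega_i^{(n)}\|\tau^{\ot n}) \le (1-\delta_n)^\alpha\, \tilde{Q}_\alpha(\rho_{E|ii}\|\tau)^n + \delta_n^\alpha\, \tilde{Q}_\alpha(\rho_{E|i\bar i}\|\tau)^n .
\]
Multiplicativity holds because $(\tau^{\ot n})^\gamma=(\tau^\gamma)^{\ot n}$ and $\tr[(X^{\ot n})^\alpha] = \tr[X^\alpha]^n$.

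\textbf{Step 2 (elementary scalar inequalities).} Since both $\tilde Q_\alpha$ terms are nonnegative, it suffices to prove two inequalities for $\delta\in[0,1]$:
\[
(1-\delta)^\alpha \le (1+\delta)^{1-\alpha}-\delta, \qquad \delta^\alpha \le \delta^\alpha (1+\delta)^{1-\alpha}.
\]
The second is immediate because $(1+\delta)^{1-\alpha}\ge 1$. For the first, define $\varphi(\delta) = (1+\delta)^{1-\alpha}-(1-\delta)^\alpha-\delta$. Then $\varphi(0)=0$ and
\[
\varphi'(\delta)=(1-\alpha)(1+\delta)^{-\alpha}+\alpha(1-\delta)^{\alpha-1}-1,
\]
so $\varphi'(0)=0$. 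Moreover,
\[
\varphi''(\delta)=\alpha(1-\alpha)\big[(1-\delta)^{\alpha-2}-(1+\delta)^{-\alpha-1}\big]\ge 0,
\]
because $(1-\delta)^{\alpha-2}\ge 1\ge (1+\delta)^{-\alpha-1}$. Thus $\varphi$ is convex with $\varphi(0)=\varphi'(0)=0$, which forces $\varphi\ge 0$ on $[0,1)$; the endpoint $\delta=1$ follows by continuity.

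\textbf{Step 3 (assembly).} Substituting both scalar inequalities into Step 1 yields the right-hand side of the theorem, since $1/(1+\delta_n)^{\alpha-1} = (1+\delta_n)^{1-\alpha}$.

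The main obstacle is Step 1: the trace subadditivity for concave powers must be applied to the sandwiched operators, not to the states themselves. I will therefore check carefully that this inequality is the one available, citing it as a known matrix inequality, rather than attempting joint concavity, which would give the wrong direction.
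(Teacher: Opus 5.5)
Your proof is correct, but it takes a different route from the paper's. The paper follows the continuity-bound template of Bluhm et al. It starts from the identity $\omega_1^{(n)} + \delta_n\rho_{E|11}^{\otimes n} = \rho_{E|11}^{\otimes n} + \delta_n\rho_{E|10}^{\otimes n}$. It lower-bounds $\tilde{Q}_\alpha$ of the left-hand side by concavity, after pulling out $(1+\delta_n)^\alpha$ by homogeneity. It upper-bounds the right-hand side by the same trace subadditivity you invoke in Step 1, and then rearranges. The prefactor $(1+\delta_n)^{1-\alpha}$ and the $-\delta_n\tilde{Q}_\alpha(\rho_{E|ii}\|\tau)^n$ correction are artefacts of that rearrangement.

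You instead apply subadditivity directly to the mixture. This gives the sharper intermediate bound
$(1-\delta_n)^\alpha\tilde{Q}_\alpha(\rho_{E|ii}\|\tau)^n + \delta_n^\alpha\tilde{Q}_\alpha(\rho_{E|i\bar{i}}\|\tau)^n$.
You then recover the stated form through the scalar inequality $(1-\delta)^\alpha \le (1+\delta)^{1-\alpha}-\delta$, and your convexity argument for it is sound.

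What your route buys:
\begin{itemize}
\item It uses a single matrix inequality and no concavity at all.
\item It thereby sidesteps the paper's appeal to \emph{joint} concavity of $\tilde{Q}_\alpha$, which holds only for $\alpha\in[1/2,1)$. Concavity in the first argument with $\tau$ fixed, which is all the paper actually needs, does hold on all of $(0,1)$.
\item It shows the stated bound is not tight. Your intermediate inequality could be fed directly into the proof of Theorem~\ref{thm: main_result_general} for slightly cleaner constants: the $A_1$ prefactor would become $(1-\delta_n)$ instead of $(1-\delta_n^2)^{1-\alpha}$.
\end{itemize}

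The paper's identity trick is the general template for continuity bounds between arbitrary nearby states, where one only knows $\rho+\epsilon\Delta^- = \sigma+\epsilon\Delta^+$. Since $\omega_i^{(n)}$ is already an explicit mixture containing $\rho_{E|ii}^{\otimes n}$, that generality is not needed here.
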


\begin{proof}
We prove the case for $i = 1$, the proof for $i = 0$ follows similarly.  We follow the same strategy as \cite{bluhm2024unified}. We start from the observation that 
\begin{eqnarray}
\omega_1^{(n)} + \delta_{n} \rho_{E|11}^{\ot n} =  \rho_{E|11}^{\ot n} + \delta_{n}\rho_{E|10}^{\ot n}.
\end{eqnarray}
Now consider the following: 
\begin{eqnarray}
\tilde{Q}_{\alpha}(\omega_{1}^{(n) } + \delta_{n}  \rho_{11}  || \tau^{\ot n}) &=& (1 + \delta_{n})^{\alpha} \tilde{Q}_{\alpha}\left( \frac{1}{1 + \delta_{n}}\omega^{(n)}_{1}  + \frac{\delta_{n}}{1 + \delta_{n}}  \rho_{E|11}^{\ot n} || \frac{1}{1 + \delta_{n}} \tau^{\ot n} + \frac{\delta_{n}}{1 + \delta_{n}} \tau^{\ot n} \right)  \\ 
&\geq&  (1 + \delta_{n})^{\alpha -1} \tilde{Q}_{\alpha}(\omega_{1}^{(n)} || \tau^{\ot n}) + (1 + \delta_{n})^{\alpha -1} \delta_{n} \tilde{Q}_{\alpha}(\rho_{E| 11}^{\ot n} || \tau^{\ot n}) \\ 
&=& (1 + \delta_{n})^{\alpha -1} \tilde{Q}_{ \alpha}(\omega_1^{(n)} || \tau^{\ot n}) + (1 + \delta_{n})^{\alpha- 1} \delta_{n} \tilde{Q}_{\alpha}(\rho_{E| 11} || \tau)^{n}   
\end{eqnarray}
where the inequality above follows from the joint concavity of $\tilde{Q}_{\alpha}$ for $\alpha < 1$.  \\ 
We also get the following upper bound using the super-additivity of $\tilde{Q}_{\alpha}(.||.)$:
\begin{eqnarray}
\tilde{Q}_{\alpha}(\rho_{E|11}^{\ot n} +  \delta_{n} \rho_{E|10}^{\ot n} || \tau^{\ot n} ) \leq \tilde{Q}_{\alpha}(\rho_{E|11}^{\ot n} || \tau^{\ot n}) + \delta_{n}^{\alpha} \tilde{Q}_{\alpha}(\rho_{E| 10}^{\ot n} || \tau^{\ot n}) = \tilde{Q}_{\alpha}(\rho_{E|11} || \tau)^{n} + \delta_n^\alpha\tilde{Q }_{\alpha}(\rho_{E|10} || \tau)^{n}.     
\end{eqnarray}
Combining the two inequalities we get 
\begin{eqnarray}
\tilde{Q}_{\alpha}(\omega_{1}^{(n)} || \tau^{\ot n}) \leq \frac{\tilde{Q}_{\alpha}(\rho_{E|11} || \tau)^{n} + \delta^{\alpha} \tilde{Q }_{\alpha}(\rho_{E|10} || \tau )^{n} }{(1 + \delta_{n})^{\alpha - 1}} - \delta_{n} \tilde{Q}_{\alpha}(\rho_{E|11} || \tau )^{n}
\end{eqnarray}
\end{proof}

\subsection{Upper bounding the error probability}

In this section, we prove an upper bound on the error probability of discriminating a two-state ensemble  
\[
\M{E}= \{ p_{X}(0) \omega_0 + p_{X}(1) \omega_1 , p_{X}(2) \omega_2 \}
\]  
in terms of distinguishing the three-state ensemble  
\[
\tilde{\M{E}} = \{ p_{X}(0) \omega_0 , p_{X}(1) \omega_1 , p_{X}(2) \omega_2 \}.
\]  
The proof is fairly straightforward but serves as a crucial component of our main argument.

\begin{lemma}\label{lemm: min-entropy_monotonicity}
Let $\rho =  p_{0} \proj{0} \ot {\omega}_{0} + p_1 \proj{1} \ot {\omega}_1 + p_2 \proj{2} \ot {\omega}_{2}$ and $\sigma =  \proj{0} \ot (p_{0} {\omega}_{0} + p_2 {\omega}_2)  +  p_1 \proj{1} \ot {\omega}_1 $. Then 
\begin{eqnarray}
p_{\error}(X|E)_{\sigma} \leq p_{\error}(X |E)_{\rho}. 
\end{eqnarray}
\end{lemma}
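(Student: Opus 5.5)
The plan is to argue directly from the operational definition of the error probability: any measurement that discriminates the coarse-grained ensemble encoded in $\sigma$ can be simulated by a measurement on the fine-grained ensemble encoded in $\rho$, after which the outcomes are merged. Write the guessing probability of $\sigma$ as
\[
p_{\mathrm{guess}}(X|E)_{\sigma} = \sup_{\{\Gamma_0,\Gamma_1\}} \Big( \tr\big[\Gamma_0 (p_0\omega_0 + p_2\omega_2)\big] + p_1 \tr\big[\Gamma_1 \omega_1\big] \Big),
\]
where the supremum runs over two-outcome POVMs. Similarly, write $p_{\mathrm{guess}}(X|E)_{\rho}$ as a supremum over three-outcome POVMs $\{\Lambda_0,\Lambda_1,\Lambda_2\}$ of $\sum_{c} p_c \tr[\Lambda_c \omega_c]$. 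The claim $p_{\error}(X|E)_\sigma \leq p_{\error}(X|E)_\rho$ is then equivalent to $p_{\mathrm{guess}}(X|E)_\sigma \geq p_{\mathrm{guess}}(X|E)_\rho$, which is the direction to establish.

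The key step is the coarse-graining map. Take an optimal (or $\varepsilon$-optimal) POVM $\{\Lambda_0,\Lambda_1,\Lambda_2\}$ for $\rho$ and define $\Gamma_0 := \Lambda_0 + \Lambda_2$ and $\Gamma_1 := \Lambda_1$. This is a valid POVM. Using only positivity, we get
\[
\tr[\Gamma_0(p_0\omega_0+p_2\omega_2)] \geq p_0\tr[\Lambda_0\omega_0] + p_2\tr[\Lambda_2\omega_2],
\]
since we drop the nonnegative cross terms $p_0\tr[\Lambda_2\omega_0]$ and $p_2\tr[\Lambda_0\omega_2]$. The second term is unchanged, $p_1\tr[\Gamma_1\omega_1]=p_1\tr[\Lambda_1\omega_1]$. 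Summing, the success probability of $\Gamma$ on $\sigma$ is at least that of $\Lambda$ on $\rho$. Taking the supremum then gives the inequality of guessing probabilities, hence of error probabilities.

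As an alternative, one could phrase this through the min-entropy: $\sigma$ is obtained from $\rho$ by the classical channel on $X$ that maps $2\mapsto 0$. Data processing of $H_{\min}$ under a classical map on the conditioned register does not directly give monotonicity in this direction, however, so I would stick with the explicit POVM argument.

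There is no real obstacle; the only point needing care is bookkeeping of the normalization. In the statement, $\sigma$ carries the label $0$ with the unnormalized state $p_0\omega_0+p_2\omega_2$ of weight $p_0+p_2$, so the guessing probability formula above applies verbatim with subnormalized ensemble members and no renormalization is needed. One should also note that the supremum is attained in finite dimensions, so $\varepsilon$-optimality is unnecessary. In the application to $\rho'(s)$ of \eqref{eqn: three_state_discrimination}, the labels are permuted: the merged states are $\rho_{E|00}^{\otimes n}$ and $\rho_{E|01}^{\otimes n}$, which together give $s\,\omega_0^{(n)}$. The identical argument, relabeling which two outcomes are merged, then yields $p_{\error}(C|E)_{\bar\rho(s)}\le p_{\error}(C|E)_{\rho'(s)}$.
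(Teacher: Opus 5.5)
Your proposal is correct and follows essentially the same route as the paper: take an optimal three-outcome POVM for $\rho$, merge outcomes $0$ and $2$ into $\Lambda_0+\Lambda_2$, and use positivity to drop the cross terms, which gives $p_{\mathrm{guess}}(X|E)_\sigma \geq p_{\mathrm{guess}}(X|E)_\rho$. Your bookkeeping (keeping the weight $p_1$ on the $\omega_1$ term, and relabeling which outcomes are merged when applying the lemma to $\rho'(s)$) is, if anything, more careful than the written proof.
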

\begin{proof} 
It suffices to show that $p_{\mathrm{guess}}(X|E)_{\sigma} \geq p_{\mathrm{guess}}(X|E)_{\rho}$. To see this consider the POVM $\{ M_0 ,  M_1 , M_2 \}$ that achieves optimal guessing probability $p_{\mathrm{guess}}(X|E)_{\rho}$: 
\begin{eqnarray}
p_{\mathrm{guess}}(X|E)_{\rho} &=&  \sum_{i} p_i \tr(M_i {\omega}_{i}) = \tr(M_1 {\omega}_1) + \tr(M_0 p_{0} {\omega}_{0} + M_2  p_2  {\omega}_{2}) \\ 
                               &\leq& \tr(M_1 {\omega}_1) + \tr \left( \left( M_0 + M_2 \right)  (p_0 {\omega}_0 + p_{2} {\omega}_{2}) \right) \\ 
                               &\leq& p_{\mathrm{guess}}(X|E)_{\sigma}
\end{eqnarray}
where the final inequality holds because $\{ M_{0} + M_{2}, M_1 \} $ is a valid POVM. 
\end{proof}
\bibliography{references}

\end{document}